\let\phi\varphi
\xpatchcmd{\thmt@restatable}{\csname #2\@xa\endcsname\ifx\@nx#1\@nx\else[{#1}]\fi}%
{\IfAppendix{\csname #2\@xa\endcsname}{\csname #2\@xa\endcsname[{#1}]}}
\newcolumntype{C}{ >{\centering\arraybackslash} m{7cm} }
\newcolumntype{D}{ >{\centering\arraybackslash} m{2cm} }
\newcolumntype{E}{ >{\centering\arraybackslash} m{1.5cm} }
\newcolumntype{F}{ >{\centering\arraybackslash} m{8.3cm} }
\newcolumntype{G}{ >{\centering\arraybackslash} m{8cm} }
\newcolumntype{H}{ >{\centering\arraybackslash} m{2.5cm} }
\newcolumntype{J}{ >{\centering\arraybackslash} m{3cm} }
\newcolumntype{K}{ >{\centering\arraybackslash} m{6cm} }
\newcolumntype{L}{ >{\centering\arraybackslash} m{1.75cm} }
\newcolumntype{M}{ >{\centering\arraybackslash} m{5cm} }
\newtheorem{theorem}{Theorem}[section]
\newtheorem{lemma}[theorem]{Lemma}
\newtheorem{corollary}[theorem]{Corollary}
\newtheorem{proposition}[theorem]{Proposition}
\newtheorem{definition}[theorem]{Definition}
\newtheorem{condition}[theorem]{Condition}
\Crefname{condition}{Condition}{Conditions}
\theoremstyle{remark}
\numberwithin{equation}{section}
\newcommand{\vphi}{\vec{\phi}}
\newcommand{\hrho}{\hat{\rho}}
\newcommand{\R}{\mathbb{R}}
\newcommand{\RR}{\mathbb{R}}
\newcommand{\cC}{\mathcal{C}}
\newcommand{\II}{\mathcal{I}}
\newcommand{\sign}{\operatorname*{sign}}
\newcommand{\cfg}[1]{\textcolor{blue}{#1}}
\newcommand{\bdb}[1]{#1} 
\newcommand{\MAT}[1]{\begin{bmatrix} #1 \end{bmatrix}}
\newcommand{\abs}[1]{\left| #1 \right|}
\newcommand{\keys}[1]{\left\{ #1 \right\}}
\newcommand{\brac}[1]{\left( #1 \right) }
\newcommand{\ml}[1]{\mathcal{ #1 } }
\newcommand{\op}[1]{ \operatorname{#1} }
\newcommand{\normInf}[1]{\left\| #1 \right\| _{\infty}}
\newcommand{\normTwo}[1]{\left\| #1 \right\| _{2}}
\newcommand{\normOne}[1]{\left\| #1 \right\| _{1}}
\newcommand{\normTV}[1]{\left\| #1 \right\| _{\op{TV}}}
\newcommand{\PROD}[2]{\left \langle #1, #2\right \rangle}
\newcommand{\diff}[1]{ \, \text{d} #1 }
\newcommand{\diffbrac}[1]{ (\text{d} #1) }
\title{Sparse Recovery Beyond Compressed Sensing:\\ Separable Nonlinear Inverse Problems}
\author{Brett Bernstein\thanks{Courant Institute of Mathematical
    Sciences, New York University}\ \thanks{This paper was
  presented in part at the 2018 SIAM Annual Conference.}\hspace{0.4cm}
   Sheng Liu  \thanks{Center for Data Science,
    New York University} \hspace{0.4cm} Chrysa Papadaniil
   \thanks{Center for Neural Science, New York University} \hspace{0.4cm} Carlos
  Fernandez-Granda\footnotemark[1] \footnotemark[3] }
\date{May 2019}
\begin{document}

\maketitle

\vspace{-0.3in}

\begin{abstract}
Extracting information from nonlinear measurements is a fundamental challenge in data analysis. In this work, we consider separable inverse problems, where the data are modeled as a linear combination of functions that depend nonlinearly on certain parameters of interest. These parameters may represent neuronal activity in a human brain, frequencies of electromagnetic waves, fluorescent probes in a cell, or magnetic relaxation times of biological tissues. Separable nonlinear inverse problems can be reformulated as underdetermined sparse-recovery problems, and solved using convex programming. This approach has had empirical success in a variety of domains, from geophysics to medical imaging, but lacks a theoretical justification. In particular, compressed-sensing theory does not apply, because the measurement operators are deterministic and violate incoherence conditions such as the restricted-isometry property. Our main contribution is a theory for sparse recovery adapted to deterministic settings. We show that convex programming succeeds in recovering the parameters of interest, as long as their values are sufficiently distinct with respect to the correlation structure of the measurement operator. The theoretical results are illustrated through numerical experiments for two applications: heat-source localization and estimation of brain activity from electroencephalography data.  
\end{abstract}

{\bf Keywords.} Sparse recovery, convex programming, incoherence, correlated measurements, dual certificates, nonlinear inverse problems, source localization.

\section{Introduction}

\subsection{Separable Nonlinear Inverse Problems}
\label{sec:snl}
The inverse problem of extracting information from nonlinear
measurements is a fundamental challenge in many applied domains,
including geophysics, microscopy, astronomy, medical imaging, and
signal processing. In this work, we focus on \emph{separable nonlinear} (SNL)
problems~\cite{golub1973differentiation,golub2003separable}, where the
data are modeled as samples from a linear combination of functions
that depend nonlinearly on certain quantities of interest. Depending
on the application, these quantities may represent neuronal activity
in a human brain, oscillation frequencies of electromagnetic waves,
locations of fluorescent probes in a cell, magnetic-resonance
relaxation times of biological tissues, or positions of celestial
objects in the sky.

\begin{figure}[tp]
  \centering
  \begin{tabular}{@{\vspace{0.75cm}}L@{\quad}CC}
    & &
    $y=\vphi(\theta_1)+2\vphi(\theta_2)+0.5\vphi(\theta_3)$\\ {\footnotesize
      Deconvolution} &
    \includegraphics{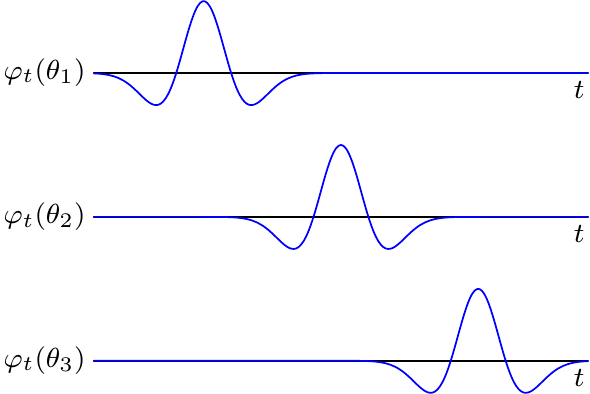}&\includegraphics{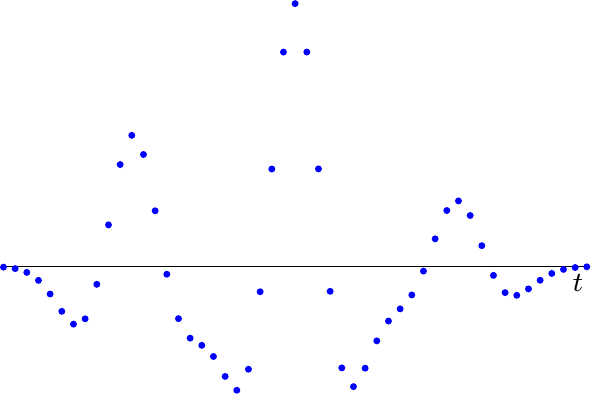}\\ {\footnotesize
      Super-resolution} &
    \includegraphics{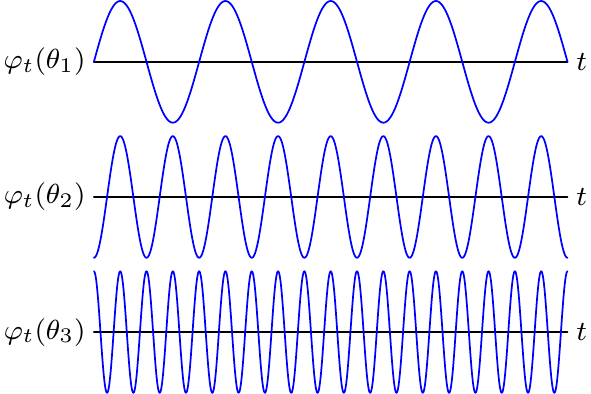}&\includegraphics{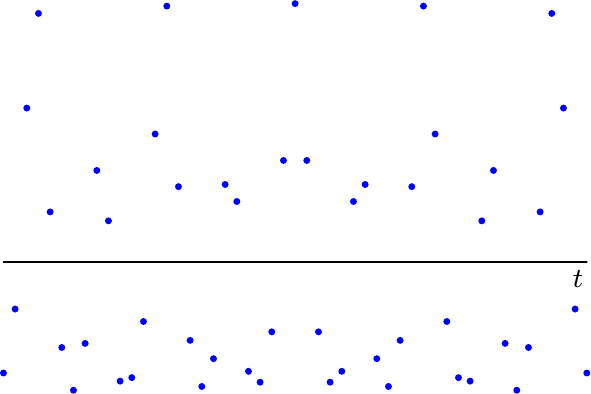}\\ {\footnotesize
      Heat-Source Localization} &
    \includegraphics{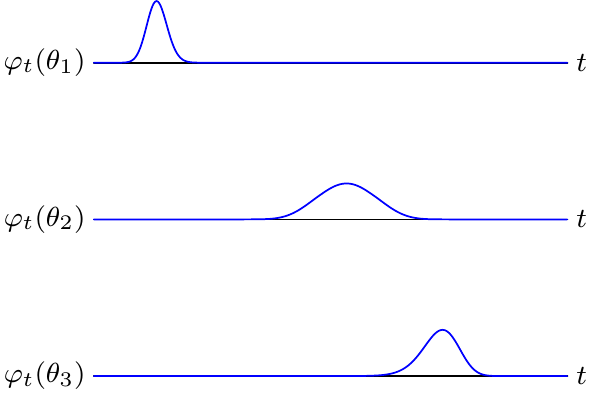}&\includegraphics{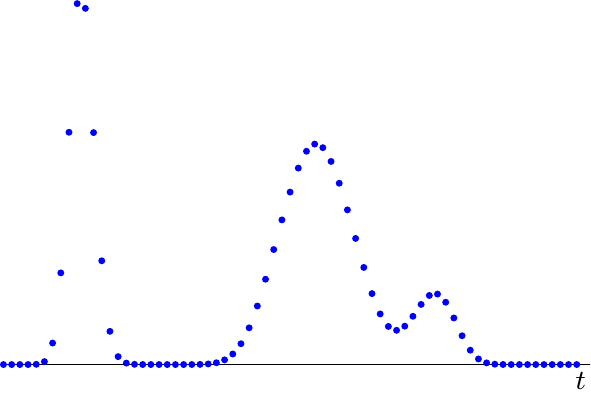}
  \end{tabular}
  \caption{Illustration of three inverse problems that can be modeled as SNL problems: deconvolution in reflection
    seismology (the convolution kernel is a Ricker wavelet~\cite{ricker1953form}), super-resolution of spectral lines, and
    heat-source localization.  The left column shows the
    continuous measurements $\phi_t$ for three parameters
    $\theta_1,\theta_2,\theta_3$.  The right column shows the data
    samples corresponding to an example where the coefficients are set to $c:=(1,2,0.5)$.  For
    super-resolution, only the real part of the data is shown. }
  \label{fig:applications}
\end{figure}

\begin{figure}[tp]
  \centering 
\begin{tabular}{ >{\centering\arraybackslash}m{0.4\linewidth} >{\centering\arraybackslash}m{0.4\linewidth} >{\centering\arraybackslash}m{0.05\linewidth}  }
    Parameter space \vspace{0.2cm} &
    $\vphi(\theta_1)$ \vspace{0.2cm}\\
    \vspace{0.5cm} \includegraphics[scale=0.5]{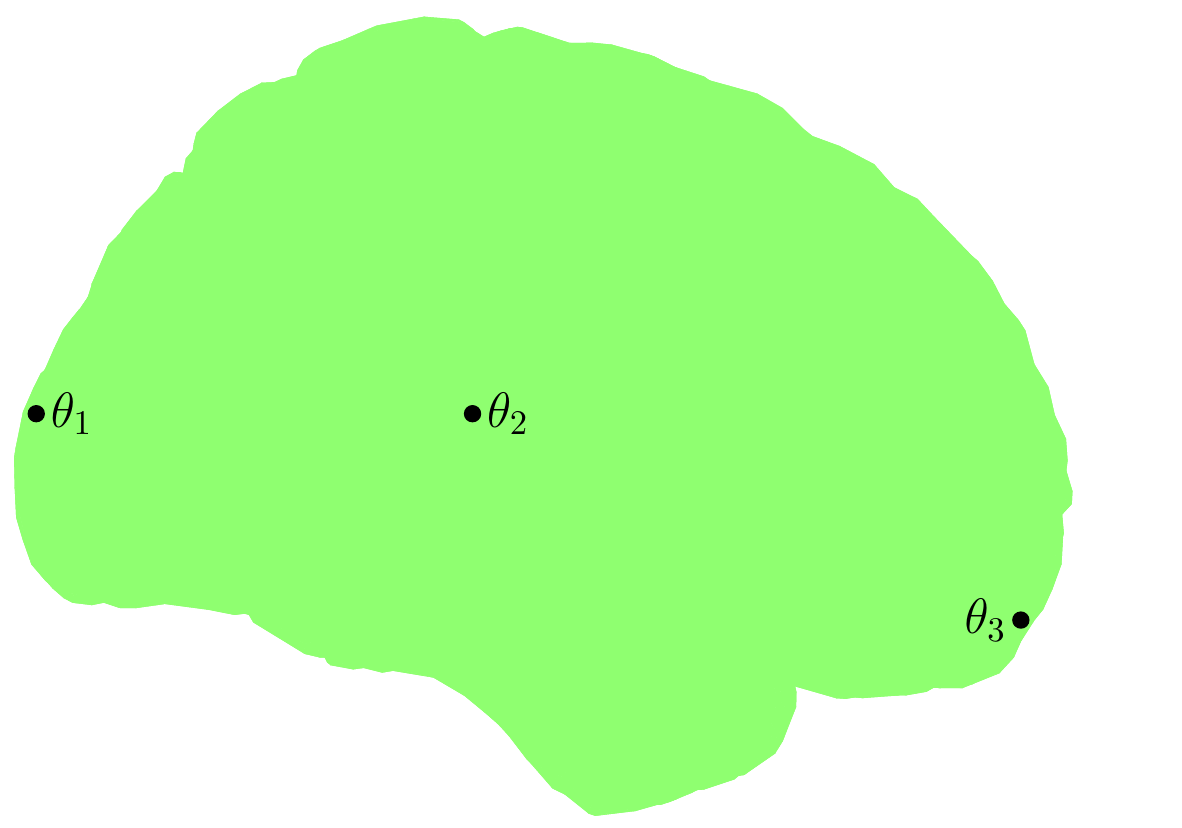}&
    \includegraphics[scale=0.3]{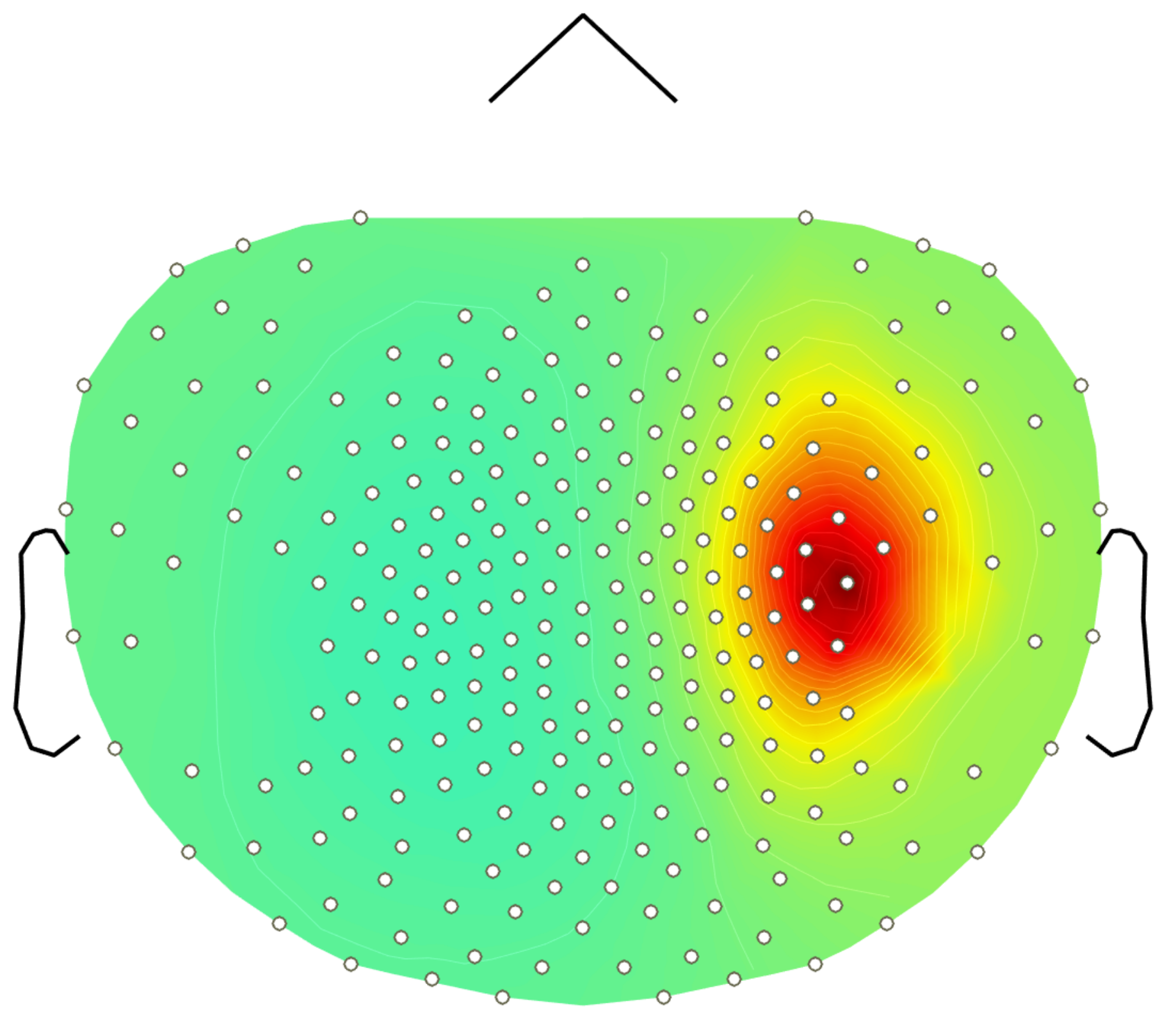} &
    \\ \includegraphics[scale=0.3]{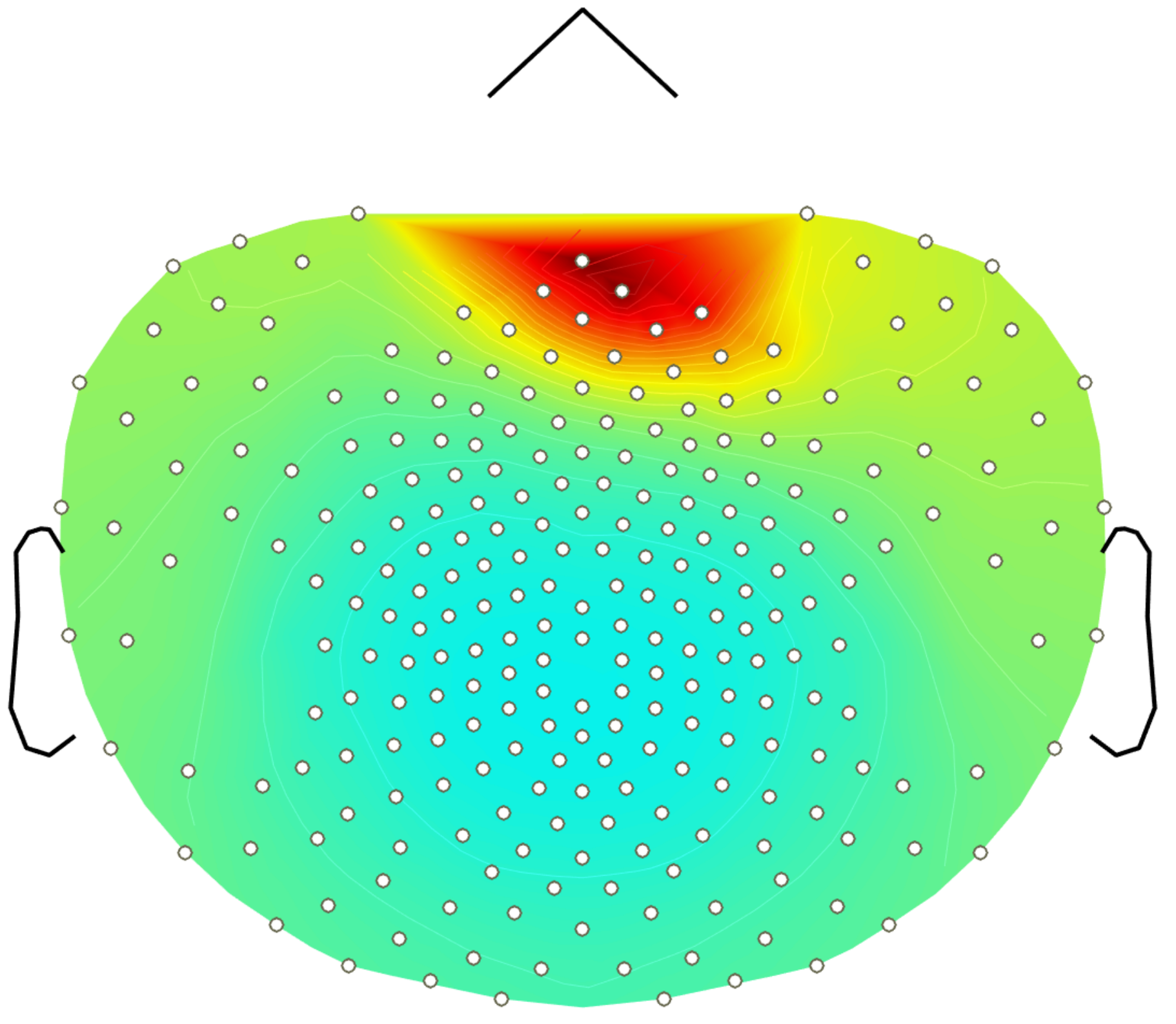} \vspace{0.2cm}
    &\includegraphics[scale=0.3]{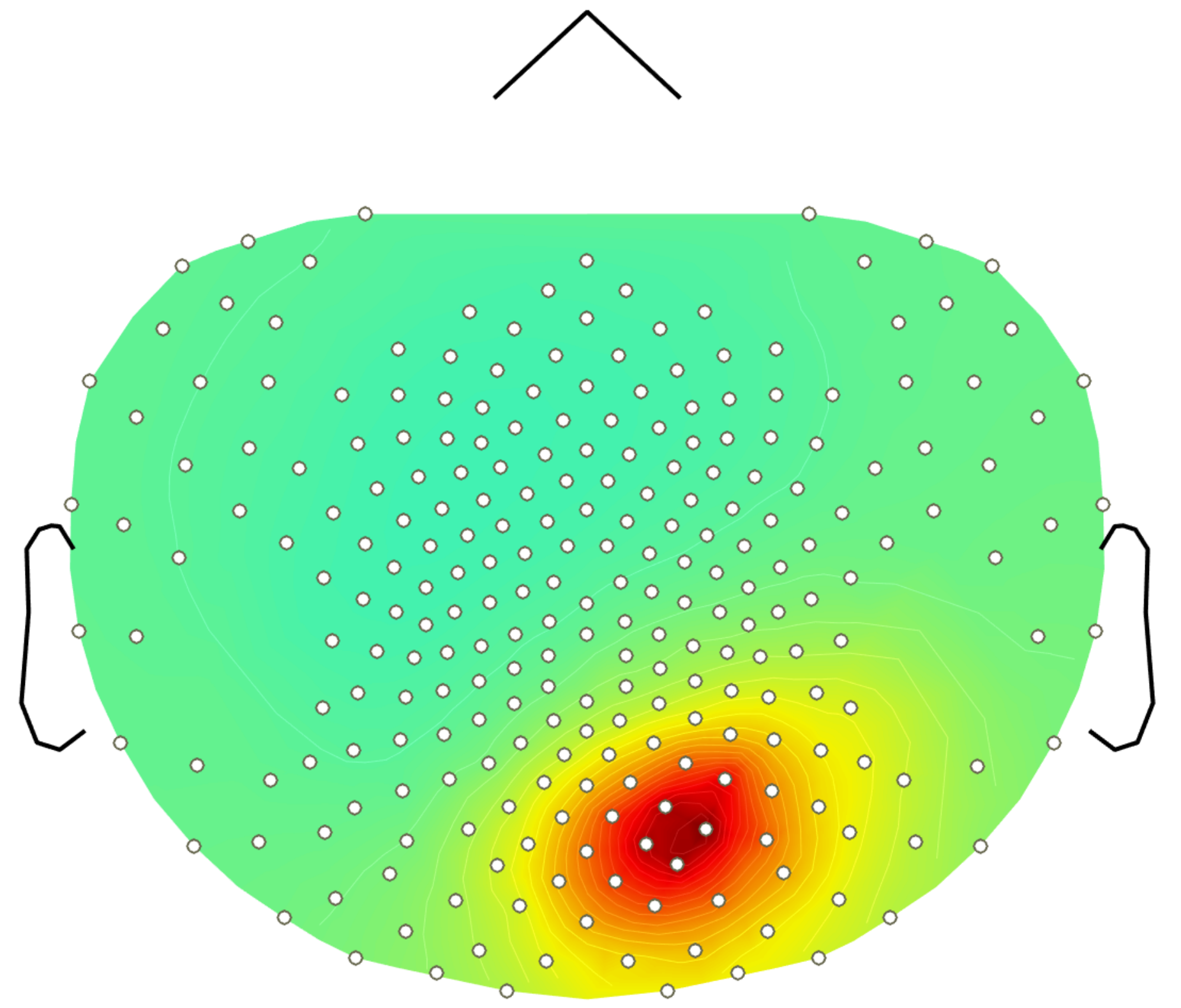} \vspace{0.2cm}&
    \includegraphics[scale=0.6]{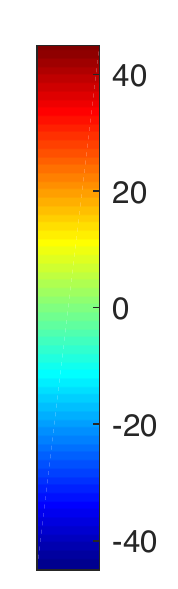}
    \\ $\vphi(\theta_2)$ & $\vphi(\theta_3)$
  \end{tabular}
  \caption{Localization of brain-activity sources from EEG data is an SNL problem. The image in the top left corner shows the position of three sources in a human brain, situated in the occipital ($\theta_1$), temporal ($\theta_2$) and frontal ($\theta_3$) lobes. The remaining three images show the EEG data corresponding to each of these sources, obtained from 256 sensors located on the surface of the head (see Section~\ref{sec:eeg} for more details).}
  \label{fig:eegatoms}
\end{figure}

Mathematically, the goal in an SNL problem is to estimate $k$
parameters $\theta_1$, \ldots, $\theta_k \in \R^{p}$ from samples of a
function
\begin{align}
  \label{eq:snl_cts}
  f(t) := \sum_{i=1}^k c_i \phi_t(\theta_i),
\end{align}
where $c_1$, \ldots, $c_k \in \R$ are unknown coefficients. The
dependence between each component and the corresponding parameter at a
particular value of $t$ is governed by a nonlinear map $\phi_t: \R^p
\rightarrow \R $. For simplicity of exposition, we assume that $t$ is
one-dimensional and $f$ is a real-valued function, but the framework
can be directly extended to multidimensional and complex-valued
measurements. The data are samples of $f$ at $n$ locations $s_1$,
\ldots, $s_n \in \R$
\begin{align}
  \label{eq:snl_model}
  y & := \MAT{ f\brac{s_1} \\ \vdots \\ f\brac{s_n} } = \sum_{i=1}^k
  c_i \vec{\phi}(\theta_i),
\end{align}
where $\vec{\phi}(\theta_i)_j:=\phi_{s_j}(\theta_i)$, $1\leq j \leq
n$. Each $\vec{\phi}(\theta_i) \in \R^n$ is a feature vector
associated to one of the parameters $\theta_i$. Without loss of generality, 
we assume that the feature vectors are normalized, i.e. $\|\vphi(\theta)\|_2=1$ for all $\theta\in\RR^p$. 
The following examples
illustrate the importance of SNL problems in a range of
applications.

\begin{itemize}
\item \emph{Deconvolution of point sources}: Deconvolution consists of
  estimating a signal from samples of its convolution with a fixed
  kernel $K$. When the signal is modeled as a superposition of point
  sources or spikes, representing fluorescent probes in
  microscopy~\cite{palm,fpalm}, celestial bodies in
  astronomy~\cite{astronomy_puschmann} or interfaces between
  geological layers in seismography~\cite{sheriff1995exploration},
  this is an SNL problem where $\theta_1$, \ldots, $\theta_k$ are the
  locations of the spikes. In that case, $\phi_t\brac{\theta}$ is a
  shifted copy of the convolution kernel $K \brac{ t - \theta}$, as
  illustrated in the top row of Figure~\ref{fig:applications}.
\item \emph{Spectral super-resolution}: Super-resolving the spectrum
  of a multisinusoidal signal from samples taken over a short period
  of time is an important problem in communications, radar, and
  signal processing~\cite{Stoica:2005wf}. This is an SNL problem where
  $\phi_t\brac{\theta}$ is a complex exponential $ \exp \brac{ - i 2
    \pi \theta t}$ with frequency $\theta$ (see the second row of
  Figure~\ref{fig:applications}).
\item \emph{Heat-source localization}: Finding the position of several
  heat sources in a material with known conductivity from temperature
  measurements is an SNL problem where $\phi_t\brac{\theta}$ is the
  Green's function of the heat equation parametrized by the location
  $\theta$ of a particular heat source~\cite{li2014heat}. The bottom
  row of Figure~\ref{fig:applications} shows an example (see \Cref{sec:numexact} for more details).
\item \emph{Estimation of neural activity}: Electroencephalography measurements of the electric potential field on the surface of
  the head can be used to detect regions of
  focalized activity in the brain~\cite{niedermeyer2005eeg}. The data are well approximated by an SNL model where the parameters are the locations of these regions~\cite{michel2004eeg}. The function $\phi_t\brac{\theta}$ represents the potential at a specific location $t$ on the scalp, which originates from neural activity at position $\theta$ in the brain. This function can be computed by solving the Poisson differential equation taking into account the
  geometry and electric properties of the
  head~\cite{nunez2006electric}. Figure~\ref{fig:eegatoms} shows an example. See Section~\ref{sec:eeg} for more details. 
  \item \emph{Quantitative magnetic-resonance imaging}: The magnetic-resonance relaxation times $T_1$ and $T_2$ of biological tissues govern the local fluctuations of the magnetic field measured by MR imaging systems~\cite{nishimura1996principles}. MR fingerprinting is a technique to estimate these parameters by fitting an SNL model where each component corresponds to a different tissue~\cite{ma2013magnetic,McGivney2017,mcmrf}. In this case, the parameter $\theta \in \R^2$ encodes the values of $T_1$ and $T_2$ and the function $\phi_t\brac{\theta}$ can be computed by solving the Bloch differential equations~\cite{bloch1946nuclear}.
\end{itemize}


\subsection{Reformulation as a Sparse-Recovery Problem}
\label{sec:sparse_recovery}
A natural approach to estimate the parameters of an SNL model is to
solve the nonlinear least-squares problem,
\begin{align}
\label{pr:varpro}
\underset{\tilde{\theta}_1,\ldots, \tilde{\theta}_k \in \R^p,\;
  \tilde{c} \in \R^{k}}{\op{minimize}} \quad \normTwo{ y -
  \sum_{i=1}^k \tilde{c}_i \vec{\phi}(\tilde{\theta}_i)}^2.
\end{align}
Unfortunately, the resulting cost function is typically nonconvex and
has local minima, as illustrated by the simple example in
Figure~\ref{fig:varproj}. Consequently, local-descent methods do not
necessarily recover the true parameters, even in the absence of noise,
and global optimization becomes intractable unless $k$ is very
small. 

Alternatively, we can reformulate the SNL problem as a sparse-recovery problem and leverage $\ell_1$-norm minimization to solve it. This approach was pioneered in the 1970s by geophysicists working on spike deconvolution in the context of reflection seismology~\cite{taylor1979deconvolution,claerbout,levy,santosa,debeye1990lp}. Since then, it has been applied to many SNL problems such as
acoustic sensing~\cite{zhao2011localization,bertin2015compressive},
radar~\cite{potter2010sparsity,tang2011aliasing},
electroencephalography (EEG)~\cite{silva2004evaluation,xu2007lp},
positron emission tomography (PET)
\cite{gunn2002positron,reader2007fully,heins2014locally},
direction of arrival \cite{Malioutov:2005jw,borcea2015resolution},
quantitative magnetic resonance imaging~
\cite{McGivney2017,mcmrf}, and source localization~\cite{li2014heat,mamonov2013point,pieper2018inverse}. Our goal is to provide a theory of sparse recovery via convex optimization explaining the empirical success of this approach.

\begin{figure}[tp]
  \centering \includegraphics{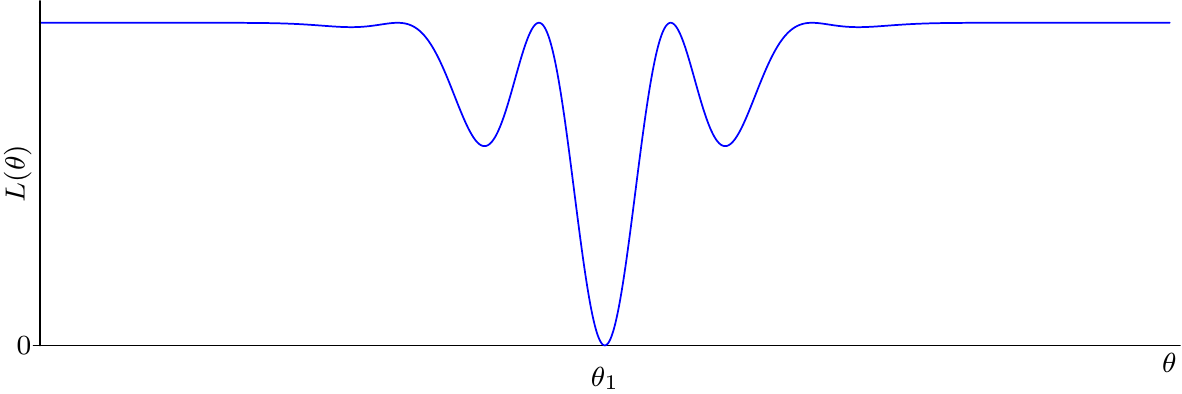}
  \caption{Nonlinear least-squares cost functions associated to deterministic SNL problems often have non-optimal local minima. The graph depicts the landscape of the nonlinear least squares cost function $L(\theta)=\min_{\tilde{c}\in\RR}\normTwo{\vphi(\theta_1)-\tilde{c}\vphi(\theta)}^2$ associated to a deconvolution problem, where the data is generated by convolving a Ricker wavelet~\cite{ricker1953form} with a single spike located $\theta_1$. In addition to the global minimum at $\theta_1$ there are several
    spurious local minima. 
  }
  \label{fig:varproj}
\end{figure}

Let us represent the parameters $\theta_1$, \ldots, $\theta_k \in
\R^{p}$ of an SNL model as a superposition of Dirac measures or
\emph{spikes} in $\R^{p}$, interpreted as a $p$-dimensional
\emph{parameter space}, 
\eqref{eq:snl_model},
\begin{align}
\label{eq:x_def}
\mu := \sum_{i=1}^k c_i \delta_{\theta_i},
\end{align}
where $\delta_{\theta_i}$ denotes a Dirac measure supported at
$\theta_i$. Intuitively, the atomic measure $\mu$ is a signal that
encodes the parameters of interest and their corresponding
coefficients. The data described by \Cref{eq:snl_model} can
now be expressed as
\begin{align}
\label{eq:y_model}
y = \int_{\R^p} \vec{\phi}(\theta) \mu \diffbrac{\theta}.
\end{align}
The SNL problem is equivalent to recovering $\mu$ from these linear
measurements. The price we pay for linearizing is that the
linear inverse problem is extremely underdetermined: $y$ has dimension
$n$, but $\mu$ lives in a continuous space of infinite
dimensionality! To solve the problem, we need to exploit the assumption that
the data only depends on a small number of parameters or,
equivalently, that $\mu$ is \emph{sparse}.

For an SNL problem to be well posed, $\theta_1$, \ldots, $\theta_k$
should be the only set of $k$ or less parameters such that
\Cref{eq:snl_model} holds. In that case, $\mu$ is the solution to the
sparse-recovery problem
\begin{equation}
\label{eq:min_cardinality}
  \begin{aligned}
    \underset{\tilde{\mu}}{\op{minimize}} \quad&
    \abs{\op{support}\brac{ \tilde{\mu} }} \\ \text{subject to} \quad&
    \int_{\R^p} \vec{\phi}(\theta) \tilde{\mu} \diffbrac{\theta} = y,
  \end{aligned}
\end{equation}
where minimization occurs over the set of measures in $\R^p$. The
cardinality of the support of an atomic measure is a nonconvex
function, which is notoriously challenging to minimize.
A fundamental insight underlying many approaches to sparse estimation in high-dimensional statistics and
signal processing is that one can bypass this difficulty by replacing the nonconvex function with a convex
counterpart. In particular, minimizing the $\ell_1$ norm instead of the cardinality function has proven to be very effective in many applications. 

In order to analyze the application of $\ell_1$-norm minimization
  to SNL problems, we consider a continuous setting, where the
  optimization variable is a measure supported on a continuous
  domain. The goal is to obtain an analysis that is valid for
  arbitrarily fine discretizations of the domain. This is important
  because, as we will see below, a fine discretization
  results in a highly-correlated linear operator, which violates the
  usual assumptions made in the literature on sparse recovery.
 
In the case of measures supported on a continuous domain, the
continuous counterpart of the $\ell_1$ norm is the total-variation
(TV) norm~\cite{rudin1987real,folland2013real}\footnote{Not to be
  confused with the total variation of a piecewise-constant function
  used in image processing.}.
Indeed, the TV norm of the atomic measure
$\mu$ in \Cref{eq:x_def} equals the $\ell_1$ norm of its
coefficients $\normOne{c}$. Just as the $\ell_1$ norm is the dual norm
of the $\ell_{\infty}$ norm, the TV norm is defined by
\begin{equation}
  \normTV{\mu} := \sup_{f\in\cC,~\normInf{f}\leq 1} \left|\int f
  \diff{\mu} \right|,
\end{equation}
where the supremum is taken over all continuous functions in the unit
$\ml{L}_{\infty}$-norm ball. Replacing the cardinality function by
this sparsity-promoting norm yields the following convex program
\begin{equation}
\label{pr:min_TV}
  \begin{aligned}
    \underset{\tilde{\mu}}{\op{minimize}} \quad& \normTV{ \tilde{\mu}
    } \\ \text{subject to} \quad& \int_{\R^p} \vec{\phi}(\theta)
    \tilde{\mu} \diffbrac{\theta} = y.
  \end{aligned}
\end{equation}
The goal of this paper is to understand when the solution to
  Problem~\eqref{pr:min_TV} exactly recovers the parameters
  of an SNL model.

\subsection{Compressed Sensing}
\label{sec:cs}



\Cref{sec:sparse_recovery} shows that solving an SNL problem is
equivalent to recovering a sparse signal from linear underdetermined
measurements. This is reminiscent of \emph{compressed
  sensing}~\cite{donoho2006compressed,candes2008introduction,foucart2013mathematical}. In
its most basic formulation, the goal of compressed sensing is to
estimate a signal $x \in \R^{m}$ with $k$ nonzeros from linear
measurements $y \in \R^n$ given by $y := Ax$, where $A \in \R^{n
  \times m}$ and $m > n$. Remarkably, exact recovery of $x$ is still possible under certain
conditions on the matrix $A$, even though the linear system is
underdetermined.

Overdetermined linear inverse problems are said to be ill posed when
the measurement matrix is ill conditioned. This occurs when there
exist vectors that lie close to the null space of the matrix, or
equivalently when a subset of its columns is highly
correlated. Analogously, the compressed-sensing problem is ill posed
if any \emph{sparse} subset of columns is highly correlated, because
this implies that sparse vectors lie close to the null space. Early
works on compressed sensing derive recovery guarantees assuming a
bound on the maximum correlation between the columns of the
measurement matrix $A$ (sometimes called \emph{incoherence}).
They prove that tractable algorithms such as
$\ell_1$-norm minimization and greedy techniques achieve exact
recovery as long as the maximum correlation is of order
$n^{-1/2}$ for sparsity levels $k$ of up to order $\sqrt{n}$~\cite{donoho2003optimally,donoho2001uncertainty,gribonval2003sparse,Tropp:2004gc}, even if the data
are corrupted by additive
noise~\cite{donoho2006stable,Tropp:2006jc}. These results were
subsequently strengthened to sparsity levels of order $n$ (up to
logarithmic
factors)~\cite{donoho2006compressed,needell2009cosamp,candes2005decoding,bickel2009simultaneous,candes2008restricted}
under stricter assumptions on the conditioning of sparse subsets of
columns in the measurement matrix, such as the restricted-isometry
property~\cite{candes2005decoding} or the restricted-eigenvalue
condition~\cite{bickel2009simultaneous}.

\begin{figure}[tp]
  \centering {
  \begin{tabular}{CC}
    \includegraphics{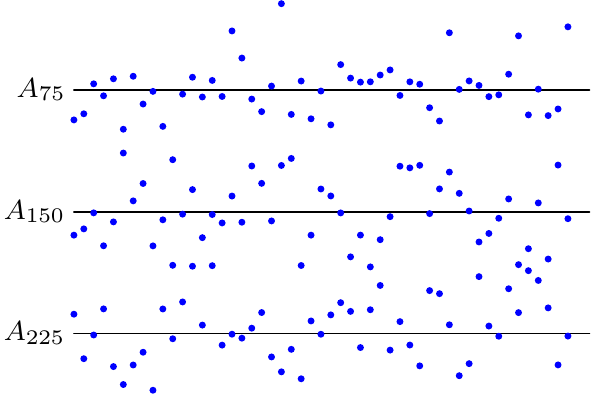}&\includegraphics{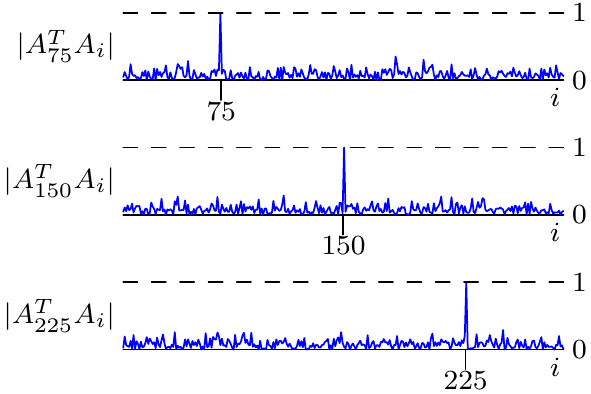}
  \end{tabular}
  }
  \caption{Columns and intercolumn correlations for a typical
    compressed-sensing matrix $A\in\RR^{100\times 300}$ with
    i.i.d.~standard Gaussian entries.  The notation $A_i$ denotes the
    $i$th column of $A$.  Note that $|A_i^TA_j|$ is small whenever
    $i\neq j$.  }
  \label{fig:cs_correlations}
\end{figure}

The question is whether compressed-sensing theory applies to SNL
problems. Let us consider an SNL problem where the parameter space is
discretized to yield a finite-dimensional version of the
sparse-recovery problem described in \Cref{sec:sparse_recovery}. The
measurement model in \Cref{eq:y_model} can then be expressed
as
\begin{align}
y & = \sum_{j=1}^{m} \vec{\phi}(\eta_j) x_j \\ & =\Phi
x. \label{eq:y_model_matrix}
\end{align}
$\Phi$ is a measurement matrix whose columns correspond to the feature
vectors of $\eta_1$, \ldots, $\eta_m \in \RR^{p}$, which denote the
$m$ points of the discretized parameter space. The signal $x \in
\RR^m$ is the discrete version of $\mu$ in \Cref{eq:x_def}:
a sparse vector, such that $x_i=c_j$ when $\eta_j=\theta_i$ for some
$i \in \keys{1,\ldots,k}$ and $x_j=0$ otherwise. For
compressed-sensing theory to apply here, the intercolumn correlations
of $\Phi$ should be very low. \Cref{fig:cs_correlations} shows the
intercolumn correlations of a typical compressed-sensing matrix: each
column is almost uncorrelated with every other
column. \Cref{fig:sccorrs,fig:eegcorr} show the correlation function $\rho_\theta: \R^p
\rightarrow \R$
\begin{equation}
\label{eq:corr_func}
  \rho_{\theta}(\eta) := \PROD{\vphi(\theta)}{\vphi(\eta)},
\end{equation}
for the different SNL problems discussed in
Section~\ref{sec:snl}. 
The intercolumn correlations of the measurement matrix $\Phi$ in the
corresponding discretized SNL problems are given by samples of
$\rho_{\eta_i}$ at the locations of the remaining grid points
$\eta_j$, $i \neq j$. The contrast with the intercolumn structure of
the compressed-sensing matrix is striking: nearby columns in all SNL
measurement matrices are very highly correlated. This occurs under any
reasonable discretization of the parameter space; discretizing very
coarsely would result in inaccurate parameter estimates, defeating the
whole point of solving the SNL problem.

\begin{figure}[tp]
  \centering {
  \begin{tabular}{@{\vspace{0.75cm}}L@{\quad}CC}
    {\footnotesize Deconvolution} &
    \includegraphics{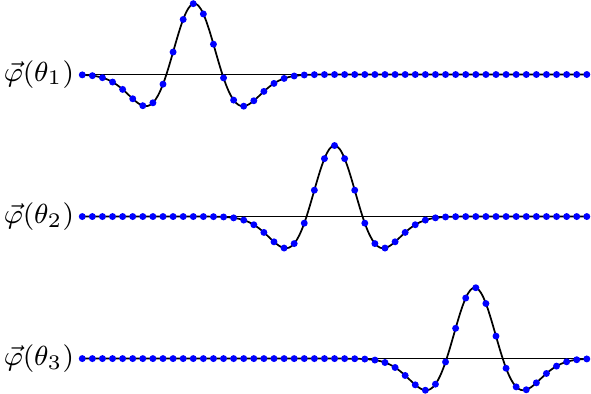}&\includegraphics{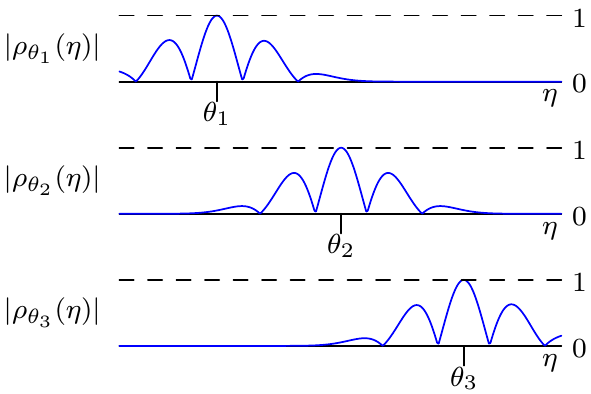}\\
                    {\footnotesize Super-resolution} &
                    \includegraphics{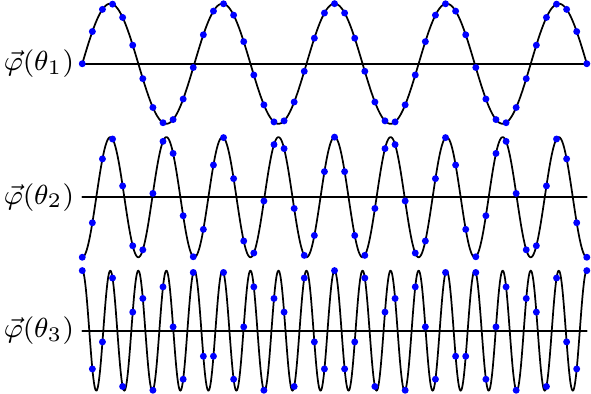}&\includegraphics{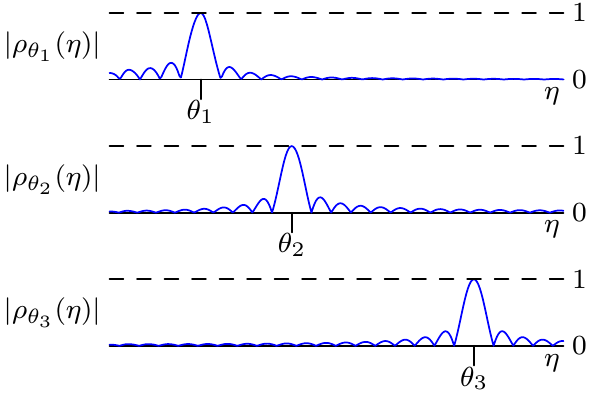}\\
                                    {\footnotesize Heat-Source
                                      Localization} &
                                    \includegraphics{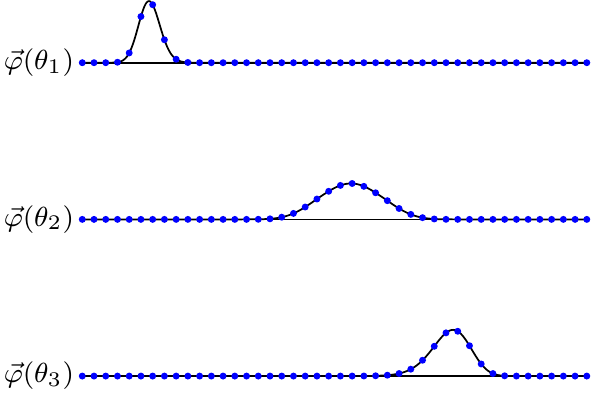}&\includegraphics{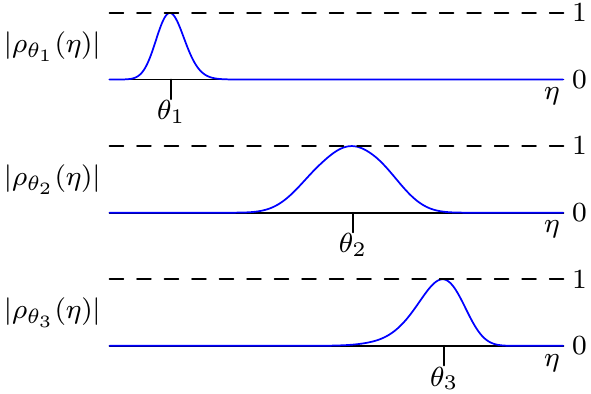}
  \end{tabular}
  }
  \caption{Correlation structure of the measurement operators arising in
    deconvolution, super-resolution, and heat-source localization.
    The left column shows the discrete data $\vphi(\theta_i)$ for
    three parameter values.  The right column gives the absolute
    values of the corresponding correlation functions
    $\rho_{\theta_i}(\eta)=\vphi(\theta_i)^T\vphi(\eta)$.  }
  \label{fig:sccorrs}
\end{figure}

\begin{figure}[tp]
  \hspace{1.8cm} $\rho_{\theta_1}$ \hspace{4.8cm}
  $\rho_{\theta_2}$ \hspace{4.8cm} $\rho_{\theta_3}$ \\ 
  \includegraphics[scale=0.6]{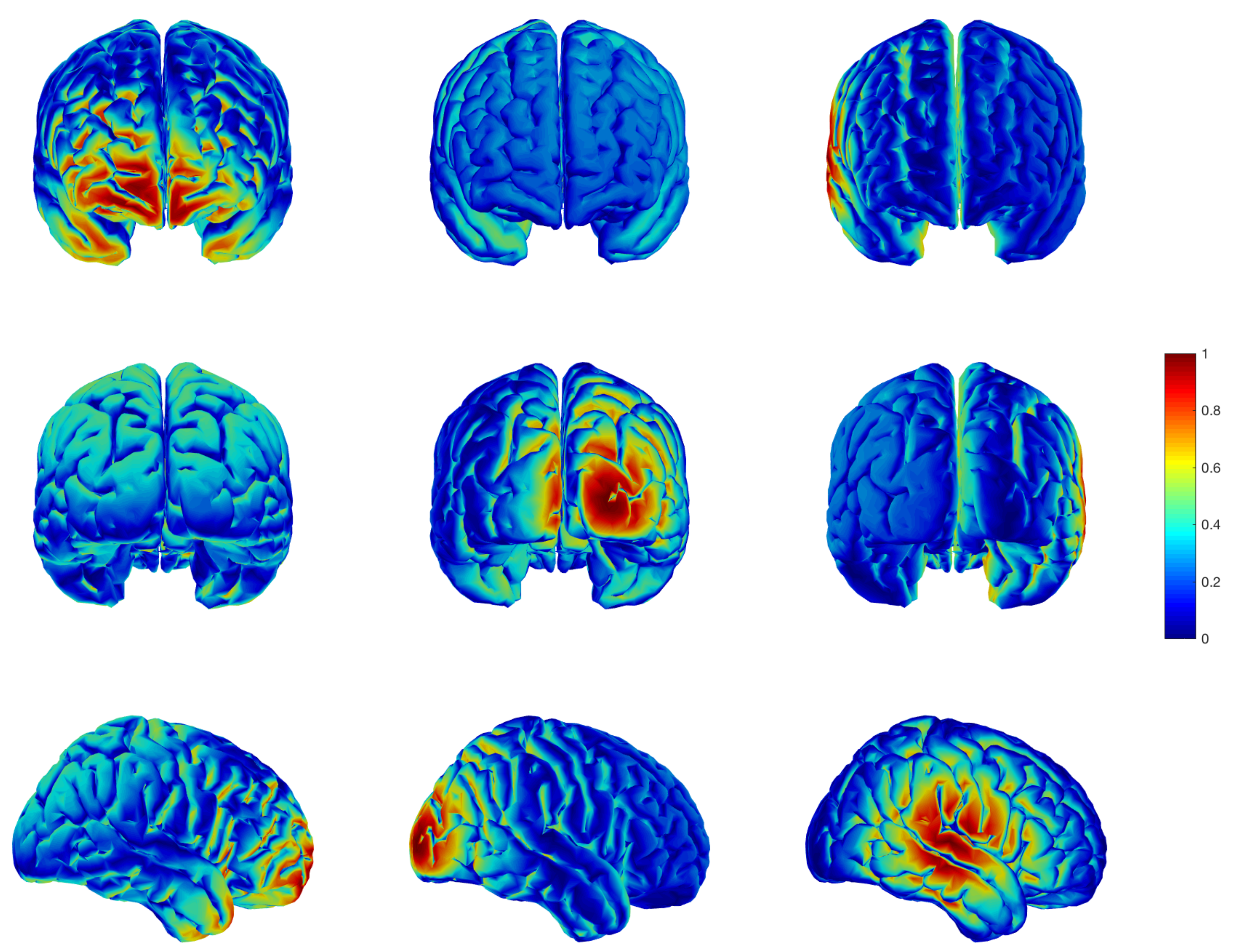}
%
  \caption{Example of correlation functions for the electroencephalography brain-activity localization problem. Each column shows three views of the correlation function $\rho_{\theta_i}(\eta)=\vphi(\theta_i)^T\vphi(\eta)$, $i=1,2,3$, corresponding to one of the neural-activity sources shown in Figure~\ref{fig:eegatoms}.}
  \label{fig:eegcorr}
\end{figure}

The difference between the correlation structure of the measurement
matrix in compressed-sensing and SNL problems is not surprising. The
entries of compressed-sensing matrices are \emph{random}. As a result,
small subsets of columns are almost uncorrelated with high
probability. In contrast, matrices in discretized SNL problems arise
from a deterministic model tied to an underlying continuous parameter
space and to a function $\phi_t$ that is typically smooth. Since
$\phi_t\brac{\theta} \approx \phi_t\brac{\theta'}$ when $\theta
\approx \theta'$, nearby columns are highly correlated. These matrices
do not satisfy any of the properties of the conditioning of sparse
submatrices commonly assumed in compressed sensing. In conclusion, the
answer to our previous question is a resounding \emph{no}:
compressed-sensing theory does not apply to SNL problems.

\subsection{Beyond Sparsity and Randomness: Separation and Correlation Decay}
\label{sec:separation}

The fact that compressed-sensing theory
does not apply to SNL problems involving deterministic measurements is not a theoretical
artifact. Sparsity is not a strong enough condition to ensure that such SNL
problems are well posed. If $\phi_t$ is smooth, which is usually the
case in applications, the features $\vphi\brac{\theta}$ corresponding
to parameters that are clustered in the parameter space are highly
correlated. This can be seen in the correlation plots
of \Cref{fig:sccorrs,fig:eegcorr}.
As a result, different sparse combinations of features may
yield essentially the same data. For a detailed analysis of
this issue in the context of super-resolution and deconvolution of
point sources we refer the reader to Section~3.2 in \cite{superres}
(see also \cite{moitra_superres} and \cite{slepian}) and Section~2.1
in~\cite{bernstein2017deconvolution}, respectively.

Additional assumptions beyond sparsity are necessary to establish
recovery guarantees for SNL problems. At the very least, the features
$\vphi\brac{\theta_1}$, \ldots, $\vphi\brac{\theta_k}$ in the data
cannot be too correlated. For arbitrary SNL problems it is challenging
to define simple conditions to preclude this from happening. However,
in most practical situations, SNL problems exhibit \emph{correlation
  decay}, meaning that the correlation function $\rho_{\theta}$
defined in \Cref{eq:corr_func} is bounded by a decaying
function away from $\theta$. This is a natural property: the more separated
two parameters $\theta$ and $\theta'$ are in the parameter space, the
less correlated we expect their features $\vphi\brac{\theta}$ and
$\vphi\brac{\theta'}$ to be. All the examples in Section~\ref{sec:snl}
have correlation decay (see Figures~\ref{fig:sccorrs}
and \ref{fig:eegcorr}). 

For SNL problems with correlation decay there
is a simple way of ensuring that the features corresponding to the
true parameters $\theta_1$, \ldots, $\theta_k$ are not highly
correlated: imposing a \emph{minimum separation} between them in the
parameter space. The main contribution of this paper is showing that this is in fact sufficient to guarantee that TV-norm minimization achieves exact recovery, under some additional conditions on the derivatives of the correlation function.

\subsection{Organization}

In \Cref{sec:framework} we propose a theoretical framework for the
analysis of sparse estimation in the context of SNL inverse problems.
We focus on the case $p=1$ for simplicity, but our results can be
extended to higher dimensions, as described in \Cref{sec:2d}.
Our main results are \Cref{thm:exact1d_simp,thm:exact1d}, which
establish exact-recovery results for SNL problems with correlation
decay under a minimum separation on the true
parameters. \Cref{sec:RecoveryCert} contains the proof of these
results, which are based on a novel dual-certificate
construction. \Cref{sec:numerical} illustrates the theoretical results
through numerical experiments for two applications: heat-source
localization and estimation of brain activity from
electroencephalography data.

\section{Main Results}
\label{sec:framework}

\subsection{Correlation decay}
\label{sec:correlation_decay}
In this section we formalize the notion of correlation decay by
defining several conditions on the correlation function
$\rho_{\theta}$ and on its derivatives.
Throughout we assume that the problem is one dimensional ($p:=1$). 

To alleviate notation we define
 \begin{equation}
 \label{eq:def_rho}
  \rho_{\theta}^{(q,r)}(\eta) :=
  \vphi^{\,(q)}(\theta)^T\vphi^{\,(r)}(\eta),
\end{equation}
for $q=0,1$ and $r=0,1,2$, where $\vphi^{\,(q)}$ is the $q$th derivative of
$\vphi$.  Recall that we assume $\|\vphi(\theta)\|_2=1$ for all
$\theta\in\RR$.  This implies $\rho_{\theta}(\theta)=1$ and 
$\rho^{(1,0)}_{\theta}(\theta) = \rho^{(0,1)}_{\theta}(\theta)=0$ for
all $\theta\in\RR$.
Plots of these derivatives are
shown in \Cref{fig:corrderivs} for the deconvolution,
super-resolution, and heat-source localization problems. Our conditions take the form of bounds in different regions of the parameter space: a \emph{near} region, an \emph{intermediate} region, and a \emph{far} region, as depicted in Figure~\ref{fig:decay}.

\begin{figure}[t]
  \centering \includegraphics{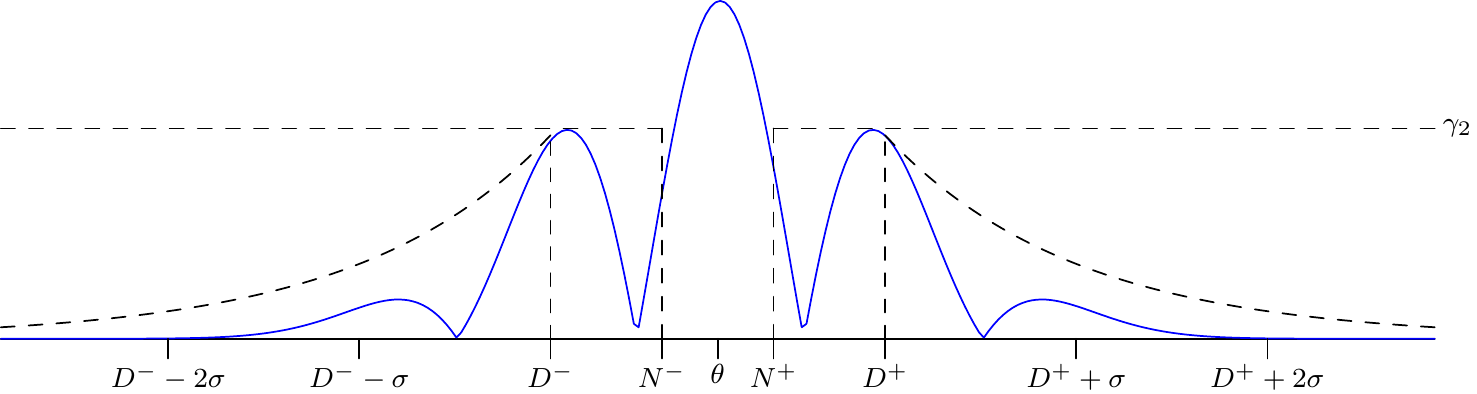}
  \caption{Illustration of the near, intermediate and decay conditions
    on the correlation function defined in
    Section~\ref{sec:correlation_decay}. This example shows a
    correlation function arising in deconvolution.
  }
  \label{fig:decay}
\end{figure}

In the \emph{near} region the correlation can be arbitrarily
  close to one, but is locally bounded by a quadratic function.
\begin{condition}[Near Condition]
  \label{def:near}
  The correlation function $\rho_\theta$ satisfies the near condition if
  \begin{align}
    \rho_{\theta}^{(0,2)}(\eta)&\leq -\gamma_0\quad\text{and}\label{eq:near1}\\
    |\rho_{\theta}^{(1,2)}(\eta)|&\leq \gamma_1\normTwo{\vphi^{\,(1)}(\theta)}^2
    \label{eq:near2}
  \end{align}
  hold for all $\eta$ in $[N^-,N^+]$, where $N^{\pm}:=\theta\pm
  N$, and $\gamma_0,\gamma_1$ are positive constants.
\end{condition}
Equation~\eqref{eq:near1} requires correlations to be concave locally,
which is natural since the maximum of $\rho_{\theta}$ is attained at
$\theta$.  Equation~\eqref{eq:near2} is a regularity condition that requires
$\rho_{\theta}^{(0,2)}(\eta)$ to vary smoothly as we change the center
$\theta$.  The normalization quantity
$\normTwo{\vphi^{\,(1)}(\theta)}^2$ captures how sensitive the
features $\vphi$ are to perturbations.  
If this quantity is small for
some $i$ then we require more regularity from $\rho_{\theta}$ because
$\theta$ is harder to distinguish from nearby points using the
measurements $\vphi$.

In the \emph{intermediate} region the correlation function
$\rho_{\theta}$ is bounded but can otherwise fluctuate arbitrarily. In
addition, we require a similar constraint on its derivative with
respect to the position of the center $\theta$. 
  \begin{condition}[Intermediate Condition]
  \label{def:intermediate}
  The correlation function $\rho_{\theta}$ satisfies the intermediate condition if
  \begin{align}\label{eq:inter1}
    |\rho_{\theta}^{(0,0)}(\eta)|&\leq \gamma_2<1\quad\text{and}\\
    |\rho_{\theta}^{(1,0)}(\eta)|&\leq \gamma_3\normTwo{\vphi^{\,(1)}(\theta)}^2 \label{eq:inter2}
  \end{align}
  hold for $\eta<N^-$ and $\eta>N^+$,
  where $N^\pm$ are defined as in the near condition, and $\gamma_2,\gamma_3$ are positive constants.
\end{condition}

\begin{figure}[tp]
  \centering
  \begin{tabular}{@{\vspace{0.75cm}}L@{\quad}CC}
    {\footnotesize Deconvolution} &
    \includegraphics{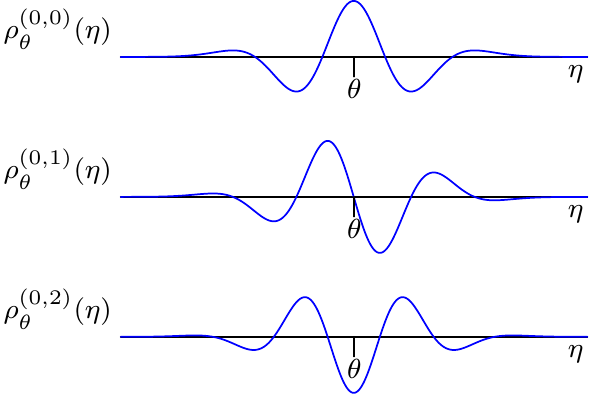}&\includegraphics{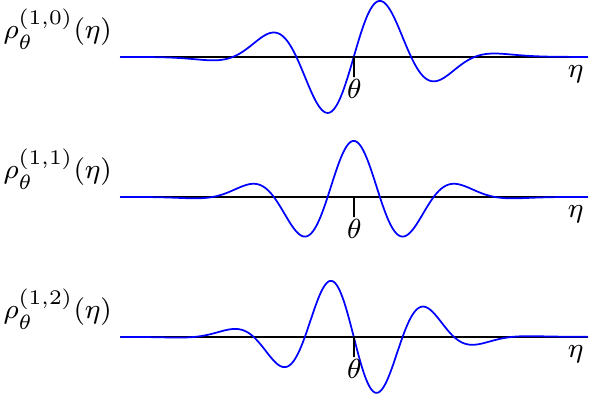}\\ {\footnotesize
      Super-resolution} &
    \includegraphics{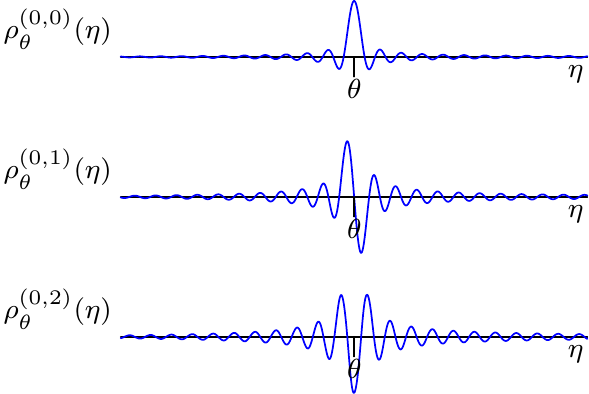}&\includegraphics{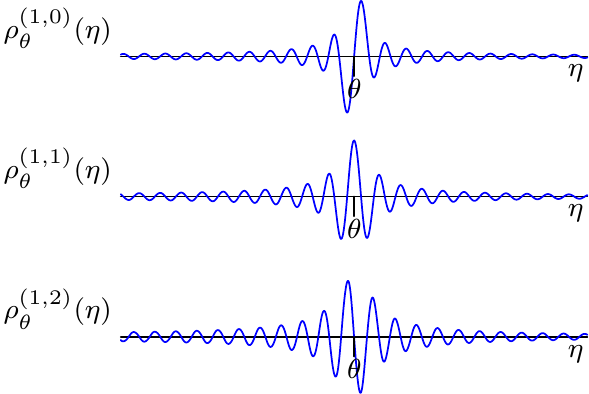}\\ {\footnotesize
      Heat-Source Localization} &
    \includegraphics{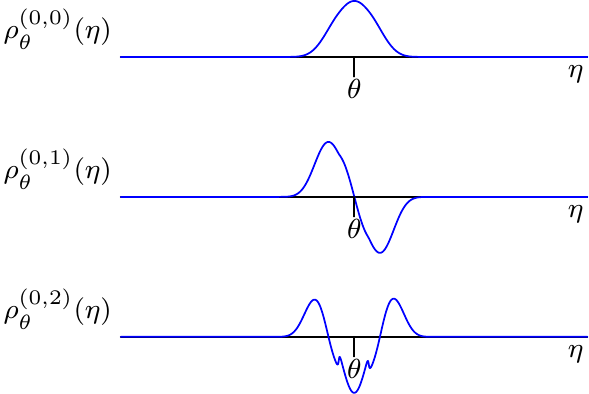}&\includegraphics{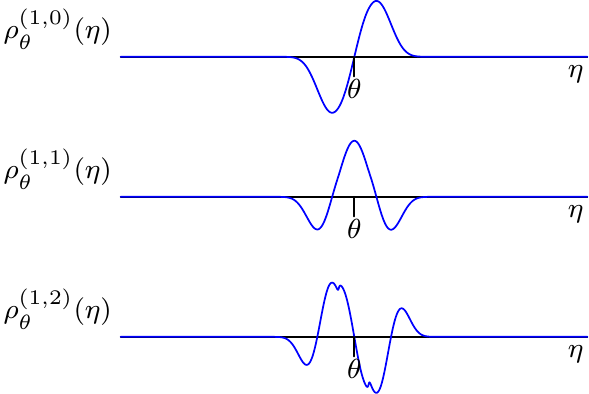}
  \end{tabular}
  \caption{Derivatives of the correlation functions arising in deconvolution, super-resolution, and heat-source localization.
    In all cases the correlation function is locally concave and all derivatives exhibit decaying tails. In the case of super-resolution the tail decay is not
    summable, but can be made summable by applying a window function to the measurements.}
  \label{fig:corrderivs}
\end{figure}
  
In the \emph{decay} region the correlation and its derivatives are bounded by a
  decaying function.
  \begin{condition}[Decay Condition]
  \label{def:decay}
   The correlation function $\rho_{\theta}$ satisfies the decay condition with decay constant $\sigma >0$ if
  \begin{align} \label{eq:decaycond1}
    |\rho_{\theta}^{(0,r)}(\eta)|&\leq C_{0,r}e^{-(|\eta-\theta|-D)/\sigma}\quad\text{and}\\
    |\rho_{\theta}^{(1,r)}(\eta)|&\leq C_{1,r}e^{-(|\eta-\theta|-D)/\sigma}\normTwo{\vphi^{\,(1)}(\theta)}^2\label{eq:decaycond2}
  \end{align}
  hold for $\eta<D^-$ and $\eta>D^+$, where $r=0,1,2$,
  $D^\pm:=\theta \pm D$, and $C_{q,r}$ are positive constants.
\end{condition}
The choice of exponential decay is for concreteness, and can be replaced
by a different summable decay bound\footnote{In the case of
super-resolution, the decay is not summable, but can be made summable
by applying a window to the data, which is standard practice in
spectral
super-resolution~\cite{windows_harris}.}. \Cref{fig:corrderivs} shows
the derivatives of the correlation functions for several SNL problems. 
\begin{figure}[tp]
  \centering
  \begin{tabular}{MM}
    \footnotesize{Dense Uniform}&
    \footnotesize{Irregular}\\
    \includegraphics{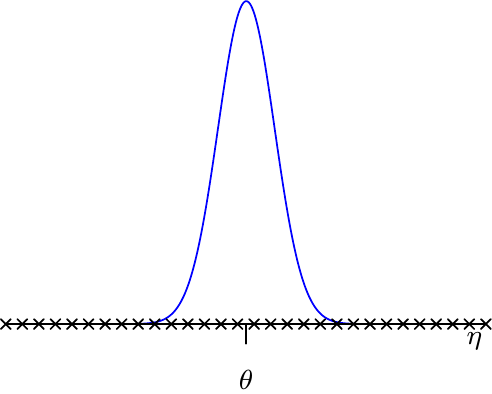}&
    \includegraphics{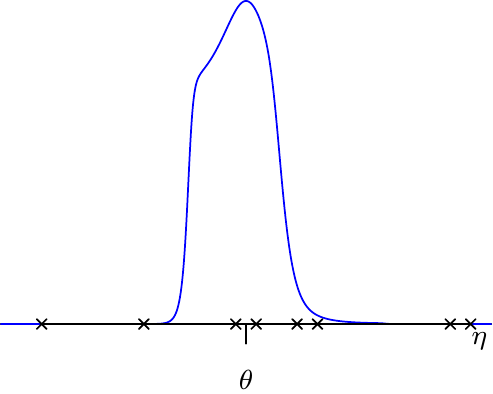}\\
    \footnotesize{Sparse Uniform}&
    \footnotesize{Right Side Only}\\
    \includegraphics{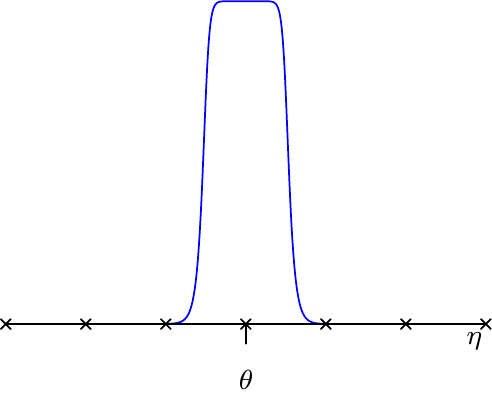}&
    \includegraphics{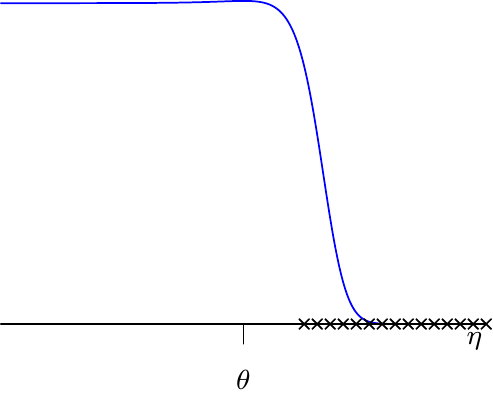}
  \end{tabular}
  \caption{
    \bdb{
    Plots of the correlation function $\rho_\theta(\eta)$ for a
    deconvolution problem with varying sampling patterns
    (indicated by the x markers on the horizontal axis).  Specifically,
    $\vphi(\theta)_i:=\exp(-(\theta-s_i)^2/2)$ are samples from a
    $\theta$-centered Gaussian kernel.
    In the
    dense uniform case, the correlation function satisfies
    \Cref{def:near,def:intermediate,def:decay}.  For irregular sampling pattern, the correlation function will in general satisfy the conditions if there are some samples close to $\theta$.
    For sparse uniform samples, the correlation function is nearly flat at $\theta$ since
    there are too few measurements to distinguish nearby parameters.
    If the samples are only located to the right of
    $\theta$, then all parameters on the left side have nearly
    identical measurements.
    }
  }
  \label{fig:sampchange}
\end{figure}

\bdb{
The sample locations $s_1,\ldots,s_n$ have a direct effect on the correlation function, and hence on whether an SNL problem satisfies
\Cref{def:near,def:intermediate,def:decay}. In \Cref{fig:sampchange}
we depict the correlation function
$\rho_\theta$ for different sampling patterns in a Gaussian deconvolution problem with
$\vphi(\theta)_i:=\exp(-(\theta-s_i)^2/2)$. For both regular and irregular sampling patterns, the correlation functions are well behaved as long as there are samples close to the true parameter. 
In contrast, when there are few samples, or the
samples are distant from the true parameter, the conditions in
\Cref{def:near,def:intermediate,def:decay} may be violated. We refer the interested reader to~\cite{bernstein2017deconvolution} for a more precise analysis of sampling patterns for deconvolution of point sources.}
\subsection{Exact Recovery for SNL Problems with Uniform Correlation Decay}
\label{sec:exactuniform}
In this section we focus on SNL problems where the correlation function $\rho_{\theta}$ of the measurement operator is approximately translation invariant, meaning that $\rho_{\theta}$ has similar properties for any value of $\theta$. Examples of such SNL problems include super-resolution, deconvolution, and heat-source localization if the conductivity is approximately uniform. We prove that TV-norm minimization recovers a superposition of Dirac measures exactly as long as the support satisfies a separation condition related to the decay properties of the correlation function and its derivatives. 

\begin{figure}[t]
  \centering \includegraphics{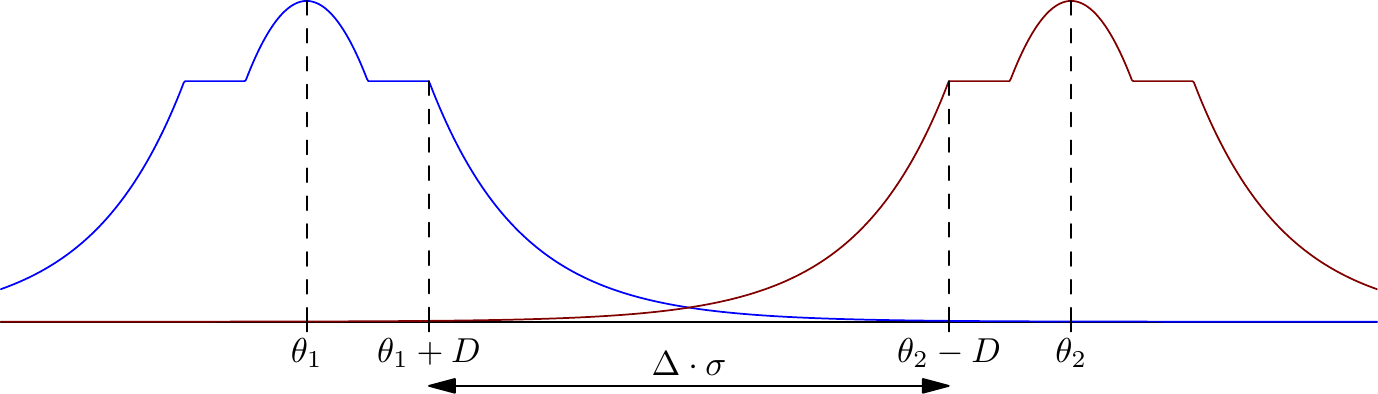}
  \caption{Illustration of the minimum-separation condition required
    by \Cref{thm:exact1d_simp}.
  }
  \label{fig:sep}
\end{figure}

\begin{theorem}
  \label{thm:exact1d_simp}
 Let $\Theta:=\{\theta_1,\ldots,\theta_k\}$ be the support of the
 measure $\mu$ defined in \Cref{eq:x_def}. Assume that the
 correlation functions $\rho_{\theta_i}$, $\theta_i \in \Theta$,
 satisfy \Cref{def:near,def:intermediate,def:decay} for fixed
 constants $N$, $D$, $\sigma$, $\gamma\in\RR^4$, and $C\in\RR^{2\times
   3}$. Then $\mu$  is 
  the unique solution to Problem~\eqref{pr:min_TV} as long as the
  minimum separation of $\Theta$ satisfies, 
  \begin{equation}
    \min_{i\neq j, \; \theta_i,\theta_j \in \Theta} \abs{\theta_i - \theta_j} > 2D + \Delta\sigma, \quad \Delta:= \max\brac{\log(1+2(C_{0,0}+C_{1,1})),\lambda_1,\lambda_2},
  \end{equation}
  where
  \begin{equation}
    \label{eq:DeltaCond2}
    \lambda_1 =
    2\log\brac{\frac{2(2(2C_{0,0}+C_{1,1}-C_{1,1}\gamma_2+C_{0,1}\gamma_3)+1-\gamma_2)}
      {-C_{0,0}+\sqrt{C_{0,0}^2+4(1-\gamma_2)(2(2C_{0,0}+C_{1,1}-C_{1,1}\gamma_2+C_{0,1}\gamma_3)
          +1-\gamma_2)}}},
  \end{equation}
  \begin{equation}
    \label{eq:DeltaCond3}
    \lambda_2 =
    2\log\brac{\frac{2(2((2C_{0,0}+C_{1,1})\gamma_0+C_{0,2}+C_{0,1}\gamma_1)+\gamma_0)}
      {-C_{0,2} + \sqrt{C_{0,2}^2 +
          4\gamma_0(2((2C_{0,0}+C_{1,1})\gamma_0+C_{0,2}+C_{0,1}\gamma_1)+\gamma_0)}}},
  \end{equation}
  and the constants $C_{q,r}$ are chosen so that
  \begin{equation}
\begin{aligned}
  C_{0,1}C_{1,0} &= C_{0,0}C_{1,1},\quad\text{and}\\ C_{0,1}C_{1,2}
  &= C_{1,1}C_{0,2}.
\end{aligned}
\label{eq:algebraic}
\end{equation}
\end{theorem}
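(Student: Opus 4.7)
\textbf{Proof proposal for \Cref{thm:exact1d_simp}.}

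My plan is to prove uniqueness of $\mu$ by constructing a dual certificate, following the by-now classical approach for TV-norm minimization. Concretely, I will produce a vector $q \in \RR^n$ such that the function
\begin{equation*}
Q(\theta) := \PROD{q}{\vphi(\theta)} = \sum_{j=1}^{k} \Big( \alpha_j\, \rho_{\theta_j}(\theta) + \frac{\beta_j}{\normTwo{\vphi^{(1)}(\theta_j)}}\, \rho_{\theta_j}^{(0,1)}(\theta) \Big)
\end{equation*}
satisfies (i) $Q(\theta_i) = \sgn(c_i)$ for every $\theta_i \in \Theta$ and (ii) $|Q(\theta)| < 1$ for every $\theta \notin \Theta$. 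A standard convex-analysis argument then shows that $\mu$ is the unique TV-norm minimizer consistent with $y$. The ansatz uses one copy of $\rho_{\theta_j}$ and one copy of $\rho_{\theta_j}^{(0,1)}$ per support point, rescaled by $\normTwo{\vphi^{(1)}(\theta_j)}$ to restore the correct units; the denominator in \Cref{eq:near2,eq:inter2,eq:decaycond2} is what makes this normalization natural.

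The first step is to pin down the coefficients $\alpha_j, \beta_j$ by imposing the $2k$ linear interpolation equations $Q(\theta_i) = \sgn(c_i)$ and $Q^{(1)}(\theta_i)/\normTwo{\vphi^{(1)}(\theta_i)} = 0$ for $i = 1,\ldots,k$. The derivative condition forces each $\theta_i$ to be a critical point of $Q$, and combined with strict concavity (resp.\ convexity) at $\sgn(c_i) = +1$ (resp.\ $-1$) it will yield $|Q| < 1$ locally. The resulting $2k \times 2k$ coefficient matrix has the block form $I + E$, where the diagonal blocks are $\pm I$ (coming from $\rho_{\theta_i}(\theta_i) = 1$ and the derivative identities recorded after \eqref{eq:def_rho}) and the off-diagonal blocks consist of correlations between distinct support points. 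Using the decay condition and $|\theta_i - \theta_j| > 2D + \Delta\sigma$, the off-diagonal entries are each bounded by $C_{q,r} e^{-\Delta/2}$; summing a geometric series in $e^{-1/\sigma}$ over the remaining support points shows $\|E\| < 1$ whenever $\Delta > \log(1 + 2(C_{0,0} + C_{1,1}))$, which is the first branch of the $\max$ defining $\Delta$. A Neumann series then gives $\alpha_i = \sgn(c_i) + O(e^{-\Delta/2})$, $\beta_i = O(e^{-\Delta/2})$, and analogous control on the other $\alpha_j, \beta_j$.

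Next I will verify $|Q(\theta)| < 1$ by splitting $\RR$ into three regions relative to a chosen $\theta_i$: a near band $[\theta_i - N, \theta_i + N]$, an intermediate band $[\theta_i - D, \theta_i - N] \cup [\theta_i + N, \theta_i + D]$, and a far region. Write $Q(\theta) = \sgn(c_i) \rho_{\theta_i}(\theta) + R(\theta)$, where $R$ collects the contribution of $\theta_i$'s derivative term plus everything from the other support points. In the near band, \Cref{def:near} gives $\rho_{\theta_i}^{(0,2)} \leq -\gamma_0$, and a second-order Taylor expansion in $\theta$ around $\theta_i$ together with the regularity bound \eqref{eq:near2} shows that $\sgn(c_i) Q(\theta) \leq 1 - \tfrac{1}{2}\gamma_0 (\theta-\theta_i)^2 + (\text{terms bounded by } e^{-\Delta/2} \cdot \text{constant involving }C_{q,r},\gamma_0,\gamma_1)$; requiring the second batch to be strictly dominated by the quadratic decrease is exactly a condition of the form $a e^{-\Delta} + b e^{-\Delta/2} < \gamma_0/2$ with $a = (2C_{0,0}+C_{1,1})\gamma_0 + C_{0,2} + C_{0,1}\gamma_1$ and $b = C_{0,2}$, whose positive root in $x = e^{-\Delta/2}$ gives precisely the threshold $\lambda_2$ in \eqref{eq:DeltaCond3}. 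In the intermediate band, \Cref{def:intermediate} gives $|\rho_{\theta_i}^{(0,0)}(\theta)| \leq \gamma_2 < 1$ and $|\rho_{\theta_i}^{(1,0)}(\theta)| \leq \gamma_3 \normTwo{\vphi^{(1)}(\theta_i)}^2$, so the same decomposition yields $|Q(\theta)| \leq \gamma_2 + (\text{decaying correction})$, and demanding the correction to be at most $(1-\gamma_2)/2$ gives the analogous quadratic-in-$e^{-\Delta/2}$ threshold $\lambda_1$ in \eqref{eq:DeltaCond2}. In the far region all terms of $Q$ are controlled directly by \Cref{def:decay}.

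The main obstacle will be step 2: carefully propagating the coefficient errors $\alpha_j - \sgn(c_j)\delta_{ij}$ and $\beta_j$ through the bounds on $R(\theta)$ in the near and intermediate regions, while keeping the algebraic constants tight enough to match \eqref{eq:DeltaCond2}--\eqref{eq:DeltaCond3}. Here the algebraic identities \eqref{eq:algebraic} play the role of simplifying a Cauchy--Schwarz-type relation between $\rho^{(0,0)}$, $\rho^{(1,1)}$, and $\rho^{(0,1)}$ (resp.\ between $\rho^{(0,2)}$, $\rho^{(1,1)}$, and $\rho^{(1,2)}$), allowing the two-by-two block norm of $E$ and of the residual $R$ to be expressed through the same constants $C_{q,r}$; this is what compresses the bounds into the clean quadratic $a x^2 + b x - c = 0$ form that produces the expressions for $\lambda_1, \lambda_2$ after taking $\lambda = -2 \log x$.
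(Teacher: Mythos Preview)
Your overall architecture is the paper's: build a dual certificate by interpolating the sign pattern with support-centered correlations, solve the interpolation system via a Neumann series, then bound $|Q|$ away from the support and show concavity near the support. But there is a genuine structural error in the ansatz that breaks the argument.

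The certificate must be of the form $Q(\theta)=q^T\vphi(\theta)$ for a \emph{fixed} $q\in\RR^n$. The building blocks available for $q$ are $\vphi(\theta_j)$ and $\vphi^{(1)}(\theta_j)$, which yield $\rho_{\theta_j}(\theta)=\vphi(\theta_j)^T\vphi(\theta)$ and $\rho_{\theta_j}^{(1,0)}(\theta)=\vphi^{(1)}(\theta_j)^T\vphi(\theta)$. Your ansatz instead uses $\rho_{\theta_j}^{(0,1)}(\theta)=\vphi(\theta_j)^T\vphi^{(1)}(\theta)$, which is a derivative in the \emph{evaluation} variable and is \emph{not} of the form $q^T\vphi(\theta)$ for any fixed $q$. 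Consequently your $Q$ does not lie in the range of the adjoint of the measurement operator and cannot serve as a dual certificate in \Cref{thm:dualcert}. This is exactly why the conditions \eqref{eq:near2}, \eqref{eq:inter2}, \eqref{eq:decaycond2} bound $\rho^{(1,r)}$ normalized by $\normTwo{\vphi^{(1)}(\theta)}^2$: they control $\vphi^{(1)}(\theta_j)$ as it enters $q$. Once you switch to $\rho^{(1,0)}_{\theta_j}$, the correct normalization is $\normTwo{\vphi^{(1)}(\theta_j)}^2$ (the square, not the first power), matching the paper's $\hat\rho^{(1,0)}_{\theta_j}$.

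Two smaller points. First, the off-diagonal entries of the interpolation matrix are $O(e^{-\Delta})$, not $O(e^{-\Delta/2})$: adjacent support points satisfy $|\theta_i-\theta_j|>2D+\Delta\sigma$, so the decay bound gives $e^{-\Delta}$, and the row sums are $C_{q,r}s$ with $s=2e^{-\Delta}/(1-e^{-\Delta})$. The factor $x=e^{-\Delta/2}$ only enters later, when you bound $\epsilon_i^{(q,r)}(\theta)$ at a point $\theta$ lying roughly midway between $\theta_i$ and $\theta_{i+1}$; there the contribution from the nearest neighbor is $e^{-\Delta/2}$ while the rest still sum to $s$, giving the $as+bx<c$ inequality that produces $\lambda_1,\lambda_2$. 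Second, in the near region the paper bounds $Q^{(2)}(\theta)<0$ for \emph{all} $\theta\in[N_i^-,N_i^+]$ directly, rather than Taylor-expanding at $\theta_i$; your Taylor approach would need a uniform second-derivative bound anyway, so it is simpler to argue concavity throughout the interval.
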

Note that the condition in~\Cref{eq:algebraic} is only needed to simplify the statement and proof of our results. 

\Cref{thm:exact1d_simp} establishes that TV minimization recovers the true parameters of an SNL problem when the support separation
is larger than a constant that is proportional to the rate of decay of the correlation function and its derivatives. This separation is measured from the edges of the intermediate regions, as depicted in \Cref{fig:sep}. In stark contrast to compressed-sensing theory, the result holds for correlation functions that are arbitrarily close to one in the near regions, and may have arbitrary bounded fluctuations in the intermediate regions. 

\bdb{Our statement and proof of \Cref{thm:exact1d_simp} focuses on
  clarity over sharpness.  For example, in the Gaussian
  deconvolution problem depicted in \Cref{fig:sampchange} with dense
  uniform samples, the theorem requires a minimum separation of
  approximately $6.6$ to guarantee exact parameter recovery.
  To obtain this bound, we let $N=1$, $D=0$, $\sigma=1$,
  \begin{equation}
    \gamma = \MAT{0.185&0.983&0.788&0.868}^T,\quad
    C = \MAT{2.818&3.348&4.200\\6.786&8.060&10.113},
  \end{equation}
  and verify the conditions numerically.  This is
  in contrast with the minimum separation of approximately $3$ proven
  in \cite{bernstein2017deconvolution}.  Part of this discrepancy
  comes from applying exponential decay bounds
  to a Gaussian-shaped correlation function.
}

Our result requires conditions on the correlation functions centered at the true support $\Theta$, and also on their derivatives. The decay conditions on the derivatives constrain the correlation structure of the measurement operator when we perturb the position of the true parameters. For example, they implicitly bound pairwise correlations centered in a small neighborhood of the support. Exploring to what extent these conditions are necessary is an interesting question for future research.

\begin{figure}[t]
  \centering \includegraphics{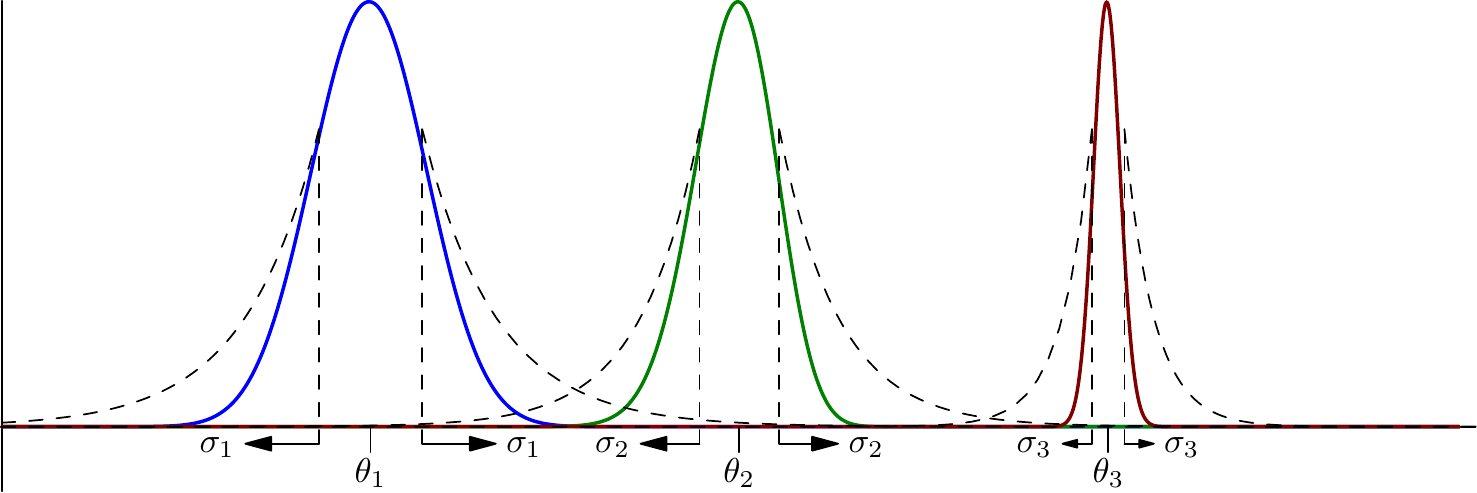}
  \caption{Support-centered correlations with varying decay
    parameters for the heat-source localization problem.
    The vertical dashed lines indicate the locations of
    $D_i^\pm$ for $i=1,2,3$, whereas the curved dashed lines indicate the exponential
    decay bounds.  As we move from left to right along the $\theta$-axis
    the thermal conductivity is decreasing, which causes the
    correlation functions to become narrower.  }
  \label{fig:width}
\end{figure}

\subsection{Exact Recovery for SNL Problems with Nonuniform Correlation Decay}
\label{sec:exactvarying}
The measurement operators associated to many SNL problems of practical interest have nonuniform correlations. \Cref{fig:sccorrs,fig:eegcorr} show that this is the case for heat-source localization with spatially-varying conductivity, and for estimation of brain-activity from EEG data. Our goal in this section is to establish recovery guarantees for such problems.

The conditions on the correlation structure of the measurement
operator required by \Cref{thm:exact1d_simp} only pertain to the
correlation functions centered at the true parameters. In order to
generalize the result, we allow the correlation function centered at
each parameter to satisfy the conditions in
Section~\ref{sec:correlation_decay} with different constants. This
makes it possible for the correlation to have near and intermediate
regions of varying widths around each element of the support, as well
as different decay constants in the far region. Our main result is
that TV-norm minimization achieves exact recovery for SNL problems
with nonuniform correlation structure, as long as the support
satisfies a minimum-separation condition dependent on the
corresponding \emph{support-centered} correlation functions. 

Let $\Theta$ be the support of our signal of interest, and assume
$\rho_{\theta_i}$ satisfies the decay conditions in
Section~\ref{sec:correlation_decay} with parameters $\sigma_i$, $D_i$,
and $N_i$, which are different for all $\theta_i \in \Theta$.
Extending the notation, we let $N_i^{\pm}:=\theta_i\pm N_i$ and
$D_i^{\pm}:=\theta_i\pm D_i$ denote the endpoints of the near and
decay regions, respectively.
Intuitively, when $\sigma_i$ and $D_i$ are small, the
corresponding correlation function $\rho_{\theta_i}$ is ``narrower''
and should require less separation than ``wider'' correlation
functions with large values of $\sigma_i$ and~$D_i$.  This is illustrated
in \Cref{fig:width}, where we depict $\rho_{\theta_i}$ for the
heat-source localization problem at three different values of~$i$.  The decay becomes more pronounced towards the right due to
the changing thermal conductivity of the underlying medium. For the problem to be well posed, one would expect $\theta_1$ to require more separation from other active sources than $\theta_2$, which in turn should require more than $\theta_3$. We confirm this intuition through numerical experiments in \Cref{sec:numexact}. To make it mathematically precise, we define the following generalized notion of support separation.

\begin{definition}[Generalized support Separation]
  \label{def:separation}
  Suppose for all $\theta_i\in\Theta$ that $\rho_{\theta_i}$ satisfies
  \Cref{def:decay} with parameters $D_i$ and $\sigma_i$.
  Define the normalized distance $d(\theta_i,\theta_j)$ for
  $i\neq j$ by  
  \begin{equation}
    d(\theta_i,\theta_j) = \frac{|\theta_i-\theta_j|-D_i-D_j}{\max(\sigma_i,\sigma_j)}.
  \end{equation}
  Assume that $\Theta$ is ordered so that
  $\theta_1<\theta_2<\cdots<\theta_{{k}}$.  $\Theta$
  has separation $\Delta>0$ if 
  $d(\theta_i,\theta_j) > |i-j|\Delta$
  for all $\theta_i,\theta_j\in\Theta$ with $i\neq j$.
\end{definition}
The normalized distance $d(\theta_i,\theta_j)$ is measured between the
edges of the decay regions of $\theta_i$ and~$\theta_j$, and
normalized by the level of decay.  This allows sharply decaying
correlation functions to be in close proximity with one another.  We
require $d(\theta_i,\theta_j) > |i-j|\Delta$ to prevent the parameters
from becoming too clustered.  If we only require the
weaker condition $d(\theta_i,\theta_j) > \Delta$, and if $\sigma_i$
grows very quickly with $i$, then we could have $d(\theta_1,\theta_j)\approx
\Delta$ for all $j>i$.  This causes too much overlap between the
correlation functions.

\begin{figure}[t]
  \centering \includegraphics{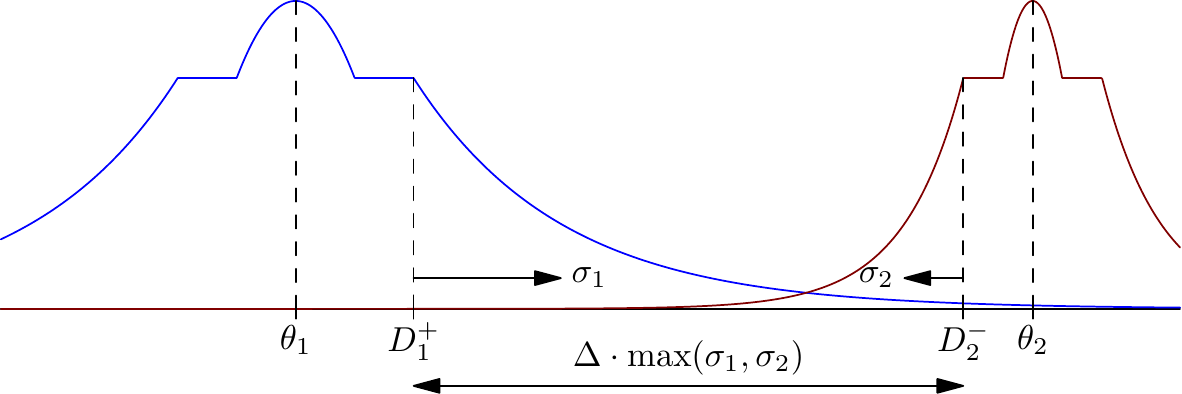}
  \caption{Illustration of the generalized support-separation condition in \Cref{def:separation}.}
  \label{fig:sepgen}
\end{figure}

\begin{figure}[t]
  \centering \includegraphics{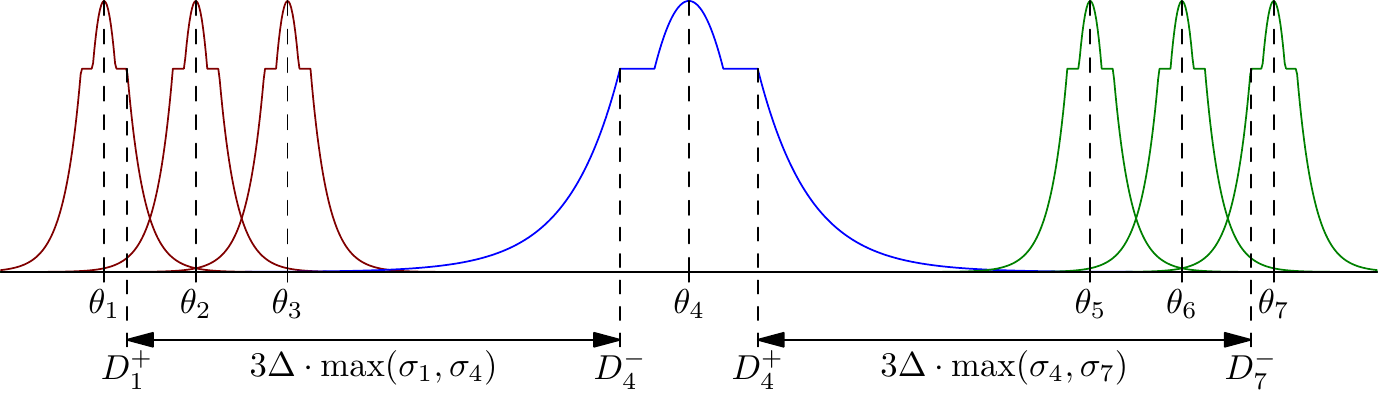}
  \caption{Non-trivial configuration allowed by \Cref{def:separation}.
    The central parameter $\theta_4$ has weaker decay, and thus
    requires more separation from the other parameters.}
  \label{fig:vacuous}
\end{figure} 

\Cref{fig:vacuous} gives an example of parameters and correlation functions that satisfy the conditions
of \Cref{def:separation}. The following theorem establishes exact-recovery guarantees under a condition on the generalized support separation.
  
\begin{theorem}
  \label{thm:exact1d}
  Suppose that for all $\theta_i\in\Theta$ that $\rho_{\theta_i}$ satisfies
  \Cref{def:near,def:intermediate,def:decay} and
  \eqref{eq:algebraic} with constants $N:=N_i$, $D:=D_i$, $\sigma:=\sigma_i$, $C$,
  and $\gamma$.  Note that $C$ and $\gamma$ are the same for each
  $\theta_i$.  Then the
  true measure $\mu$ defined in \eqref{eq:x_def} is the unique
  solution to Problem~\eqref{pr:min_TV} when $\Theta$ has separation
  $\Delta$ (as determined by \Cref{def:separation}) satisfying
  \bdb{
  \begin{equation}
    \Delta> \max\brac{\log(1+2(C_{0,0}+C_{1,1})),\lambda_1,\lambda_2},
    \label{eq:DeltaCond1}
  \end{equation}
  \eqref{eq:DeltaCond2}, \eqref{eq:DeltaCond3}, and \eqref{eq:algebraic}.}
\end{theorem}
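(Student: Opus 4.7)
The plan is to prove \Cref{thm:exact1d} via a dual-certificate argument, which is the standard route for establishing uniqueness of the TV-minimizer in this type of continuous sparse-recovery problem. Specifically, it suffices to exhibit a function of the form $q(\eta)=v^T\vphi(\eta)$ for some $v\in\RR^n$ satisfying the interpolation conditions $q(\theta_i)=\sgn(c_i)$ for every $\theta_i\in\Theta$ together with the strict inequality $|q(\eta)|<1$ for all $\eta\notin\Theta$. A convenient ansatz is
\begin{equation*}
q(\eta)=\sum_{i=1}^{k}\Bigl[\alpha_i\,\rho_{\theta_i}(\eta)+\beta_i\,\tfrac{\rho^{(1,0)}_{\theta_i}(\eta)}{\normTwo{\vphi^{(1)}(\theta_i)}^2}\Bigr],
\end{equation*}
which automatically lies in the range of $\vphi(\cdot)^T$. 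The coefficients $\alpha_i,\beta_i$ are determined by imposing the $2k$ conditions $q(\theta_i)=\sgn(c_i)$ and $q'(\theta_i)=0$ for each $i$; the latter condition forces each $\theta_i$ to be a critical point of $q$ and will later make $|q|<1$ away from $\theta_i$ easy to verify near the support.

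The first technical step is to show that this $2k\times 2k$ interpolation system is invertible and that its solution $(\alpha,\beta)$ is close to the trivial one $(\sgn(c),0)$ that would arise if the bumps did not interact. This is where the generalized separation condition of \Cref{def:separation} enters: by applying \Cref{def:decay} together with the algebraic constraints~\eqref{eq:algebraic}, I can bound the off-diagonal blocks of the interpolation matrix by geometric sums of the form $\sum_{j\neq i} e^{-(|\theta_i-\theta_j|-D_i-D_j)/\max(\sigma_i,\sigma_j)}$, which by $d(\theta_i,\theta_j)>|i-j|\Delta$ are controlled by a convergent series in $e^{-\Delta}$. The lower bound on $\Delta$ in \eqref{eq:DeltaCond1}, specifically the term $\log(1+2(C_{0,0}+C_{1,1}))$, is chosen precisely so that the off-diagonal blocks have norm strictly less than the diagonal, yielding invertibility via Neumann series and explicit bounds on $|\alpha_i-\sgn(c_i)|$ and $|\beta_i|$.

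With these coefficient bounds in hand, the second and harder step is to establish $|q(\eta)|<1$ for $\eta\notin\Theta$, split into three cases according to the regions in \Cref{def:near,def:intermediate,def:decay}. In the far region, where $\eta$ lies outside every $[D_i^-,D_i^+]$, I apply the decay bounds~\eqref{eq:decaycond1}--\eqref{eq:decaycond2} termwise and sum over $i$ using the geometric series argument controlled by \Cref{def:separation}; the threshold $\lambda_1$ in~\eqref{eq:DeltaCond2} is calibrated so that this sum stays strictly below $1$. In the intermediate region around some $\theta_{i_0}$, I split $q$ into the self-contribution (indexed by $i_0$) and the cross-contribution from the other $\theta_j$; the self-contribution is bounded by $\gamma_2$ and $\gamma_3$ using \Cref{def:intermediate}, and the cross-contributions by the decay condition, giving the $\lambda_1$ bound. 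In the near region around $\theta_{i_0}$, the idea is to show that $|q|<1$ by controlling $q''$: using $\rho^{(0,2)}_{\theta_{i_0}}(\eta)\leq -\gamma_0$ together with bounds on $\rho^{(1,2)}_{\theta_{i_0}}$ and the decay bound for cross terms, I show $q''$ is strictly negative on a neighborhood of each $+1$ interpolation point (and strictly positive near each $-1$ point), so that together with $q(\theta_{i_0})=\pm1$ and $q'(\theta_{i_0})=0$ the certificate peels away from $\pm 1$ strictly; this is where $\lambda_2$ from~\eqref{eq:DeltaCond3} originates.

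The main obstacle is the careful bookkeeping in the nonuniform setting: unlike the uniform case of \Cref{thm:exact1d_simp}, every constant $\sigma_i,D_i,N_i$ varies with $i$, so the bound on a quantity centered at $\theta_{i_0}$ must be expressed in terms of $\sigma_{i_0}$ while cross-contributions from $\theta_j$ involve $\sigma_j$. Using $\max(\sigma_i,\sigma_j)$ in the normalized distance is what permits the geometric-series estimate to close with a single parameter $\Delta$, and verifying that all three regional bounds hold simultaneously under the \emph{single} threshold~\eqref{eq:DeltaCond1} is the delicate part. Once $|q|<1$ off $\Theta$ is established, a standard duality argument (noting that for any feasible $\tilde\mu\neq\mu$, pairing $q$ with $\tilde\mu-\mu$ yields $\normTV{\tilde\mu}>\normTV{\mu}$ because of strict inequality on $\mathrm{supp}(\tilde\mu)\setminus\Theta$) finishes the proof of uniqueness.
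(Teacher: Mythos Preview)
Your proposal is correct and follows essentially the same approach as the paper: the same dual-certificate ansatz, the same interpolation system solved via a Neumann/Schur argument under the threshold $\log(1+2(C_{0,0}+C_{1,1}))$, and the same concavity argument in the near region producing $\lambda_2$. One small organizational point: the paper uses only a \emph{two}-region split (near vs.\ not-near), because \Cref{def:intermediate} holds for all $\eta$ outside $[N_i^-,N_i^+]$, not just on a bounded annulus; thus the nearest support point is always controlled by $\gamma_2,\gamma_3$ (never by the raw decay constant $C_{0,0}$, which need not be $<1$), and this is precisely what yields the $\lambda_1$ threshold---your separate ``far region'' case is subsumed.
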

The proof of \Cref{thm:exact1d}, which implies \Cref{thm:exact1d_simp}, is given in \Cref{sec:exact1dproof}. The theorem establishes that TV minimization recovers the true parameters of SNL problems with nonuniform correlation decays when the generalized support separation is larger than a constant. Equivalently, exact recovery is achieved as long as each true parameter $\theta_i$ is separated from the rest by a distance that is proportional to the rate of decay of the correlation function centered at $\theta_i$. The separation is measured from the edges of the intermediate regions, which can also vary in width as depicted in \Cref{fig:sep}. The result matches our intuition about SNL problems: the parameters can be recovered as long as they yield measurements that are not highly correlated. As mentioned previously, the theorem requires decay conditions on the derivatives of the correlation function, which constrain the correlation structure of the measurement operator. 

\subsection{Robustness to Noise}
\label{sec:robustness}
In practice, measurements are always corrupted by noisy
perturbations. Noise can be taken into account in our measurement
model \eqref{eq:snl_model} by incorporating an additive noise vector $z\in\RR^n$:
\begin{equation}
  \label{eq:snl_model_noise}
  y:= \sum_{i=1}^kc_i\vphi(\theta_i) +z.
\end{equation}
To adapt the TV-norm minimization problem \eqref{pr:min_TV} to such measurements, we relax the data consistency constraint from an equality to an inequality:
\begin{equation}
\label{pr:min_TV_noise}
  \begin{aligned}
    \underset{\tilde{\mu}}{\op{minimize}} \quad& \normTV{ \tilde{\mu} } \\ 
    \text{subject to} \quad&
    \normTwo{\int_{\R^p} \vec{\phi}(\theta) \tilde{\mu}
      \diffbrac{\theta} - y} \leq \xi,
  \end{aligned}
\end{equation}
where $\xi>0$ is a parameter that must be tuned according to the noise level. Previous works have established robustness guarantees for TV-norm minimization applied to specific SNL problems such as super-resolution~\cite{support_detection,superres_noisy} and deconvolution~\cite{bernstein2017deconvolution} at small noise levels. These proofs are based on dual certificates. Combining the arguments in~\cite{support_detection,bernstein2017deconvolution} with our dual-certificate construction in \Cref{sec:exact1dproof} yields robustness guarantees for general SNL problems in terms of support recovery at high signal-to-noise regimes. We omit the details, since the proof would essentially mimic the ones in~\cite{support_detection,bernstein2017deconvolution}.
\bdb{In \Cref{fig:NoiseExperiment} we show an application of
  \Cref{pr:min_TV_noise} to a noisy deconvolution problem taken from \cite{bernstein2017deconvolution}.}
\begin{figure}[tp]
\centering
\includegraphics{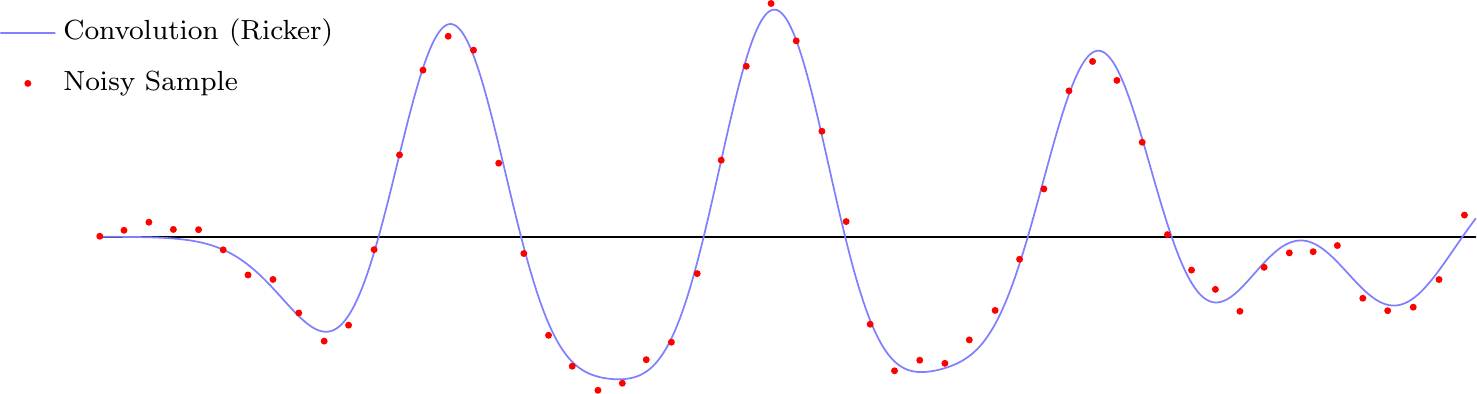}\vspace{.25cm}
\includegraphics{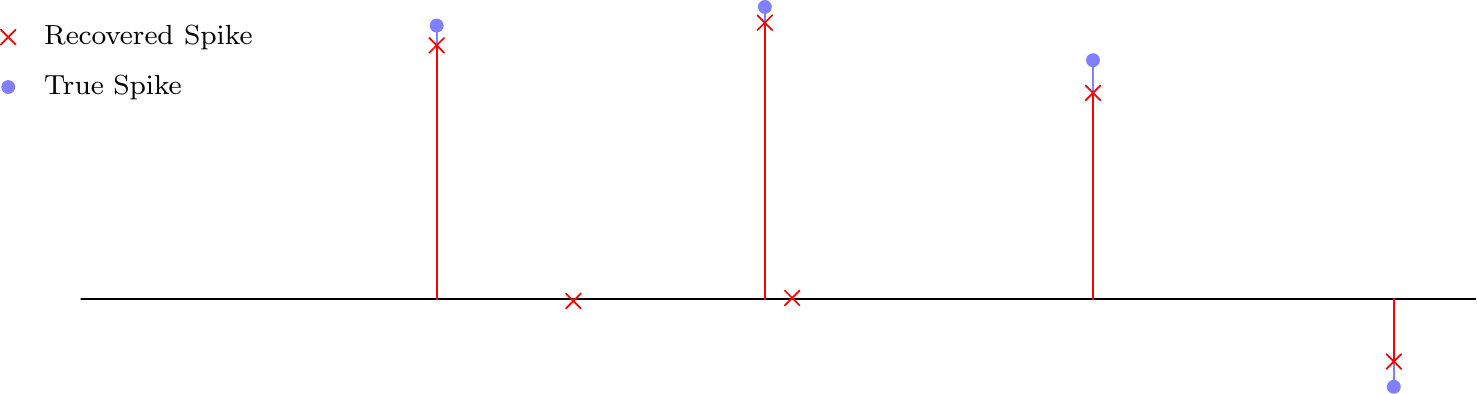}
\caption{Deconvolution from noisy samples (in red) with a
  signal-to-noise ratio of 20.7 dB for the Ricker wavelet.
  The noise is iid Gaussian. The recovered signals are obtained by solving
  \cref{pr:min_TV_noise} on a fine grid which contains the location
  of the original spikes.}
\label{fig:NoiseExperiment}
\end{figure}
\subsection{Discretization}
\label{sec:discretization}
The continuous optimization
problem~\eqref{pr:min_TV_noise} can be solved by applying $\ell_1$-norm minimization after discretizing the parameter space. This is a very popular approach in practice for a variety of SNL problems~\cite{Malioutov:2005jw,taylor1979deconvolution,zhao2011localization,silva2004evaluation,mcmrf}. If the true parameters lie on the discretization grid, then our exact-recovery results translate immediately. The following corollary is a discrete version of \Cref{thm:exact1d}.
\begin{corollary}
  \label{thm:exact1d_l1}
  Assume $\Theta$ lies on a known discretized grid
  $G:=\{\eta_1,\ldots,\eta_m\}$ so that $\Theta\subset G$.
  Furthermore, suppose the conditions of \Cref{thm:exact1d} hold so
  that $\mu$ as defined in \eqref{eq:x_def} is the unique solution to
  Problem~\eqref{pr:min_TV}.  Define the dictionary
  $\Phi\in\RR^{n\times m}$ by
  \begin{equation}
    \Phi := \MAT{\vphi(\eta_1) & \cdots & \vphi(\eta_m)}.
  \end{equation}
  Then the $\ell_1$-norm minimization problem
  \begin{equation}
    \label{pr:min_l1_corr}
    \begin{aligned}
      \underset{\tilde{x}}{\op{minimize}} \quad& \normOne{ \tilde{x} }
      \\\op{subject~to} \quad& \Phi\tilde{x}=y
    \end{aligned}
  \end{equation}
  has a unique solution $x\in\RR^m$ satisfying
  \begin{equation}
    x_j :=
    \begin{cases}
      c_i & \text{if $\eta_j=\theta_i$,}\\ 0 & \text{otherwise,}
    \end{cases}
  \end{equation}
  for $j=1,\ldots,m$.
\end{corollary}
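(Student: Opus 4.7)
The plan is to lift the discrete $\ell_1$ problem into the continuous TV problem and invoke the uniqueness guaranteed by \Cref{thm:exact1d}. Concretely, I will exhibit a natural norm-preserving map from grid-supported vectors to atomic measures that sends feasibility for \eqref{pr:min_l1_corr} to feasibility for \eqref{pr:min_TV}, so that any minimizer of the discrete problem lifts to a minimizer of the continuous problem and must therefore coincide with $\mu$.

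First I would verify that the candidate vector $x$ defined in the statement is feasible and has the correct objective value. Since $\Theta\subset G$, one has
\begin{equation}
\Phi x = \sum_{j=1}^m x_j\,\vphi(\eta_j) = \sum_{i=1}^k c_i\,\vphi(\theta_i) = y,
\end{equation}
so $x$ is feasible for \eqref{pr:min_l1_corr}, and $\normOne{x}=\sum_{i=1}^k|c_i|=\normTV{\mu}$.

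Next, for any feasible $\tilde x\in\RR^m$ of \eqref{pr:min_l1_corr}, define the atomic measure
\begin{equation}
\tilde\mu := \sum_{j=1}^m \tilde x_j\,\delta_{\eta_j}.
\end{equation}
Because the grid points $\eta_1,\ldots,\eta_m$ are distinct, the TV norm of a discrete measure equals the $\ell_1$ norm of its weights, so $\normTV{\tilde\mu}=\normOne{\tilde x}$. Moreover, $\int \vphi(\theta)\,\tilde\mu(d\theta)=\Phi\tilde x=y$, so $\tilde\mu$ is feasible for \eqref{pr:min_TV}. In particular, if $\tilde x$ is optimal for \eqref{pr:min_l1_corr} then $\normTV{\tilde\mu}=\normOne{\tilde x}\leq\normOne{x}=\normTV{\mu}$, which shows $\tilde\mu$ is also optimal for \eqref{pr:min_TV}.

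By \Cref{thm:exact1d}, $\mu$ is the unique TV-minimizer, hence $\tilde\mu=\mu$. Equating the two representations $\sum_j \tilde x_j\,\delta_{\eta_j}=\sum_i c_i\,\delta_{\theta_i}$ and using the distinctness of the $\eta_j$ forces $\tilde x_j = c_i$ whenever $\eta_j=\theta_i$ and $\tilde x_j=0$ otherwise; this is exactly the vector $x$ of the statement, giving both existence and uniqueness. There is no real obstacle here, since the only nontrivial ingredient is \Cref{thm:exact1d} itself; the only thing to be careful about is the norm-preservation $\normTV{\sum_j \tilde x_j\delta_{\eta_j}}=\normOne{\tilde x}$, which relies on the atoms being supported at distinct points.
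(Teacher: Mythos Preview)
Your proposal is correct and follows essentially the same approach as the paper: identify the discrete $\ell_1$ problem with the continuous TV problem restricted to measures supported on $G$, and use the uniqueness of $\mu$ from \Cref{thm:exact1d} to conclude. Your write-up is in fact more detailed and careful than the paper's three-line proof, but the idea is the same.
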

\begin{proof}
  If the support of $\tilde{\mu}$ in Problem~\eqref{pr:min_TV} is
  restricted to lie on $G$, then Problems~\eqref{pr:min_TV} and
  \eqref{pr:min_l1_corr} are the same.  Thus $\normTV{\mu}$ must be
  smaller than $\normOne{\tilde{x}}$ for any $\tilde{x}$ such that $\Phi\tilde{x}=y$.  By
  assumption $\mu$ is supported on $G$, so the result follows.
\end{proof}

Of course, the true parameters may not lie on the grid used to solve the $\ell_1$-norm minimization problem. The proof techniques used to derive robustness guarantees for super-resolution and deconvolution in~\cite{support_detection,bernstein2017deconvolution} can be leveraged to provide some control over the discretization error. Performing a more accurate analysis of discretization error for SNL problems is an interesting direction for future research.

\subsection{Related Work}
\label{sec:rw}
\subsubsection{Sparse Recovery via Convex Programming from Deterministic Measurements}
\label{sec:rwinco}

In \cite{Donoho2197}, Donoho and Elad develop a theory of sparse recovery from generic measurements based on the \textit{spark} and \textit{coherence} of the measurement matrix $\Phi$.
The spark is defined to be the smallest positive value $s$ such that $\Phi$ has $s$ linearly dependent columns.  
The coherence,
which we denote by $M(\Phi)$, is the maximum absolute correlation between any
two columns of $\Phi$. 
The authors show that exact recovery is achieved by
$\ell_1$-minimization when the number of true parameters is less than
$(1+1/M(\Phi))/2$.  
As discussed in
\Cref{sec:cs}, these arguments are inapplicable to the
finely-discretized parameter spaces occurring in SNL problems since neighboring
columns of $\Phi$ have correlations approaching one.
In \cite{dossal2010numerical}, the authors provide a support-dependent condition for exact
recovery of discrete vectors.  Using our notation, they require that $\beta(\Theta)/(1-\alpha(\Theta))<1$
where
\begin{equation}
  \alpha(\Theta) := \max_{\theta_i\in \Theta}
  \sum_{\theta_k\in\Theta,k\neq i}|\vphi(\theta_i)^T\vphi(\theta_k)|
  \quad\text{and}\quad
  \beta(\Theta) := \max_{\eta_j\notin \Theta}\sum_{\theta_k\in \Theta}|\vphi(\eta_j)^T\vphi(\theta_j)|.
\end{equation}
Here $\vphi(\eta_j)$ for $\eta_j\notin\Theta$ ranges over the columns of $\Phi$ that do not
correspond to the true parameters.  This condition is also inapplicable for matrices arising in our problems of interest because $\beta(\Theta)$ approaches one (or
larger) for finely-discretized parameter spaces. Sharper exact-recovery guarantees in subsequent works~\cite{chandrasekaran2012convex,donoho2006compressed,needell2009cosamp,candes2005decoding,bickel2009simultaneous,candes2008restricted} require randomized measurements, and therefore do not hold for deterministic SNL problems as explained in \Cref{sec:cs}.

\subsubsection{Convex Programming Applied to Specific SNL Problems}
\label{sec:rwsnltheory}
In \cite{superres,superres_new}, the authors establish exact recovery guarantees for super-resolution via convex optimization by leveraging parameter separation (see \Cref{sec:separation}). Subsequent works build upon these results to study the robustness of this methodology to noise~\cite{tang2014robust,support_detection,superres_noisy,peyreduval}, missing data~\cite{tang2012offgrid}, and outliers~\cite{fernandez2017demixing}.
A similar analysis is carried out in
\cite{bernstein2017deconvolution} for deconvolution. The authors
establish a sampling theorem for Gaussian and Ricker-wavelet
convolution kernels, which characterizes what sampling patterns yield
exact recovery under a minimum-separation condition. Other works have
analyzed the Gaussian deconvolution problem under nonnegativity
constraints~\cite{schiebinger2015superresolution,eftekhari2018sparse},
and also for randomized measurements~\cite{poon2018dual}. All of these
works exploit the properties of specific measurement operators. In
contrast, the present paper establishes a general theory that only
relies on the correlation structure of the measurements. The works
that are closer to this spirit
are~\cite{bendory2016robust,tang2013atomic}, which analyze
deconvolution via convex programming for generic convolution
kernels. The results in \cite{bendory2016robust} require quadratically
decaying bounds on the first three derivatives of the convolution
kernel. In \cite{tang2013atomic}, the authors prove exact recovery
assuming bounds on the first four derivatives of the autocorrelation
function of the convolution kernel. In contrast to these works, our
results allow for discrete irregular sampling and for measurement
operators that are not convolutional, which is necessary to analyze applications such as heat-source localization or estimation of brain
activity. 

\bdb{
  Algorithms for solving the convex programs arising from SNL problems
  divide into two categories: grid-based and grid-free (or
  off-the-grid). In grid-based
  methods the parameter space is discretized, thus yielding a
  finite-dimensional $\ell_1$ minimization problem that can be solved
  using standard methods (as we have done in \Cref{sec:numerical}).
  Grid-free methods attempt to directly solve the infinite-dimensional,
  but with weaker guarantees (see \cite{hettich1993semi} and
  \cite{lopez2007semi} for review articles).  Recently, several authors
  \cite{denoyelle2019sliding,eftekhari2019sparse,boyd2017alternating} have
  proposed grid-free algorithms based on the conditional gradient method
  \cite{frank1956algorithm}.
}
\subsubsection{Other Methodologies}
SNL parameter recovery can be formulated as a nonlinear least squares
problem \cite{golub2003separable}. For a fixed value of the parameters $\theta_1,\ldots,\theta_k$, the optimal coefficients $c_1,\ldots,c_k$ in \eqref{eq:snl_cts} have a closed form solution. This makes it possible to minimize the nonlinear cost function with respect to $\theta_1,\ldots,\theta_k$ directly, a technique known as variable projection~\cite{golub1973differentiation}. As shown in \Cref{fig:varproj}, a downside to this approach is that it may converge to suboptimal local minima, even in the absence of noise.


Prony's method~\cite{deProny:tg,Stoica:2005wf} and the
finite-rate-of-innovation (FRI)
framework~\cite{vetterli2002sampling,dragotti2007sampling}
can be applied to tackle
SNL problems, as long one can recast them as spectral super-resolution
problems.  This provides a recovery method that avoids discretizing
the parameter space.  
The FRI framework has also been applied to arbitrary
non-bandlimited convolution kernels~\cite{uriguen2013fri} and
nonuniform sampling patterns~\cite{pan2016towards}, but
without exact-recovery guarantees.
These techniques have been recently extended by Dragotti and Murray-Bruce
\cite{murray2017universal} to physics-driven SNL problems.
By approximating complex exponentials with weighted sums of Green's functions,
they are able to recast parameter recovery as a related spectral
super-resolution problem that approximates the true SNL problem.

\section{Proof of Exact-Recovery Results}
\label{sec:exact1dproof}
\label{sec:RecoveryCert}
\subsection{Dual Certificates}
\label{sec:DualCert}
To prove \Cref{thm:exact1d} we construct a
\textit{certificate} that guarantees exact recovery.
\begin{proposition}[Proof in \Cref{sec:dualcertproof}]
  \label{thm:dualcert}
  Let $\Theta=\{\theta_1,\ldots,\theta_{k}\}\subset\RR$
  denote the support of the signed
  atomic measure $\mu$.  Assume that for any 
  sign pattern $\xi\in\{\pm1\}^{k}$ there is a $\tilde{q}\in\RR^p$
  such that $\widetilde{Q}(\theta):=\tilde{q}^T\vphi(\theta)$
  satisfies
  \begin{align}
    \widetilde{Q}(\theta_i)=\xi_i,&\quad\quad\forall\theta_i\in \Theta,\label{eq:certinterp}\\
    |\widetilde{Q}(\theta)|<1,&\quad\quad\forall \theta\in\Theta^c.\label{eq:certbound}
  \end{align}
  Then $\mu$ is the
  unique solution to problem~\eqref{pr:min_TV}.
\end{proposition}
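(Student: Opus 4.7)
The plan is to execute the classical dual-certificate argument for TV-norm minimization, as developed in the super-resolution literature. I would assume toward contradiction that $\tilde\mu$ is feasible for \eqref{pr:min_TV} with $\normTV{\tilde\mu}\leq\normTV{\mu}$ and $\tilde\mu\neq\mu$, and set $\nu := \tilde\mu-\mu$, a nonzero signed measure satisfying $\int \vphi\,d\nu=0$ by feasibility of both measures. I would decompose $\nu = \nu_\Theta + \nu_{\Theta^c}$ into its restrictions to $\Theta$ and to its complement, pick the sign pattern $\xi_i := \sgn(c_i)\in\{\pm 1\}$, and invoke the hypothesis to obtain the dual certificate $\widetilde Q = \tilde q^T\vphi$.

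The argument then chains three standard inequalities. Since $\widetilde Q$ lies in the range of the measurement operator's adjoint and $\nu$ is in its null space, $\int \widetilde Q\,d\nu = \tilde q^T\int\vphi\,d\nu = 0$. Because $\mu+\nu_\Theta$ and $\nu_{\Theta^c}$ are mutually singular, $\normTV{\tilde\mu} = \normTV{\mu+\nu_\Theta}+\normTV{\nu_{\Theta^c}}$. The interpolation condition $\widetilde Q(\theta_i)=\xi_i$ gives the pointwise bound $|c_i+\nu_\Theta(\{\theta_i\})|\geq\xi_i(c_i+\nu_\Theta(\{\theta_i\})) = |c_i|+\widetilde Q(\theta_i)\nu_\Theta(\{\theta_i\})$, and summing in $i$ yields $\normTV{\mu+\nu_\Theta}\geq\normTV{\mu}+\int \widetilde Q\,d\nu_\Theta$. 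On $\Theta^c$ the strict bound $|\widetilde Q|<1$, together with continuity of $\widetilde Q$ and inner regularity of $|\nu_{\Theta^c}|$, forces $\int \widetilde Q\,d\nu_{\Theta^c}<\normTV{\nu_{\Theta^c}}$ whenever $\nu_{\Theta^c}\neq 0$. Adding the three pieces and substituting $\int\widetilde Q\,d\nu=0$ gives $\normTV{\tilde\mu}>\normTV{\mu}$, contradicting optimality.

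The degenerate case $\nu_{\Theta^c}=0$ and $\nu_\Theta\neq 0$ is handled via a different consequence of the hypothesis: since \emph{every} sign pattern $\xi\in\{\pm1\}^k$ is interpolated by some $\tilde q^T\vphi$ on $\Theta$, the linear map $q\mapsto(q^T\vphi(\theta_i))_{i=1}^k$ from $\RR^n$ to $\RR^k$ is surjective, so $\vphi(\theta_1),\ldots,\vphi(\theta_k)$ are linearly independent. Writing $\nu_\Theta=\sum_i a_i\delta_{\theta_i}$ and combining with $\int\vphi\,d\nu_\Theta=\sum_i a_i\vphi(\theta_i)=0$ forces each $a_i=0$, contradicting $\nu_\Theta\neq 0$.

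The main obstacle is the step that upgrades the pointwise strict inequality $|\widetilde Q|<1$ on $\Theta^c$ to a strict inequality for the integral against $\nu_{\Theta^c}$. This is not automatic: $\widetilde Q$ may approach $1$ in absolute value as we approach $\Theta$, so no uniform bound $|\widetilde Q|\leq 1-\epsilon$ holds on all of $\Theta^c$. The standard remedy, which I would use, is to invoke inner regularity of the positive measure $|\nu_{\Theta^c}|$ to extract a compact $K\subset\Theta^c$ of positive mass, and then use continuity of $\widetilde Q$ on the compact set $K$ to obtain a uniform bound $|\widetilde Q|\leq 1-\epsilon$ there, yielding the desired strict inequality.
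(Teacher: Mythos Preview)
Your proof is correct, but it differs from the paper's in one instructive way: the choice of sign pattern. You take $\xi_i=\sgn(c_i)$, the signs of the \emph{original} measure, whereas the paper takes $\xi=\sgn(b)$, the signs of the \emph{difference} $h_\Theta=\nu_\Theta$ on $\Theta$. With the paper's choice one gets the exact identity $\int Q\,dh_\Theta=\normTV{h_\Theta}$, so the relation $0=\int Q\,dh=\normTV{h_\Theta}+\int Q\,dh_{\Theta^c}$ immediately forces $\normTV{h_\Theta}=0$ whenever $h_{\Theta^c}=0$; no separate linear-independence argument is needed. Your choice instead yields only the inequality $\normTV{\mu+\nu_\Theta}\geq\normTV{\mu}+\int\widetilde Q\,d\nu_\Theta$, which is why you must handle the degenerate case $\nu_{\Theta^c}=0$ via the surjectivity/linear-independence observation. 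That observation is valid (the image of $q\mapsto(q^T\vphi(\theta_i))_i$ contains $\{\pm1\}^k$, which spans $\RR^k$), so the argument goes through, but the paper's route is slightly slicker and explains why the hypothesis is stated for \emph{every} sign pattern rather than just for $\sgn(c)$. On the other hand, you are more careful than the paper about the strictness step: the paper simply asserts that $|Q|<1$ on $\Theta^c$ makes the final inequality strict when $h_{\Theta^c}\neq0$, while you correctly spell out the inner-regularity/compactness argument needed to pass from the pointwise strict bound to a strict integral inequality.
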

To prove exact recovery of a signal we need to show that
it is possible to interpolate the sign pattern of its amplitudes, which we denote by $\xi$, on its support
$\Theta$ using an interpolating function $\widetilde{Q}(\theta)$ that is expressible as a
linear combination of the coordinates of $\vphi(\theta)$.
The coefficient vector of this linear combination, denoted by
$\tilde{q}$,  is known as a \textit{dual certificate} in
the literature because it certifies recovery and is a solution to the
Lagrange dual of problem~\eqref{pr:min_TV}:
\begin{equation}
  \label{pr:TVNormDual}
\begin{array}{ll}
  \underset{ \nu }{\text{maximize}} & \nu^Ty\\
  \text{subject to} & \sup_\theta\left|\nu^T\vphi(\theta)\right|\leq1.
\end{array}
\end{equation}
Dual certificates have been widely used to derive guarantees for inverse
problems involving random measurements, including compressed
sensing~\cite{Candes:2005cs,candes2011probabilistic}, matrix
completion~\cite{candes2012exact} and phase
retrieval~\cite{candes2015phase}. In such cases, the certificates are usually 
constructed by leveraging concentration bounds and other tools from probability
theory~\cite{vershynin2010introduction}. In contrast, our setting is
completely deterministic.  More recently, dual certificates have been
proposed for specific deterministic problems such as super-resolution~\cite{superres} and
deconvolution~\cite{bernstein2017deconvolution}. Our goal here is to provide a construction that is valid for generic SNL models with correlation decay.

\subsection{Correlation-Based Dual Certificates}
\label{sec:CorrelationCert}
Our main technical contribution is a certificate that only depends on the correlation function of the measurement operator. In contrast, certificate constructions in previous works on SNL problems~\cite{superres,bernstein2017deconvolution,bendory2016robust}
typically rely on problem-specific structure, with the exception of~\cite{tang2013atomic} which proposes a certificate for time-invariant problems based on autocorrelation functions.
 
In our SNL problems of interest the function $\vphi(\theta)$ mapping the parameters of interest to the data is assumed to be continuous and smooth. As a result, \Cref{eq:certinterp,eq:certbound} imply that any valid interpolating function $\widetilde{Q}$ reaches a local extremum at each $\theta_i\in\Theta$. Equivalently, $\widetilde{Q}$ satisfies the following $2k$ \textit{interpolation equations} for all $\theta_i\in\Theta$:
\begin{align}
  \widetilde{Q}(\theta_i) & = \xi_i,\label{eq:interp1}\\
  \widetilde{Q}^{(1)}(\theta_i) &= 0.\label{eq:interp2}
\end{align}
Inspired by this observation, we define 
\begin{equation}
  q := \sum_{i=1}^{k}\alpha_i\vphi(\theta_i) +
  \beta_i\frac{\vphi^{(1)}(\theta_i)}{\normTwo{\vphi^{(1)}(\theta_i)}^2}
  \label{eq:qcert}
\end{equation}
where $\alpha_i,\beta_i\in\RR$, $i=1,\ldots,k$, are chosen so that $Q(\theta):=q^T\vphi(\theta)$ satisfies the interpolation equations. Crucially, this choice of coefficients yields an interpolation function $Q$ that is a linear combination of correlation functions centered at $\Theta$,
\begin{align}
  Q(\theta) & = \sum_{i=1}^k \alpha_i\vphi(\theta_i)^T\vphi(\theta) +
  \beta_i\frac{\vphi^{(1)}(\theta_i)^T\vphi(\theta)}{\normTwo{\vphi^{(1)}(\theta_i)}^2}\\
  & = \sum_{i=1}^k \alpha_i\rho_{\theta_i}(\theta) + 
  \beta_i\frac{\rho^{(1,0)}_{\theta_i}(\theta)}{\rho_{\theta_i}^{(1,1)}(\theta_i)}.\label{eq:Qcorr}
\end{align}

In essence, we interpolate the sign pattern of the signal on its support using support-centered correlations. Since $\rho_{\theta_i}(\theta_i)=1$ and
$\rho_{\theta_i}^{(1,0)}(\theta_i)=0$, \Cref{eq:interp1,eq:Qcorr} imply
$\alpha_i\approx\xi_i$ when $\rho(\theta,\eta)$ and
$\rho^{(1,0)}(\theta,\eta)$ decay as $|\theta-\eta|$ grows large and
 the support is sufficiently separated.  The term that depends on $\beta$ can be interpreted as a correction to the derivatives of $Q$ so that \eqref{eq:interp2} is satisfied.
The normalizing factor used in \eqref{eq:Qcorr} makes this explicit:
\begin{align}
\left.\frac{\partial}{\partial\theta}\right|_{\theta=\theta_i}
\beta_i\frac{\rho^{(1,0)}_{\theta_i}(\theta)}{\rho_{\theta_i}^{(1,1)}(\theta_i)}
= \beta_i
\end{align}
for $i=1,\ldots,k$. \Cref{fig:dualcert} illustrates the construction for the heat-source localization problem.
The construction is inspired by previous interpolation-based certificates tailored to super-resolution \cite{superres} and
deconvolution \cite{bernstein2017deconvolution}.

\begin{figure}[t]
  \centering
  \includegraphics{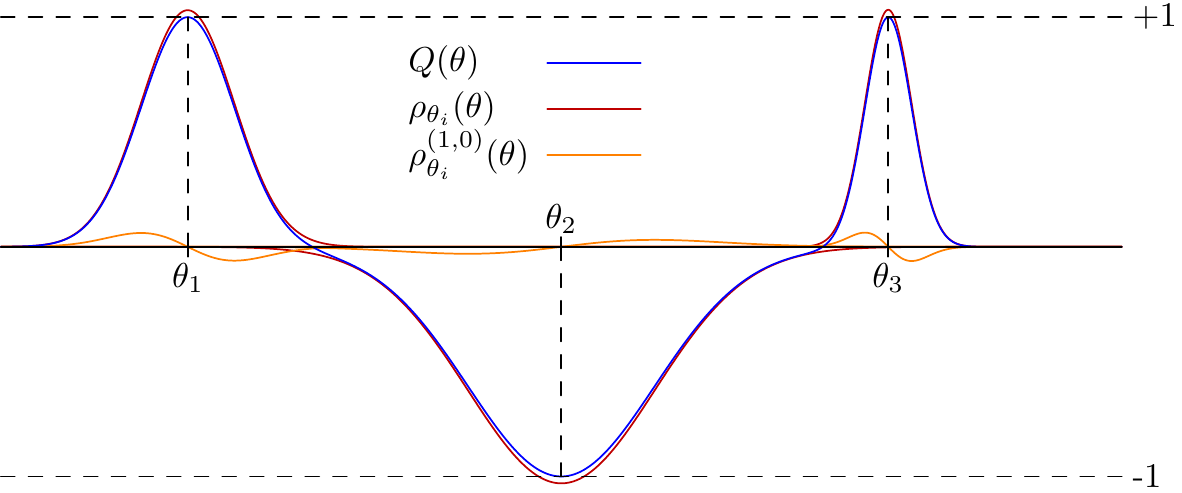}
  \caption{The image shows the interpolating function $Q(\theta)$ defined in \Cref{eq:Qcorr} for the heat-source localization problem. The function is a linear combination
    of $\rho_{\theta_i}(\theta)$ (red curves) and
    $\rho_{\theta_i}^{(1,0)}(\theta)$ (orange curves) for $\theta_i \in \Theta$.}
  \label{fig:dualcert}
\end{figure}

In the remainder of this section we show that our proposed construction yields a valid certificate if the conditions of
\Cref{thm:exact1d} hold. In \Cref{sec:invertproof} we
prove \Cref{thm:invert}, which establishes that the interpolation
equations have a unique solution and therefore $Q$ satisfies \eqref{eq:certinterp}.
\begin{lemma}
  \label{thm:invert}
  Under the assumptions of \Cref{thm:exact1d} there exist
  $\alpha,\beta\in\RR^{k}$ which uniquely solve \Cref{eq:interp1,eq:interp2}.
\end{lemma}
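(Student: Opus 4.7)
The plan is to express the interpolation conditions \eqref{eq:interp1}--\eqref{eq:interp2} as a $2k \times 2k$ linear system $M(\alpha^T,\beta^T)^T = (\xi^T,0^T)^T$, and to prove invertibility of $M$ by a perturbation argument. Reading off the entries of $Q$ and $Q^{(1)}$ from \eqref{eq:Qcorr} gives a natural $2\times 2$ block structure for $M$, with entries $\rho_{\theta_i}^{(0,0)}(\theta_j)$, $\rho_{\theta_i}^{(1,0)}(\theta_j)/\rho_{\theta_i}^{(1,1)}(\theta_i)$, $\rho_{\theta_i}^{(0,1)}(\theta_j)$ and $\rho_{\theta_i}^{(1,1)}(\theta_j)/\rho_{\theta_i}^{(1,1)}(\theta_i)$. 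The identities $\rho_{\theta_i}(\theta_i)=1$, $\rho_{\theta_i}^{(1,0)}(\theta_i)=\rho_{\theta_i}^{(0,1)}(\theta_i)=0$ recalled after \eqref{eq:def_rho} imply that the diagonal of $M$ equals $I_{2k}$, so $M = I_{2k} + E$ with $E$ having zero diagonal. By \Cref{def:decay}, the off-diagonal entries of the four blocks of $E$ are bounded by $C_{0,0}$, $C_{1,0}$, $C_{0,1}$ and $C_{1,1}$ respectively, each times $\exp(-(|\theta_j-\theta_i|-D_i)/\sigma_i)$.

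The key step is to balance these four constants in a single row-sum bound. To do so, I would conjugate by the diagonal matrix $T := \diag(I_k,\, t I_k)$ for a scalar $t>0$ to be chosen, and work with $TMT^{-1} = I_{2k} + TET^{-1}$. Setting $t := C_{1,0}/C_{1,1}$, which by the algebraic identity \eqref{eq:algebraic} also equals $C_{0,0}/C_{0,1}$, the entries of the $(1,2)$ block are divided by $t$ while those of the $(2,1)$ block are multiplied by $t$; the top-block row-sum prefactor $C_{0,0}+C_{1,0}/t$ and the bottom-block prefactor $t C_{0,1}+C_{1,1}$ then both collapse to $C_{0,0}+C_{1,1}$.

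Combining this with the generalized separation in \Cref{def:separation}, which yields $(|\theta_i-\theta_j|-D_i)/\sigma_i > |i-j|\Delta$ for all $i \neq j$, the off-diagonal row sums of $TET^{-1}$ admit the geometric estimate
\[
\sum_{i\neq j} e^{-(|\theta_i-\theta_j|-D_i)/\sigma_i} \;\leq\; 2\sum_{m=1}^\infty e^{-m\Delta} \;=\; \frac{2e^{-\Delta}}{1-e^{-\Delta}},
\]
so that $\normInf{TET^{-1}} \leq 2(C_{0,0}+C_{1,1})\,e^{-\Delta}/(1-e^{-\Delta})$. This quantity is strictly less than one exactly when $\Delta > \log(1+2(C_{0,0}+C_{1,1}))$, which is the first branch of \eqref{eq:DeltaCond1}. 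A Neumann-series argument then shows that $TMT^{-1}$, and hence $M$ itself, is invertible, so the interpolation equations admit a unique solution $(\alpha,\beta)\in\RR^{2k}$.

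The main obstacle is the block-balancing step: without a suitably chosen conjugation the four off-diagonal blocks of $M$ are controlled by four different decay constants, and one obtains only the looser threshold $\Delta > \log\bigl(1+2\max(C_{0,0}+C_{1,0},\,C_{0,1}+C_{1,1})\bigr)$. The algebraic constraint \eqref{eq:algebraic} is precisely what allows the diagonal conjugation above to collapse both row-sum bounds into the clean form involving $C_{0,0}+C_{1,1}$ that appears in \eqref{eq:DeltaCond1}; everything else reduces to a standard diagonal-dominance calculation.
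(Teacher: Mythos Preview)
Your argument is correct and reaches exactly the invertibility threshold $\Delta>\log(1+2(C_{0,0}+C_{1,1}))$ that the paper requires. The route, however, differs from the paper's. The paper writes the same block system $M=\begin{pmatrix} I+P^{(0,0)} & P^{(1,0)}\\ P^{(0,1)} & I+P^{(1,1)}\end{pmatrix}$ but, instead of conjugating by $T=\diag(I_k,tI_k)$, it proceeds via the Schur complement: first $\epsilon^{(1,1)}:=\|P^{(1,1)}\|_\infty\le C_{1,1}s<1$ shows $I+P^{(1,1)}$ is invertible, and then the Schur complement $\mathcal C=I+P^{(0,0)}-P^{(1,0)}(I+P^{(1,1)})^{-1}P^{(0,1)}$ satisfies $\|I-\mathcal C\|_\infty\le C_{0,0}s/(1-C_{1,1}s)<1$, where the algebraic identity \eqref{eq:algebraic} is used to cancel the cross term $C_{1,0}C_{0,1}-C_{0,0}C_{1,1}$. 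Both approaches exploit \eqref{eq:algebraic}, yours to equalize the two block-row prefactors after rescaling, the paper's to collapse the Schur-complement bound; and both land on the same condition on $\Delta$.

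What the paper's approach buys is not invertibility per se but the quantitative byproducts: solving the block system through the Schur complement yields the explicit bounds $\|\alpha\|_\infty\le(1-C_{1,1}s)/(1-(C_{0,0}+C_{1,1})s)$, $\|\beta\|_\infty\le C_{0,1}s/(1-(C_{0,0}+C_{1,1})s)$, and $1-\|\alpha-\xi\|_\infty\ge(1-(2C_{0,0}+C_{1,1})s)/(1-(C_{0,0}+C_{1,1})s)$ (this is the stronger \Cref{thm:schurCDelta}), which are then fed directly into the proof of \Cref{thm:qbound}. Your conjugation gives only $\|(\alpha,t\beta)\|_\infty\le 1/(1-(C_{0,0}+C_{1,1})s)$, so the resulting bound on $\|\beta\|_\infty$ scales like $C_{0,1}/C_{0,0}$ rather than like $C_{0,1}s$; this is harmless for \Cref{thm:invert} as stated, but would not suffice unchanged for the subsequent concavity and magnitude estimates.
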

In \Cref{sec:qboundproof} we prove \Cref{thm:qbound}, which establishes that
$Q$ satisfies \eqref{eq:certbound}.
This completes the proof of \Cref{thm:exact1d}.
\begin{lemma}
  \label{thm:qbound}
  Let $\alpha,\beta\in\RR^{k}$ be the coefficients obtained in \Cref{thm:invert}.
  Under the assumptions of \Cref{thm:exact1d} , the interpolating function $Q$ defined in \eqref{eq:Qcorr} satisfies
  $|Q(\theta)|<1$ for $\theta\notin\Theta$.  
\end{lemma}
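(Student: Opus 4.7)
My plan is to establish $|Q(\theta)|<1$ for $\theta\notin\Theta$ by splitting $\RR$ into three types of regions relative to the true parameters $\theta_i\in\Theta$: the near intervals $[N_i^-,N_i^+]$, the intermediate intervals $[D_i^-,N_i^-)\cup(N_i^+,D_i^+]$, and the far region outside every $[D_i^-,D_i^+]$. The two core inputs are (i) quantitative bounds on $|\alpha_i-\xi_i|$ and $|\beta_i|$ showing that the solution to the interpolation equations is a mild perturbation of the ``uncoupled'' values $(\xi_i,0)$, and (ii) the correlation bounds from \Cref{def:near,def:intermediate,def:decay} applied region by region.

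First I would quantify $\alpha$ and $\beta$. The interpolation system \eqref{eq:interp1}--\eqref{eq:interp2} is a $2k\times 2k$ linear system whose block structure was already analyzed to prove \Cref{thm:invert}. Because of the normalization built into \eqref{eq:Qcorr}, the diagonal blocks are the identity, and the off-diagonal entries are values of $\rho_{\theta_j}$, $\rho^{(1,0)}_{\theta_j}$, $\rho^{(0,1)}_{\theta_j}$ and $\rho^{(1,1)}_{\theta_j}$ at $\theta_i\neq\theta_j$, each of which lies in the decay region of $\theta_j$ by \Cref{def:separation}. Summing the exponential bounds from \Cref{def:decay} along $d(\theta_i,\theta_j)>|i-j|\Delta$ produces geometric series in $e^{-\Delta/2}$, and a Neumann series argument (or an $\ell_\infty$ Gershgorin-style estimate) yields $|\alpha_i-\xi_i|$ and $|\beta_i|$ bounded by explicit expressions that shrink with $\Delta$. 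The algebraic identities \eqref{eq:algebraic} are what make these bounds collapse into the clean quantities appearing inside $\lambda_1,\lambda_2$.

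The main obstacle, and the bulk of the argument, is the near region. Fix $\theta\in[N_i^-,N_i^+]$ and assume without loss of generality $\xi_i=+1$; I would show $Q(\theta)<1$ by proving $Q''(\theta)<0$ on this interval, and combining with $Q(\theta_i)=1$ and $Q'(\theta_i)=0$ (the interpolation conditions) via a second-order Taylor expansion. Differentiating \eqref{eq:Qcorr} twice in $\theta$, the ``self'' terms contribute $\alpha_i\rho^{(0,2)}_{\theta_i}(\theta)+\beta_i\rho^{(1,2)}_{\theta_i}(\theta)/\|\vphi^{(1)}(\theta_i)\|_2^2$, which by \Cref{def:near} is at most $-\gamma_0\alpha_i+\gamma_1|\beta_i|$. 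The ``cross'' terms (those with index $j\neq i$) have $\theta$ in the decay region of $\theta_j$, so \Cref{def:decay} bounds them by exponential tails that are summable thanks to \Cref{def:separation}. The condition $\Delta>\lambda_2$ from \eqref{eq:DeltaCond3} is calibrated exactly so that the leading $-\gamma_0$ term dominates the cross-term tails plus the $\gamma_1|\beta_i|$ contribution, which is the crucial quadratic-over-linear balance encoded in the square root in the denominator of $\lambda_2$.

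Finally, for $\theta$ in an intermediate region of some $\theta_i$, I would bound $|Q(\theta)|$ directly: the self contributions are controlled by $\gamma_2$ and $\gamma_3$ from \Cref{def:intermediate}, the cross contributions by exponential tails as in the previous paragraph, and combining them yields a bound of the form $(1-\gamma_2)^{-1}$ times a quadratic in $e^{-\Delta/2}$ that is strictly less than $1$ precisely when $\Delta>\lambda_1$, by \eqref{eq:DeltaCond2}. For $\theta$ in the pure far region, every term is bounded by \Cref{def:decay}, and the sum over $i$ combined with the separation gives the threshold $\Delta>\log(1+2(C_{0,0}+C_{1,1}))$ in \eqref{eq:DeltaCond1}. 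The expected difficulty is purely algebraic bookkeeping: one must track the perturbation estimates for $\alpha$ and $\beta$ through each region while ensuring the three threshold expressions $\log(1+2(C_{0,0}+C_{1,1}))$, $\lambda_1$, and $\lambda_2$ emerge naturally, and the identities \eqref{eq:algebraic} are the key tool that makes these expressions take the compact form stated in \Cref{thm:exact1d}.
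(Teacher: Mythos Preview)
Your outline tracks the paper's argument closely, but one attribution is off and would cause the far-region step to fail as written. The paper uses a \emph{two}-region split, not three: near versus not-near. The bounds $\gamma_2,\gamma_3$ in \Cref{def:intermediate} hold for all $\eta$ outside $[N^-,N^+]$, not merely on $[D_i^-,N_i^-)\cup(N_i^+,D_i^+]$. Accordingly the paper treats what you call the intermediate region and the pure far region identically: pick the $\theta_i$ nearest to $\theta$ in the normalized sense of \Cref{def:separation}, bound the self contribution by $\gamma_2,\gamma_3$, bound all cross contributions by decay tails (which, via a cluster lemma, sum to $x+s$ with $x=e^{-\Delta/2}$ and $s=2e^{-\Delta}/(1-e^{-\Delta})$), and arrive at the single threshold $\lambda_1$.

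The threshold $\log(1+2(C_{0,0}+C_{1,1}))$ is not a far-region condition at all: it is precisely the hypothesis of \Cref{thm:schurCDelta} that makes the interpolation system invertible and yields the estimates on $\|\alpha\|_\infty$, $\|\beta\|_\infty$, $\|\alpha-\xi\|_\infty$. It does not by itself give $|Q|<1$ anywhere. If in the far region you bound the self term using only \Cref{def:decay}, the best you obtain for the nearest $\theta_i$ is $\|\alpha\|_\infty C_{0,0}$ (since $|\theta-\theta_i|$ may barely exceed $D_i$), and $C_{0,0}$ can exceed $1$ (e.g.\ $C_{0,0}\approx 2.8$ in the paper's Gaussian example). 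You must still invoke $\gamma_2<1$ for the nearest $\theta_i$, at which point your far case collapses into your intermediate case and both are governed by $\lambda_1$. Apart from this, and the minor omission that the near-region argument also needs $Q>-1$ (which follows from strict concavity together with the already-established bound $|Q|<1$ at the endpoints $N_i^\pm$), your plan matches the paper.
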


\subsection{Proof of \Cref{thm:invert}}
\label{sec:invertproof}
To simplify notation we define the $i$th normalized
correlation and its derivatives by 
\begin{equation}
  \hrho_{\theta_i}^{(q,r)}(\theta) := \frac{\rho_{\theta_i}^{(q,r)}(\theta)}{\normTwo{\vphi^{(q)}(\theta_i)}^2},
\end{equation}
for $q=0,1$ and $r=0,1,2$, where \bdb{$\rho_{\theta_i}^{(q,r)}$} is defined in \Cref{eq:def_rho}.  Using this notation we have
\begin{align}
  Q(\theta)
  & = \sum_{i=1}^k \alpha_i\rho_{\theta_i}(\theta) + 
  \beta_i\frac{\rho^{(1,0)}_{\theta_i}(\theta)}{\rho_{\theta_i}^{(1,1)}(\theta_i)}\\
  & = \sum_{i=1}^k \alpha_i\hrho_{\theta_i}(\theta)+
  \beta_i\hrho_{\theta_i}^{(1,0)}(\theta).
\end{align}

To prove \Cref{thm:invert} we establish the following stronger
result, which also gives useful bounds on $\alpha$ and $\beta$.
Throughout we assume that $\rho$ satisfies
\Cref{def:near,def:intermediate,def:decay} and \eqref{eq:algebraic}
with parameters $\gamma$, $C$, $N$, $D$, and $\sigma$,
and that $\Theta$ has separation $\Delta$.
\begin{lemma}
  \label{thm:schurCDelta}
  Let $s:=2e^{-\Delta}/(1-e^{-\Delta})$.
  If $\Delta>\log(1+2(C_{0,0}+C_{1,1}))$ then there are
  $\alpha,\beta\in\RR^{k}$ which uniquely solve equations \eqref{eq:interp1} and
  \eqref{eq:interp2}. Furthermore, we have
  \begin{equation*}
    \normInf{\alpha}\leq \frac{1-C_{1,1}s}{1-(C_{0,0}+C_{1,1})s},\quad
    \normInf{\beta}\leq \frac{C_{0,1}s}{1-(C_{0,0}+C_{1,1})s},\quad\text{and}\quad
    1-\normInf{\alpha-\xi}\geq \frac{1-(2C_{0,0}+C_{1,1})s}{1-(C_{0,0}+C_{1,1})s}.
  \end{equation*}
\end{lemma}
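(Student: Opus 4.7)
The plan is to recast the interpolation equations \eqref{eq:interp1}--\eqref{eq:interp2} as a $2k\times 2k$ block linear system in $(\alpha,\beta)$, prove its coefficient matrix is invertible via a Schur complement plus a Neumann series, and then read off the three norm estimates. First I arrange the $2k$ equations so that row $i$ is \eqref{eq:interp1} at $\theta_i$ and row $k+i$ is \eqref{eq:interp2} at $\theta_i$. Reading off the coefficients of $\alpha_j$ and $\beta_j$ produces four $k\times k$ matrices
\begin{equation*}
  (R_{qr})_{ij} := \hrho_{\theta_j}^{(q,r)}(\theta_i), \qquad q,r\in\{0,1\},
\end{equation*}
and the system becomes $R_{00}\alpha + R_{10}\beta = \xi$, $R_{01}\alpha + R_{11}\beta = 0$. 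Since $\normTwo{\vphi(\theta)}=1$, the diagonals of $R_{00}$ and $R_{11}$ equal $1$, while those of $R_{01}$ and $R_{10}$ vanish.

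Next I would bound the off-diagonal row sums using the decay condition. With uniform $\sigma_i\equiv\sigma$ and $D_i\equiv D$, \Cref{def:separation} forces $|\theta_i-\theta_j|\geq 2D+|i-j|\Delta\sigma$, so \Cref{def:decay} yields $|\hrho_{\theta_j}^{(q,r)}(\theta_i)|\leq C_{q,r}e^{-|i-j|\Delta}$ for $i\neq j$. Summing a geometric series over $j\neq i$ then gives
\begin{equation*}
  \normInf{R_{00}-I}\leq C_{0,0}s,\quad \normInf{R_{11}-I}\leq C_{1,1}s,\quad \normInf{R_{01}}\leq C_{0,1}s,\quad \normInf{R_{10}}\leq C_{1,0}s,
\end{equation*}
with $s:=2e^{-\Delta}/(1-e^{-\Delta})$ as in the statement.

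Now I eliminate $\beta$ via a Schur complement: from the second block equation $\beta=-R_{11}^{-1}R_{01}\alpha$, and substituting into the first gives $S\alpha=\xi$ with $S:=R_{00}-R_{10}R_{11}^{-1}R_{01}$. The hypothesis $\Delta>\log(1+2(C_{0,0}+C_{1,1}))$ is equivalent to $(C_{0,0}+C_{1,1})s<1$, which in particular forces $C_{1,1}s<1$; a Neumann series then yields $\normInf{R_{11}^{-1}}\leq 1/(1-C_{1,1}s)$. Invoking the algebraic identity $C_{0,1}C_{1,0}=C_{0,0}C_{1,1}$ from \eqref{eq:algebraic}, the cross term satisfies $\normInf{R_{10}R_{11}^{-1}R_{01}}\leq C_{0,0}C_{1,1}s^2/(1-C_{1,1}s)$, and the two contributions collapse into
\begin{equation*}
  \normInf{S-I}\leq C_{0,0}s+\frac{C_{0,0}C_{1,1}s^2}{1-C_{1,1}s}=\frac{C_{0,0}s}{1-C_{1,1}s}<1,
\end{equation*}
so a second Neumann series gives $\normInf{S^{-1}}\leq(1-C_{1,1}s)/(1-(C_{0,0}+C_{1,1})s)$, proving that $(\alpha,\beta)$ is uniquely determined.

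The three claimed bounds then drop out from elementary manipulations. Since $\normInf{\xi}=1$, the bound on $\normInf{\alpha}$ is immediate from $\alpha=S^{-1}\xi$. Writing $\alpha-\xi=-S^{-1}(S-I)\xi$ and multiplying the two operator-norm bounds produces $\normInf{\alpha-\xi}\leq C_{0,0}s/(1-(C_{0,0}+C_{1,1})s)$, which rearranges to the stated lower bound on $1-\normInf{\alpha-\xi}$. Finally, $\beta=-R_{11}^{-1}R_{01}\alpha$ combined with the bounds above yields $\normInf{\beta}\leq C_{0,1}s/(1-(C_{0,0}+C_{1,1})s)$, after the $(1-C_{1,1}s)$ factors cancel. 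The main subtlety is recognizing that \eqref{eq:algebraic} is precisely what makes the Schur-complement correction and $R_{00}-I$ combine into the single factor $C_{0,0}s/(1-C_{1,1}s)$; without this identity one would still obtain invertibility at a similar separation, but the threshold on $\Delta$ would depend on all four constants $C_{q,r}$ rather than just the sum $C_{0,0}+C_{1,1}$.
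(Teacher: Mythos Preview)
Your proof is correct and follows essentially the same Schur-complement-plus-Neumann-series approach as the paper, which packages the invertibility step into a separate lemma (\Cref{thm:schurbound}) but is otherwise identical in substance. One minor point: you specialize to uniform $\sigma_i\equiv\sigma$ and $D_i\equiv D$ when bounding the off-diagonal row sums, whereas the lemma is stated (and used) for the nonuniform setting of \Cref{thm:exact1d}; the same geometric-series bound goes through there since $|\hrho_{\theta_j}^{(q,r)}(\theta_i)|\leq C_{q,r}e^{-(|\theta_i-\theta_j|-D_j)/\sigma_j}\leq C_{q,r}e^{-d(\theta_i,\theta_j)}<C_{q,r}e^{-|i-j|\Delta}$ by \Cref{def:separation}.
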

To prove \Cref{thm:schurCDelta} we begin by rewriting Equations
\eqref{eq:interp1} and \eqref{eq:interp2} in block matrix-vector form for the $Q$ function defined in \Cref{eq:Qcorr}:
\begin{equation}
  \MAT{\II+P^{(0,0)}&P^{(1,0)}\\P^{(0,1)}&\II+P^{(1,1)}}\MAT{\alpha\\\beta}=\MAT{\xi\\0},\label{eq:interpmat}
\end{equation}
where $\II\in\RR^{{k}\times{k}}$ is the identity matrix, and
$P^{(q,r)}\in\RR^{{k}\times{k}}$ satisfies
\begin{equation}
  P^{(q,r)}_{ij} = \hat{\rho}_{\theta_j}^{(q,r)}(\theta_i)
\end{equation}
for $i\neq j$ and $P^{(q,r)}_{ii}=0$.  To see why $P_{ii}^{(1,0)}=0$
agrees with equations \eqref{eq:interp1} and \eqref{eq:interp2}, note that
\begin{equation}
  \hrho_{\theta_i}^{(1,0)}(\theta_i) =
  \frac{\vphi^{(1)}(\theta_i)^T\vphi(\theta_i)}{\normTwo{\vphi^{(1)}(\theta_i)}^2}
  = 0.
\end{equation}
The atoms are normalized-- 
$\normTwo{\vphi(\theta)}=1$ for all $\theta$-- which implies
$$0=\frac{d}{d\theta}\normTwo{\vphi(\theta)}^2 = 2\vphi^{(1)}(\theta)^T\vphi(\theta).$$
For the same reason it follows that $\hrho_{\theta_i}^{(0,1)}(\theta_i)=0$.

Our plan is to
bound the norm of each $P^{(q,r)}$.  If these norms are sufficiently
small then the matrix in \Cref{eq:interpmat} is nearly the identity, and the desired
result follows from a linear-algebraic argument.  Define $\epsilon_i^{(q,r)}(\theta)$ by
\begin{equation}
\epsilon_i^{(q,r)}(\theta) := \sum_{j\neq
  i}|\hat{\rho}_{\theta_j}^{(q,r)}(\theta)|,
\end{equation}
where $q=0,1$ and $r=0,1,2$.  
Here we think of $\theta$ as a point close to
$\theta_i$, so $\epsilon_i^{(q,r)}(\theta)$ captures the
cumulative correlation from the other, more distant elements of $\Theta$.  We expect
$\epsilon_i^{(q,r)}(\theta)$ to be small when $\rho$ has sufficient decay
and $\Theta$ is well separated.  For a matrix $A$ let $\normInf{A}$
denote the infinity-norm defined by
\begin{equation}
  \normInf{A} := \sup_{\normInf{x}=1}\normInf{Ax}.
\end{equation}
$\normInf{A}$ equals the maximum sum of absolute values in any row of $A$. We have
\begin{equation}
  \|P^{(q,r)}\|_\infty =\epsilon^{(q,r)} :=
  \max_{i\in\{1,\ldots,{k}\}}\epsilon_i^{(q,r)}(\theta_i).
\end{equation}
The following lemma shows that equation
\eqref{eq:interpmat} is invertible when $\epsilon^{(q,r)}$ is
sufficiently bounded.
\begin{lemma}[Proof in \Cref{sec:schurboundproof}]
  \label{thm:schurbound}
  Suppose 
  \begin{align}
    \epsilon^{(1,1)}&<1\quad\text{and}\label{eq:schurcond1}\\
    c:=\epsilon^{(0,0)}+\frac{\epsilon^{(1,0)}\epsilon^{(0,1)}}{1-\epsilon^{(1,1)}}&<1.
    \label{eq:schurcond2}
  \end{align}
  Then the matrix in \eqref{eq:interpmat} is invertible and
  \begin{align}
    \|\alpha\|_\infty
    & \leq
    \frac{1}{1-c}\\
    \|\beta\|_\infty
    & \leq
    \frac{\epsilon^{(0,1)}/(1-\epsilon^{(1,1)})}{1-c}\\
    \|\alpha-\xi\|_\infty &\leq \frac{c}{1-c}.
  \end{align}
\end{lemma}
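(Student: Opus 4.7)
The plan is to exploit the block structure of the coefficient matrix in~\eqref{eq:interpmat} via a Schur-complement argument. Writing $M=\MAT{\II+P^{(0,0)} & P^{(1,0)}\\ P^{(0,1)} & \II+P^{(1,1)}}$, I will first invert the lower-right block and then the Schur complement of the upper-left block; since everything is controlled in the infinity norm, the bound $\|P^{(q,r)}\|_\infty=\epsilon^{(q,r)}$ is exactly what makes the proof go through.

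Step 1: Invertibility of $\II+P^{(1,1)}$. Because $\epsilon^{(1,1)}<1$ by~\eqref{eq:schurcond1}, the Neumann series yields that $\II+P^{(1,1)}$ is invertible with
\begin{equation*}
\|(\II+P^{(1,1)})^{-1}\|_\infty \le \frac{1}{1-\epsilon^{(1,1)}}.
\end{equation*}
Step 2: Invertibility of the Schur complement. Define
\begin{equation*}
S := (\II+P^{(0,0)}) - P^{(1,0)}(\II+P^{(1,1)})^{-1}P^{(0,1)}.
\end{equation*}
Submultiplicativity of the infinity norm gives
\begin{equation*}
\|S-\II\|_\infty \le \epsilon^{(0,0)} + \frac{\epsilon^{(1,0)}\epsilon^{(0,1)}}{1-\epsilon^{(1,1)}} = c < 1
\end{equation*}
by~\eqref{eq:schurcond2}, so again by Neumann $S$ is invertible with $\|S^{-1}\|_\infty \le 1/(1-c)$. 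Combined with Step 1, standard block-inversion shows $M$ is invertible.

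Step 3: Deriving the three bounds. Writing out~\eqref{eq:interpmat} as two equations, the second gives $\beta = -(\II+P^{(1,1)})^{-1}P^{(0,1)}\alpha$, and substituting into the first yields $S\alpha=\xi$. Since $\|\xi\|_\infty=1$,
\begin{equation*}
\|\alpha\|_\infty \le \|S^{-1}\|_\infty\,\|\xi\|_\infty \le \frac{1}{1-c}.
\end{equation*}
For $\beta$, chaining the two infinity-norm bounds gives
\begin{equation*}
\|\beta\|_\infty \le \frac{\epsilon^{(0,1)}}{1-\epsilon^{(1,1)}}\,\|\alpha\|_\infty \le \frac{\epsilon^{(0,1)}/(1-\epsilon^{(1,1)})}{1-c}.
\end{equation*}
Finally, rewriting $S\alpha=\xi$ as $\alpha-\xi = (\II-S)\alpha$,
\begin{equation*}
\|\alpha-\xi\|_\infty \le \|S-\II\|_\infty\,\|\alpha\|_\infty \le \frac{c}{1-c},
\end{equation*}
which completes all three bounds.

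There is no real obstacle here; the argument is essentially bookkeeping, and the ``hard part'' is purely in verifying that the bounds $\|P^{(q,r)}\|_\infty\le\epsilon^{(q,r)}$ are used cleanly and that the Schur complement identity is applied to the correct block. The substantive work — bounding the quantities $\epsilon^{(q,r)}$ in terms of the separation $\Delta$ and the decay constants — is deferred to the derivation of \Cref{thm:schurCDelta} from this lemma, which is where the actual correlation-decay hypotheses enter.
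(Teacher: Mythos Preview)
Your proof is correct and follows essentially the same approach as the paper: invert the lower-right block via Neumann series, form its Schur complement, bound $\|S-\II\|_\infty\le c$, and read off $\alpha=S^{-1}\xi$, $\beta=-(\II+P^{(1,1)})^{-1}P^{(0,1)}\alpha$, and $\alpha-\xi=(\II-S)\alpha$. The only quibble is terminological: $S$ is the Schur complement \emph{of} the lower-right block (not of the upper-left), but your computation is correct regardless.
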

To apply \Cref{thm:schurbound} we must first bound $\epsilon^{(q,r)}$
for $q,r\in\{0,1\}$.  By the decay and separation conditions
(\Cref{def:decay} and \Cref{def:separation}), $\theta_j$ lies in the exponentially
decaying tail of $\rho_{\theta_i}$, for $i\neq j$.  This gives
\begin{align}
  \epsilon_i^{(q,r)}(\theta_i)
  & = \sum_{j\neq i}|\hat{\rho}_{\theta_j}^{(q,r)}(\theta_i)|\\
  & \leq
  C_{q,r}\sum_{\theta_j>\theta_i}\exp\brac{-d(\theta_i,\theta_j)}
    + C_{q,r}\sum_{\theta_j<\theta_i}\exp\brac{-d(\theta_i,\theta_j)}\\
  & \leq 2C_{q,r}\sum_{k=1}^\infty \exp\brac{-k\Delta}\\
    & = \frac{2C_{q,r}e^{-\Delta}}{1-e^{-\Delta}} = C_{q,r}s,
\end{align}
where $s:=2e^{-\Delta}/(1-e^{-\Delta})$ and the distance $d$
is defined in \Cref{def:separation}.
As this bound is independent of $i$, we have
\begin{equation}
  \label{eq:epsbound}
  \epsilon^{(q,r)} \leq C_{q,r}s
\end{equation}
for $q,r\in\{0,1\}$.  In terms of the
conditions of \Cref{thm:schurbound}, we obtain
\begin{equation}
c = \epsilon^{(0,0)}+\frac{\epsilon^{(1,0)}\epsilon^{(0,1)}}{1-\epsilon^{(1,1)}}
\leq C_{0,0}s + \frac{C_{1,0}C_{0,1}s^2}{1-C_{1,1}s}
= \frac{C_{0,0}s+(C_{1,0}C_{0,1}-C_{0,0}C_{1,1})s^2}{1-C_{1,1}s}
= \frac{C_{0,0}s}{1-C_{1,1}s},
\label{eq:cbound}
\end{equation}
assuming \Cref{eq:algebraic} ($C_{1,0}C_{0,1}=C_{0,0}C_{1,1}$) holds.
Thus we have $c<1$, as required by \eqref{eq:schurcond2}, when $s<\frac{1}{C_{0,0}+C_{1,1}}$.
Using this, we can find conditions on $\Delta$ so that the
hypotheses of \Cref{thm:schurbound} hold.
If $\Delta>\log(1+\kappa)$
for some $\kappa> 0$ then, using that $f(x)=e^{-x}/(1-e^{-x})$ is decreasing
for $x>0$, we have
\begin{equation}
  \label{eq:delta}
  \frac{s}{2} = \frac{e^{-\Delta}}{1-e^{-\Delta}} <
  \frac{1/(1+\kappa)}{1-1/(1+\kappa)} =
  \frac{1}{\kappa}.
\end{equation}
Letting $\kappa = 2C_{1,1}$ this shows that $\Delta>\log(1+2C_{1,1})$ implies
\begin{equation}
  \epsilon^{(1,1)} \leq C_{1,1}s = 2C_{1,1}\frac{s}{2} < 2C_{1,1}\frac{1}{2C_{1,1}}=1,
\end{equation}
giving \eqref{eq:schurcond1}.
For condition \eqref{eq:schurcond2} we let $\kappa=2(C_{0,0}+C_{1,1})$.
Then \eqref{eq:delta} shows that
$\Delta>\log(1+2(C_{0,0}+C_{1,1}))$ implies
\begin{equation}
  s = 2\frac{s}{2} < 2\frac{1}{2(C_{0,0}+C_{1,1})} = \frac{1}{C_{0,0}+C_{1,1}}
\end{equation}
as required.  Thus when $\Delta>\log(1+2(C_{0,0}+C_{1,1}))$ both conditions of
\Cref{thm:schurbound} hold.  By plugging \eqref{eq:epsbound} and
\eqref{eq:cbound} into the
bounds of \Cref{thm:schurbound} we obtain \Cref{thm:schurCDelta}.
This completes the proof of \Cref{thm:invert}.

\subsection{Proof of \Cref{thm:qbound}}
\label{sec:qboundproof}
\Cref{thm:invert} implies that we can solve \eqref{eq:interp1} and
\eqref{eq:interp2} for $\alpha$ and $\beta$ and then obtain $Q$ via \eqref{eq:Qcorr}.
To prove \Cref{thm:qbound} we must show that 
$|Q(\theta)|<1$ for $\theta\in\Theta^c$.  This is accomplished in two
steps.  We first show
that $|Q(\theta)|<1$ for $\theta$ that are not in the near region of any
$\rho_{\theta_i}$, $i=1,\ldots,k$.
Secondly, we show that for any $\theta$ in the near region of some $\rho_{\theta_i}$
we have $Q^{(2)}(\theta)<0$, where we assume $\xi_i=1$ without loss of
generality.  This proves that $Q$ has a local maximum at $\theta_i$,
and is smaller than one nearby. 

To bound $Q$ and $Q^{(2)}$ we apply the triangle inequality to \eqref{eq:Qcorr}, obtaining the following lemma.
\begin{lemma}
  \label{thm:dualcertbound}
  Fix $\theta_i\in\Theta$ and let $Q$ be defined as in
  \eqref{eq:Qcorr}. For all $\theta\in\RR$
  \begin{equation}
    \label{eq:qbound}
    |Q(\theta)|
    \leq
    \|\alpha\|_\infty|\hat{\rho}_{\theta_i}(\theta)|+\|\beta\|_\infty|\hat{\rho}_{\theta_i}^{(1,0)}(\theta)|
    + \|\alpha\|_\infty
    \epsilon_i^{(0,0)}(\theta)+\|\beta\|_\infty\epsilon_i^{(1,0)}(\theta).
  \end{equation}
  If $\hat{\rho}_{\theta_i}^{(0,2)}(\theta)\leq0$ and $\xi\in\{-1,+1\}^k$ is the
  sign pattern interpolated by $Q$ then
  \begin{equation}
    \label{eq:ddqbound}
    Q^{(2)}(\theta)\leq
    (1-\normInf{\alpha-\xi})\hat{\rho}_{\theta_i}^{(0,2)}(\theta)+\|\beta\|_\infty|\hat{\rho}_{\theta_i}^{(1,2)}(\theta)|
    + \|\alpha\|_\infty
    \epsilon_i^{(0,2)}(\theta)+\|\beta\|_\infty\epsilon_i^{(1,2)}(\theta).
  \end{equation}
\end{lemma}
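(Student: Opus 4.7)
The plan is to prove both inequalities by a straightforward application of the triangle inequality to the defining expression
\begin{equation}
Q(\theta) = \sum_{j=1}^{k}\alpha_j\hrho_{\theta_j}(\theta)+\beta_j\hrho_{\theta_j}^{(1,0)}(\theta),
\end{equation}
splitting the sum into a ``diagonal'' term (indexed by the fixed $i$) and an ``off-diagonal'' remainder $\sum_{j\neq i}(\cdots)$. The diagonal term carries the leading contribution, while the remainder is controlled by the cumulative correlation quantities $\epsilon_i^{(q,r)}(\theta)$.

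For \eqref{eq:qbound}, I would first bound the diagonal term by
\begin{equation}
|\alpha_i\hrho_{\theta_i}(\theta)+\beta_i\hrho_{\theta_i}^{(1,0)}(\theta)|\leq \|\alpha\|_\infty|\hrho_{\theta_i}(\theta)|+\|\beta\|_\infty|\hrho_{\theta_i}^{(1,0)}(\theta)|,
\end{equation}
using $|\alpha_i|\leq\|\alpha\|_\infty$ and $|\beta_i|\leq\|\beta\|_\infty$. For the off-diagonal sum, I would pull out the $\|\alpha\|_\infty$ and $\|\beta\|_\infty$ factors and recognize
\begin{equation}
\sum_{j\neq i}|\hrho_{\theta_j}^{(q,r)}(\theta)|=\epsilon_i^{(q,r)}(\theta),
\end{equation}
which immediately yields the claimed bound.

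For \eqref{eq:ddqbound}, I would start by differentiating $Q$ twice. Since $Q(\theta)=q^T\vphi(\theta)$ and $q$ is the linear combination in \eqref{eq:qcert}, we have
\begin{equation}
Q^{(2)}(\theta)=\sum_{j=1}^{k}\alpha_j\hrho_{\theta_j}^{(0,2)}(\theta)+\beta_j\hrho_{\theta_j}^{(1,2)}(\theta).
\end{equation}
The off-diagonal terms are again bounded by $\|\alpha\|_\infty\epsilon_i^{(0,2)}(\theta)+\|\beta\|_\infty\epsilon_i^{(1,2)}(\theta)$, and the $\beta_i$ term in the diagonal contributes at most $\|\beta\|_\infty|\hrho_{\theta_i}^{(1,2)}(\theta)|$. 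The delicate step is the $\alpha_i\hrho_{\theta_i}^{(0,2)}(\theta)$ term: working (without loss of generality) with $\xi_i=1$, I would write $\alpha_i=\xi_i+(\alpha_i-\xi_i)$, so that $\alpha_i\geq 1-|\alpha_i-\xi_i|\geq 1-\|\alpha-\xi\|_\infty$. Because $\hrho_{\theta_i}^{(0,2)}(\theta)\leq 0$ by hypothesis, multiplying this lower bound on $\alpha_i$ by the non-positive factor reverses the inequality, giving
\begin{equation}
\alpha_i\hrho_{\theta_i}^{(0,2)}(\theta)\leq(1-\|\alpha-\xi\|_\infty)\hrho_{\theta_i}^{(0,2)}(\theta).
\end{equation}
Summing these pieces yields the stated bound.

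The only nontrivial point in the argument is the sign-sensitive handling of the $\alpha_i\hrho_{\theta_i}^{(0,2)}(\theta)$ term: one must keep track of the fact that $\hrho_{\theta_i}^{(0,2)}(\theta)$ is non-positive so as to flip the inequality in the correct direction, and use the sign normalization $\xi_i=1$ implicit in the companion argument that $\theta_i$ is a local maximum of $Q$. Everything else reduces to the triangle inequality together with the definitions of $\|\alpha\|_\infty$, $\|\beta\|_\infty$, and $\epsilon_i^{(q,r)}(\theta)$.
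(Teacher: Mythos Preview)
Your proposal is correct and matches the paper's approach exactly: the paper states only that the lemma follows by applying the triangle inequality to \eqref{eq:Qcorr}, and your diagonal/off-diagonal split together with the sign-sensitive handling of $\alpha_i\hrho_{\theta_i}^{(0,2)}(\theta)$ (using $\xi_i=1$, which is indeed the only case needed in the subsequent argument) is precisely how that triangle-inequality computation goes through.
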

In the next lemma we show that the separation conditions in
\Cref{def:separation} ensure that the support does not cluster together.
We assume that $\theta_1<\theta_2<\cdots<\theta_{|\Theta|}$.
\begin{lemma}[Proof in \Cref{sec:clusterproof}]
  \label{thm:cluster}
  Fix $\theta_i\in\Theta$ and let $\theta>\theta_i$.  If
  $i \leq k -1$, assume that 
  \begin{equation}
    \label{eq:closer}
    \frac{\theta - D_i^+}{\sigma_i} \leq \frac{D_{i+1}^--\theta}{\sigma_{i+1}}.
  \end{equation}
  Then
  \begin{alignat}{2}
    \frac{D_{j}^--\theta}{\sigma_{j}} &\geq \frac{\Delta}{2} &&\quad\text{if
      $j=i+1$,}\label{eq:cluster1}\\
    \frac{D_{j}^--\theta}{\sigma_{j}} &\geq \Delta(j-(i+1)) &&\quad\text{if
      $j>i+1$,}\label{eq:cluster2}\\
    \frac{\theta-D_j^+}{\sigma_{j}} &\geq \Delta(i-j) &&\quad\text{if
      $j<i$,}\label{eq:cluster3}
  \end{alignat}
  where $\Delta$ is defined in \Cref{def:separation}, as long as 
  \Cref{def:near,def:intermediate,def:decay} and \eqref{eq:algebraic}
  hold.
\end{lemma}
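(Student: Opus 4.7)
The plan is to derive all three bounds from the generalized support-separation condition in \Cref{def:separation}, applied to three different pairs of support points; the hypothesis \eqref{eq:closer} is used chiefly to establish \eqref{eq:cluster1} and, via a byproduct, to pin down the location of $\theta$ for \eqref{eq:cluster2}.

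For \eqref{eq:cluster1}, I would start from the algebraic identity
\[
D_{i+1}^- - D_i^+ = \theta_{i+1} - \theta_i - D_i - D_{i+1},
\]
which by \Cref{def:separation} applied to the pair $(\theta_i,\theta_{i+1})$ strictly exceeds $\Delta\max(\sigma_i,\sigma_{i+1})$. Writing $a := (\theta - D_i^+)/\sigma_i$ and $b := (D_{i+1}^- - \theta)/\sigma_{i+1}$, the identity reads $\sigma_i a + \sigma_{i+1} b > \Delta\max(\sigma_i,\sigma_{i+1})$. The hypothesis $a \leq b$ from \eqref{eq:closer} lets me replace $\sigma_i a$ by $\sigma_i b$, yielding $(\sigma_i + \sigma_{i+1}) b > \Delta\max(\sigma_i,\sigma_{i+1})$; since $\sigma_i + \sigma_{i+1} \leq 2\max(\sigma_i,\sigma_{i+1})$, I conclude $b > \Delta/2$, which is \eqref{eq:cluster1}. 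As a byproduct, $b > 0$ forces $\theta < D_{i+1}^-$, a fact I will reuse below.

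For \eqref{eq:cluster2}, the bound $\theta \leq D_{i+1}^-$ just obtained gives $D_j^- - \theta \geq D_j^- - D_{i+1}^-$ whenever $j > i+1$. Decomposing $D_j^- - D_{i+1}^- = (\theta_j - \theta_{i+1} - D_j - D_{i+1}) + 2D_{i+1}$, I would apply \Cref{def:separation} to the pair $(\theta_{i+1},\theta_j)$ to lower-bound the first summand by $(j - i - 1)\Delta\max(\sigma_{i+1},\sigma_j) \geq (j - i - 1)\Delta\sigma_j$, then drop the nonnegative term $2D_{i+1}$ to obtain \eqref{eq:cluster2}. For \eqref{eq:cluster3}, I would use the assumption $\theta > \theta_i$ directly: $\theta - D_j^+ > \theta_i - \theta_j - D_j = (\theta_i - \theta_j - D_i - D_j) + D_i$, whose first summand exceeds $(i-j)\Delta\max(\sigma_i,\sigma_j) \geq (i-j)\Delta\sigma_j$ by \Cref{def:separation} applied to $(\theta_j,\theta_i)$, and the extra $D_i \geq 0$ can be dropped.

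The main obstacle is handling \eqref{eq:cluster1} carefully enough to extract both the advertised bound on $b$ and the byproduct $\theta < D_{i+1}^-$; without the latter, \eqref{eq:cluster2} cannot be reduced to a single invocation of the separation inequality on the neighbor $\theta_{i+1}$. Once that first step is in place, the remaining two cases are routine manipulations of the separation inequality for different index pairs.
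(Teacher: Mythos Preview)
Your proposal is correct and follows essentially the same strategy as the paper's proof: each inequality is reduced to a single application of the generalized separation condition (\Cref{def:separation}) to the appropriate pair of indices, after first pinning down the location of $\theta$ relative to $D_{i+1}^\pm$. Your handling of \eqref{eq:cluster1} is in fact slightly cleaner than the paper's---the substitution $a\leq b$ avoids the case split on whether $\theta\geq D_i^+$ that the paper performs---while for \eqref{eq:cluster2} and \eqref{eq:cluster3} the paper uses the looser bounds $\theta\leq D_{i+1}^+$ and $\theta\geq D_i^-$ so that the numerator $D_j^- - D_{i+1}^+$ (resp.\ $D_i^- - D_j^+$) matches the separation condition exactly and no $+2D_{i+1}$ or $+D_i$ term needs to be dropped; these are purely cosmetic differences.
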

Inequality \eqref{eq:closer} implies that $\theta_i$ is the closest element of
$\Theta$ to $\theta$, if we use the generalized distance normalized by $\sigma$.\\
\subsection*{Bounding $|Q(\theta)|$ Outside the Near Region}
Our goal is to bound $|Q(\theta)|$ for $\theta\in\RR$ that are not in
the near region of any $\rho_{\theta_i}$.  We can assume,
flipping the axis if necessary, 
that the conditions of \Cref{thm:cluster} hold for the $\theta_i$
closest to $\theta$ and that $\theta\geq
N_i^+$  (recall that $N_i^+=\theta_i+N_i$ is the
boundary of the near region of $\rho_{\theta_i}$).  
Then $\theta$ lies in the exponentially decaying tail of $\rho_{\theta_j}$ for
$j\neq i$, so that
\begin{align}
  \epsilon_i^{(q,r)}(\theta)
  &= \sum_{j\neq i}\hat{\rho}_{\theta_j}^{(q,r)}(\theta)\\
  & \leq C_{q,r}\exp(-(D_{i+1}^--\theta)/\sigma_{i+1})\nonumber\\
  &\quad+
  \sum_{j<i}C_{q,r}\exp(-(\theta - D_j^+)/\sigma_j)+
  \sum_{j>i+1}C_{q,r}\exp(-(D_j^--\theta)/\sigma_j)\\
  & \leq C_{q,r}e^{-\Delta/2}+2C_{q,r}\sum_{k=1}^\infty e^{-k\Delta}\\
  & = C_{q,r}(x+s), \label{eq:eps_bound}
\end{align}
where $x=e^{-\Delta/2}$ and $s$ is defined in \Cref{thm:schurCDelta}.
Plugging this into \eqref{eq:qbound} and combining with
\Cref{thm:schurCDelta} yields
\begin{align}
  |Q(\theta)|
  & \leq
  \frac{1-C_{1,1}s}{1-(C_{0,0}+C_{1,1})s}
  \brac{|\hat{\rho}_{\theta_i}(\theta)|+C_{0,0}(x+s)}\notag\\
  &\qquad+\frac{C_{0,1}s}{1-(C_{0,0}+C_{1,1})s}
  \brac{|\hat{\rho}_{\theta_i}^{(1,0)}(\theta)|+C_{1,0}(x+s)}\\
  & \leq
  \frac{1-C_{1,1}s}{1-(C_{0,0}+C_{1,1})s}
  \brac{\gamma_2+C_{0,0}(x+s)}
  +\frac{C_{0,1}s}{1-(C_{0,0}+C_{1,1})s}
  \brac{\gamma_3+C_{1,0}(x+s)}.
\end{align}
Solving for $|Q(\theta)|<1$ we obtain
\begin{equation}
  \brac{1-C_{1,1}s}\brac{\gamma_2+C_{0,0}(x+s)}
  +\brac{C_{0,1}s}\brac{\gamma_3+C_{1,0}(x+s)} <
  1-(C_{0,0}+C_{1,1})s.
\end{equation}
Isolating $s$ and $x$,
\begin{equation}
  s(2C_{0,0}+C_{1,1}-C_{1,1}\gamma_2+C_{0,1}\gamma_3)+xC_{0,0}<1-\gamma_2,
\end{equation}
where we apply \Cref{eq:algebraic} ($C_{1,0}C_{0,1}=C_{0,0}C_{1,1}$) to cancel terms.
Since $s=\frac{2x^2}{1-x^2}$ we obtain the inequality
\begin{equation}
  \frac{2x^2}{1-x^2}a + xb < c
\end{equation}
where $a:=2C_{0,0}+C_{1,1}-C_{1,1}\gamma_2+C_{0,1}\gamma_3>0$, $b:=C_{0,0}$, and
$c:=1-\gamma_2>0$.  The following lemma shows that this inequality is
satisfied by our assumptions, completing this part of the proof.
\begin{lemma}[Proof in \Cref{sec:quadineqproof}]
  \label{lem:quadineq}
  Let $a,b,c,\Delta>0$, and $x:=e^{-\Delta/2}$.  Then the inequality
  \begin{equation}
    \label{eq:specialineq}
    \frac{2x^2}{1-x^2}a + xb < c
  \end{equation}
  is satisfied when
  \begin{equation}
    \label{eq:solution}
    \Delta > 2\log\brac{\frac{2(2a+c)}{-b + \sqrt{b^2+4(2a+c)c}}}.
  \end{equation}
\end{lemma}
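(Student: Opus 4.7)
\textbf{Proof proposal for \Cref{lem:quadineq}.}
The plan is to clear the denominator on the left of \eqref{eq:specialineq}, arriving at a cubic inequality in $x$, and then bound the cubic above by the quadratic one gets after discarding the $-bx^3$ term. Since $\Delta>0$ gives $x=e^{-\Delta/2}\in(0,1)$ and hence $1-x^2>0$, we may multiply \eqref{eq:specialineq} through by $1-x^2$ to get the equivalent inequality
\begin{equation*}
2x^2 a + bx(1-x^2) < c(1-x^2),
\end{equation*}
which on rearrangement becomes
\begin{equation*}
-bx^3 + (2a+c)x^2 + bx - c < 0.
\end{equation*}

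Because $b,x>0$, the cubic term $-bx^3$ is strictly negative, so a sufficient condition for the above inequality is that the quadratic
\begin{equation*}
p(x) := (2a+c)x^2 + bx - c
\end{equation*}
satisfies $p(x)\leq 0$. Since $p$ opens upward with $p(0)=-c<0$, the set $\{x>0:p(x)\leq 0\}$ is the interval $[0,x_+]$, where $x_+$ is the unique positive root of $p$. By the quadratic formula,
\begin{equation*}
x_+ = \frac{-b+\sqrt{b^2+4(2a+c)c}}{2(2a+c)}.
\end{equation*}
Thus it suffices to check $x< x_+$, and since $x=e^{-\Delta/2}$ this is equivalent to
\begin{equation*}
\Delta > -2\log x_+ = 2\log\brac{\frac{2(2a+c)}{-b+\sqrt{b^2+4(2a+c)c}}},
\end{equation*}
which is precisely the hypothesis \eqref{eq:solution}. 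The only routine check that remains is to confirm that the step of multiplying by $1-x^2$ is indeed a true equivalence rather than just one direction; since $x\in(0,1)$ automatically, there is nothing further to verify. No step here looks delicate: the algebraic reduction is mechanical, and the main idea---bounding the cubic above by its quadratic part after discarding the manifestly negative $-bx^3$---is exactly what makes the closed-form expression in \eqref{eq:solution} match the threshold produced by the quadratic formula.
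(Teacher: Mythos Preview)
Your proof is correct and follows essentially the same approach as the paper: clear the denominator using $1-x^2>0$, discard the strictly negative $-bx^3$ term to reduce to a quadratic sufficient condition, and then invoke the quadratic formula to locate the positive root $x_+$ and translate $x<x_+$ back into the stated bound on $\Delta$. The only cosmetic difference is that you phrase the quadratic condition as $p(x)\leq 0$ rather than $p(x)<0$, but since the hypothesis \eqref{eq:solution} is a strict inequality this makes no difference.
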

\subsection*{Bounding $Q^{(2)}(\theta)$ in the Near Region}
For the final part of the proof we must prove $Q^{(2)}(\theta)<0$ for
$\theta$ in the near region of some $\theta_i$ with $\xi_i=1$.  Since the
interpolation equations \eqref{eq:interp1} and \eqref{eq:interp2}
guarantee that $Q(\theta_i)=1$, strict concavity implies that $Q$ has a
unique maximum on $[N_i^-,N_i^+]$ thus establishing that $Q(\theta)<1$
for $\theta\in[N_i^-,N_i^+]\setminus\{\theta_i\}$. We cannot have
$Q(\theta)\leq-1$ on the near region since we already showed that $|Q(\theta)|<1$ outside the near region.
We can assume,
without loss of generality, 
that the conditions of \Cref{thm:cluster} hold for some $i$ and that $\theta\leq
N_i^+$.  By the same argument used to obtain \Cref{eq:eps_bound}, we have
\begin{equation}
  \epsilon_i^{(q,r)}(\theta) \leq C_{q,r}(x+s).
\end{equation}
Plugging this into \eqref{eq:qbound} and applying
\Cref{thm:schurCDelta} we obtain
\begin{align}
  Q^{(2)}(\theta)
  & \leq \frac{1-(2C_{0,0}+C_{1,1})s}{1-(C_{0,0}+C_{1,1})s}\hrho_{\theta_i}^{(0,2)}(\theta)
  +\frac{1-C_{1,1}s}{1-(C_{0,0}+C_{1,1})s}\brac{C_{0,2}(x+s)} \nonumber\\
  &+ \frac{C_{0,1}s}{1-(C_{0,0}+C_{1,1})s}
  \brac{|\hat{\rho}_{\theta_i}^{(1,2)}(\theta)|+C_{1,2}(x+s)}\\
  & \leq \frac{1-(2C_{0,0}+C_{1,1})s}{1-(C_{0,0}+C_{1,1})s}\brac{-\gamma_0}
  +\frac{1-C_{1,1}s}{1-(C_{0,0}+C_{1,1})s}\brac{C_{0,2}(x+s)} \nonumber\\
  &+ \frac{C_{0,1}s}{1-(C_{0,0}+C_{1,1})s}
  \brac{\gamma_1+C_{1,2}(x+s)},
\end{align}
where $\hrho_{\theta_i}^{(0,2)}(\theta)\leq 0$ by \Cref{def:near}.  Solving
for $Q^{(2)}(\theta)<0$ yields
\begin{equation}
  -(1-(2C_{0,0}+C_{1,1})s)\gamma_0 + (1-C_{1,1}s)C_{0,2}(x+s) +
  C_{0,1}s(\gamma_1+C_{1,2}(x+s)) < 0.
\end{equation}
Grouping into terms involving $s$ and $x$ we obtain
\begin{equation}
  s((2C_{0,0}+C_{1,1})\gamma_0+C_{0,2}+C_{0,1}\gamma_1) + C_{0,2}x < \gamma_0,
\end{equation}
where we apply \Cref{eq:algebraic} ($C_{0,1}C_{1,2}=C_{0,2}C_{1,1}$) to cancel terms.
Letting $a:=(2C_{0,0}+C_{1,1})\gamma_0+C_{0,2}+C_{0,1}\gamma_1>0$, $b:=C_{0,2}$, and
$c:=\gamma_0>0$ we obtain an inequality of the form
\eqref{eq:specialineq}.  Applying \Cref{lem:quadineq} completes the proof.

\begin{figure}[tp]
   \includegraphics[scale=0.43]{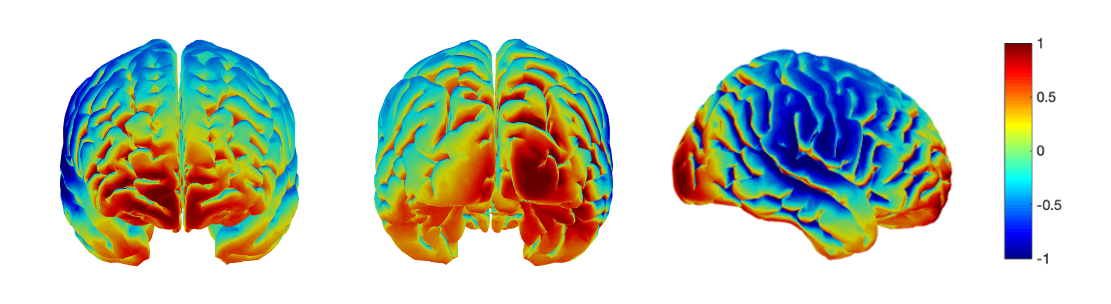}
  \caption{Interpolation function $Q$ associated to a valid dual certificate for the electroencephalography brain-activity localization problem. The dual certificate is associated to a signal consisting of the three sources of brain activity shown in Figure~\ref{fig:eegatoms}. $Q$ interpolates the sign pattern of the signal, which equals +1 on $\theta_1$ and $\theta_3$, and -1 on $\theta_2$. The certificate is built using the multidimensional extension of our proposed interpolation technique described in Section~\ref{sec:2d}.}
  \label{fig:eegcert}
\end{figure}

\subsection{Extensions to Higher Dimensions}
\label{sec:2d}
In this section we briefly describe an extension of our dual certificate construction to settings where the parameter space has dimension $p>1$. We leave a more detailed analysis to future work.
In higher dimensions, one can build the interpolating function $Q$ by setting
\begin{equation}
  Q(\eta) := \sum_{i=1}^k \left[\alpha_i\rho_{\theta_i}(\eta)
    + \sum_{l=1}^p \beta_{i,l}
    \frac{\partial_{1,l}\,\rho_{\theta_i}(\eta)}{
      \normTwo{\partial_{l}\vphi(\theta_i)}^2}\right],
\end{equation}
where $\alpha\in\RR^k$, $\beta\in\RR^{k\times p}$, $\partial_l$
denotes the partial derivative with respect to the $l$th coordinate,
and
$\partial_{1,l}\rho_\theta(\eta)$ denotes the partial derivative of $\rho_\theta(\eta)$ with
respect to the $l$th coordinate of $\theta$. The coefficients are chosen so that $Q$ satisfies the analog of the interpolation \Cref{eq:interp1,eq:interp2}, 
\begin{align}
  Q(\theta_i) & = \xi_i,\label{eq:interp1_d}\\
  \nabla Q(\theta_i) & = 0.\label{eq:interp2_d}
\end{align}
In \Cref{fig:eegcert} we show an example of an interpolating function $Q$ for the electroencephalography brain-activity localization problem. The interpolating function is associated to the signal with three active sources of brain activity from Figure~\ref{fig:eegatoms}. To control $Q$ for problems with correlation decay, one can extend the correlation-decay conditions in \Cref{sec:correlation_decay} to higher dimensions by requiring analogous bounds on the first and second partial derivatives.  For example, the local quadratic bound in \Cref{eq:near1} becomes a bound on the eigenvalues of the Hessian of $\rho_\theta$. Similar conditions have been used in \cite{superres,bendory2016robust}
to obtain recovery guarantees for the super-resolution and deconvolution problems in two dimensions.

\section{Numerical Experiments}
\label{sec:numerical}
\subsection{Heat-source localization}
\label{sec:numexact}
Our theoretical results establish that convex programming yields exact recovery in parameter-estimation problems that have correlation decay. In this section we
investigate this phenomenon numerically for a heat-source localization
problem in one dimension. The heat sources are modeled as a collection
of point sources, 
\begin{equation}
  \mu := \sum_{\theta_i\in\Theta} c_i\delta_{\theta_i},
\end{equation}
where $\Theta$ is a finite number of points in the unit interval and $c_i\in\RR$ for $i=1,\ldots,k$. 

The heat $u(\theta,t)$ at
position $\theta$ and time $t$ is assumed to evolve following the heat equation with Neumann boundary conditions (see e.g. \cite{li2014heat}),
\begin{align}
  \frac{\partial}{\partial t}u(\theta,t) &= \frac{\partial}{\partial
    \theta}\left(c(\theta)\frac{\partial}{\partial \theta}u(\theta,t)\right),\label{eq:heat1}\\
  \frac{\partial}{\partial \theta}u(-0.5,t) &= \frac{\partial}{\partial
    \theta}u(0.5,t) = 0\label{eq:heat2}
\end{align}
on the unit interval, where $c(\theta)$ represents the conductivity of the medium at $\theta$ (see \Cref{fig:num_conductivity}). The data are heat measurements $u(j,T)$,
where $j$ is sampled on a regular grid of 100 points at a fixed time $T:=10^{-4}$. Our goal is to recover the initial heat distribution at $t=0$ represented by the heat sources $\mu$. This is an SNL problem where $\vphi(\theta)$ corresponds to the measurements caused by a source located at $\theta$. 

\begin{figure}[t]
  \centering
  \includegraphics[width=0.4\linewidth]{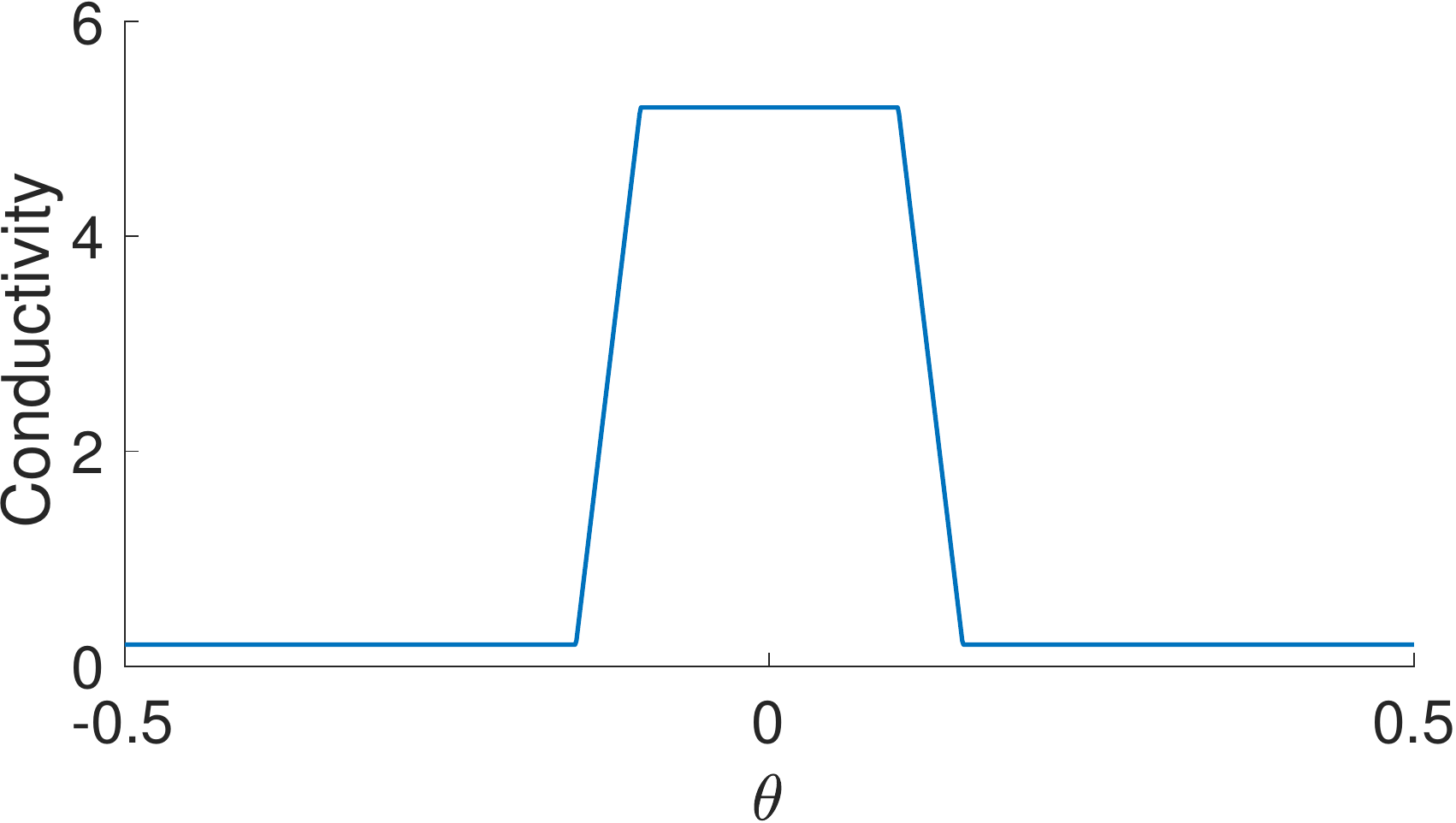}
  \caption{The conductivity function $c(\theta)$ used in the experiment described in Section~\ref{sec:numexact}. Heat diffuses faster in the central
    region, where the conductivity is higher.}
  \label{fig:num_conductivity}
\end{figure}

If the conductivity is constant, then $\vphi(\theta)$ has a
Gaussian-like shape with fixed width sampled on the measurement grid,
and the correlation function $\rho_{\theta}$ is also shaped like a
Gaussian with fixed width. In this example, the conductivity varies
(see \Cref{fig:num_conductivity}), which results in correlations that
are still mostly Gaussian-like but have very different widths (see
\Cref{fig:num_corrs}). The measurement operator therefore has
nonuniform correlation decay properties. According to the theory
presented in \Cref{sec:exactvarying}, we expect convex programming to
recover the heat-source location as long as they are separated by a
minimum separation that takes into account the structure of the
correlation decay. To verify whether this is the case we consider two
different separation measures. The first just measures the minimum
separation between the sources, 
\begin{equation}
  \Delta_{\op{sep}} := \min_{i\neq j}|\theta_i-\theta_j|. \label{eq:stsep}
\end{equation}
The second takes into account the correlation function of the SNL problem,
\begin{equation}
  \Delta_{\op{corr}} := \min_{i\neq
    j}\frac{|\theta_i-\theta_j|}{\max(\sigma_i,\sigma_j)} \label{eq:sigsep}.
\end{equation}
where $\sigma_i$, $1\leq i \leq k$, is the standard deviation of the best Gaussian upper bound on the correlation function centered at $\theta_i$,
\begin{equation}
  \sigma_i := \inf\{s>0 \mid \rho_{\theta_i}(\theta) <
  e^{-(\theta_i-\theta)^2/(2s^2)}~\text{for all $\theta$.}\}\label{eq:sigg}
\end{equation}

\begin{figure}[t]
 \hspace{-0.5cm}
 {\footnotesize
  \begin{tabular}{>{\centering\arraybackslash}m{0.08\linewidth} >{\centering\arraybackslash}m{0.2\linewidth} >{\centering\arraybackslash}m{0.2\linewidth} >{\centering\arraybackslash}m{0.2\linewidth} >{\centering\arraybackslash}m{0.2\linewidth}}
    & Dilation Factor\\
    &$1$ & $0.68$ & $0.34$ & $0.01$\\
    Uniform &
    \includegraphics[width=\linewidth]{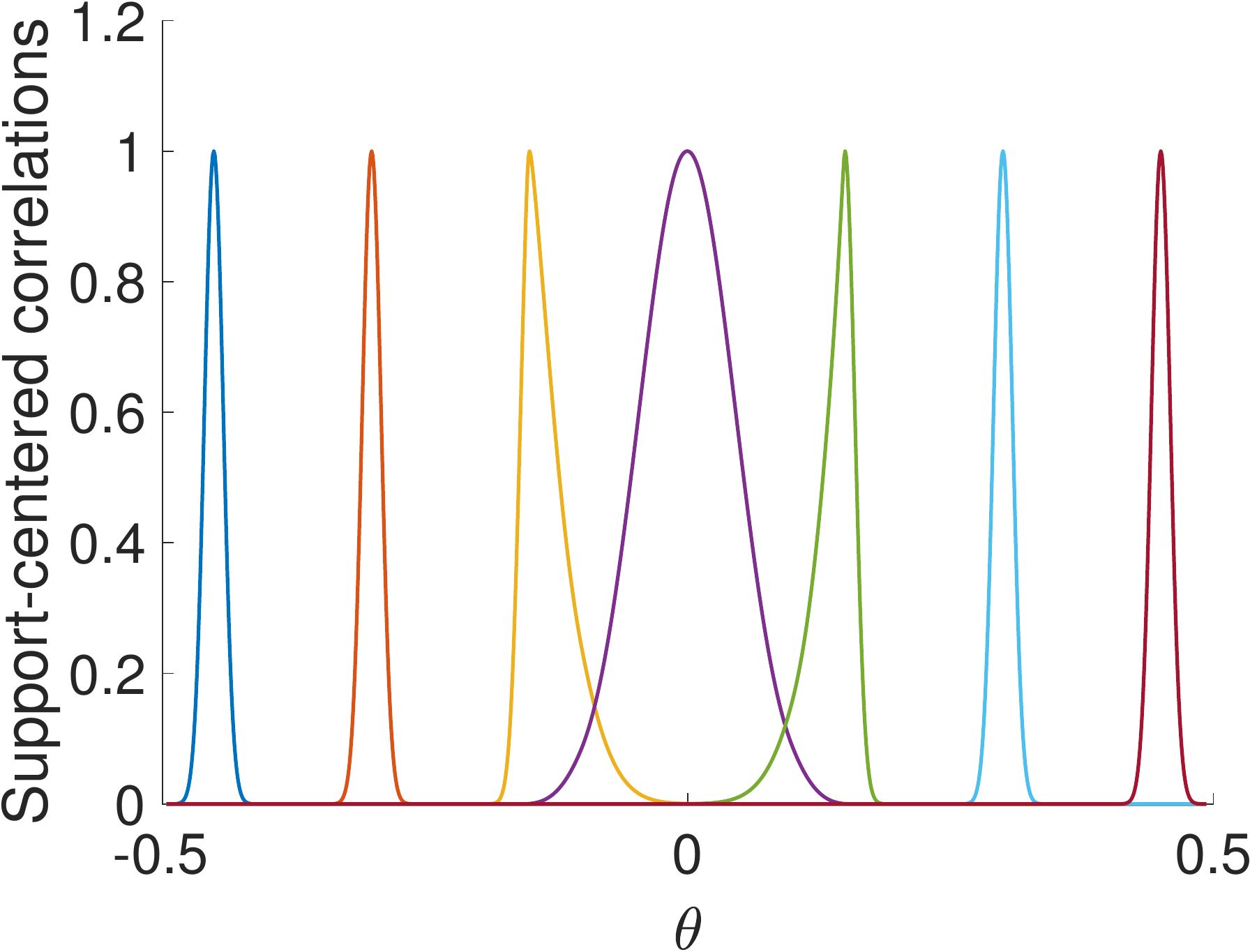}&
    \includegraphics[width=\linewidth]{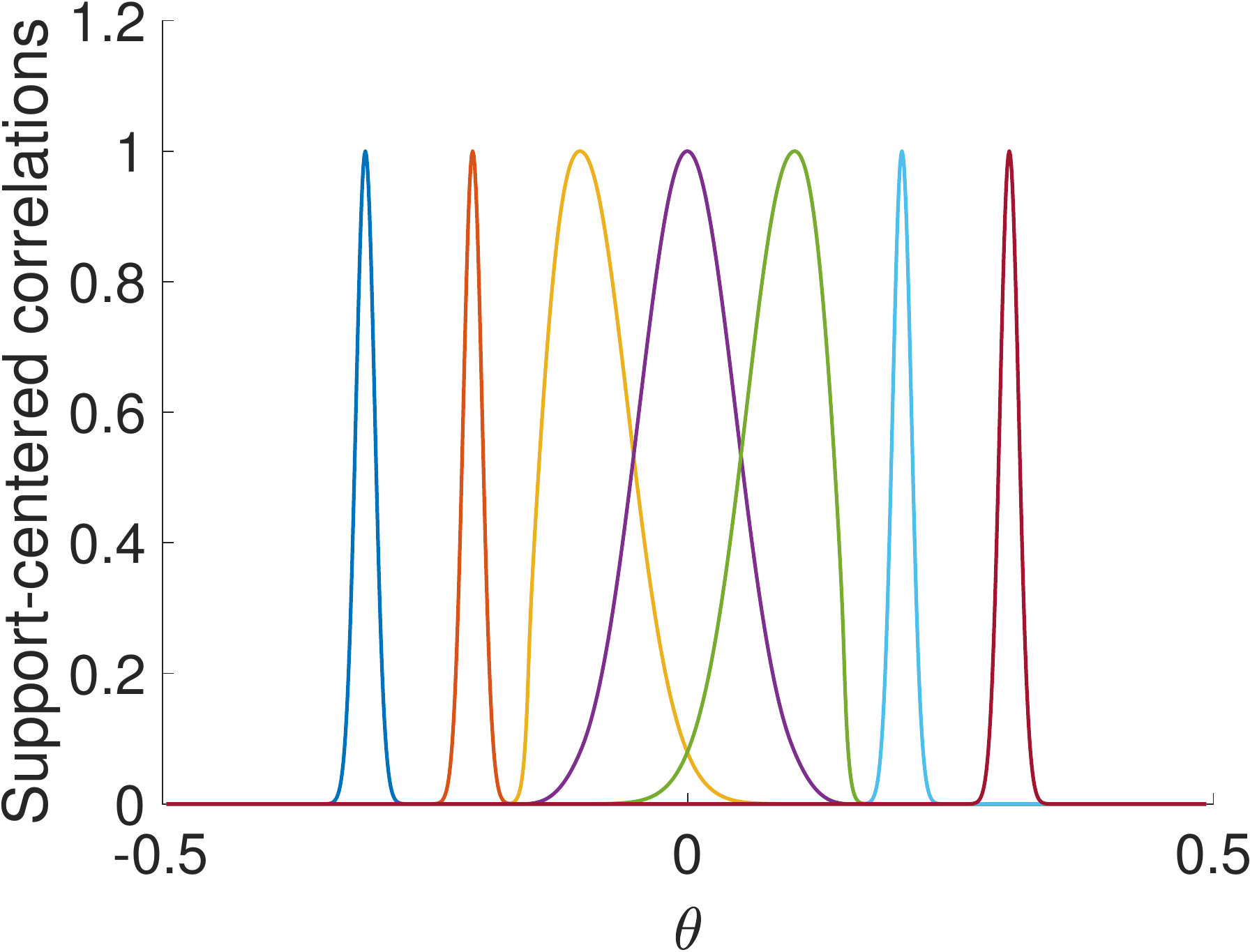}&
    \includegraphics[width=\linewidth]{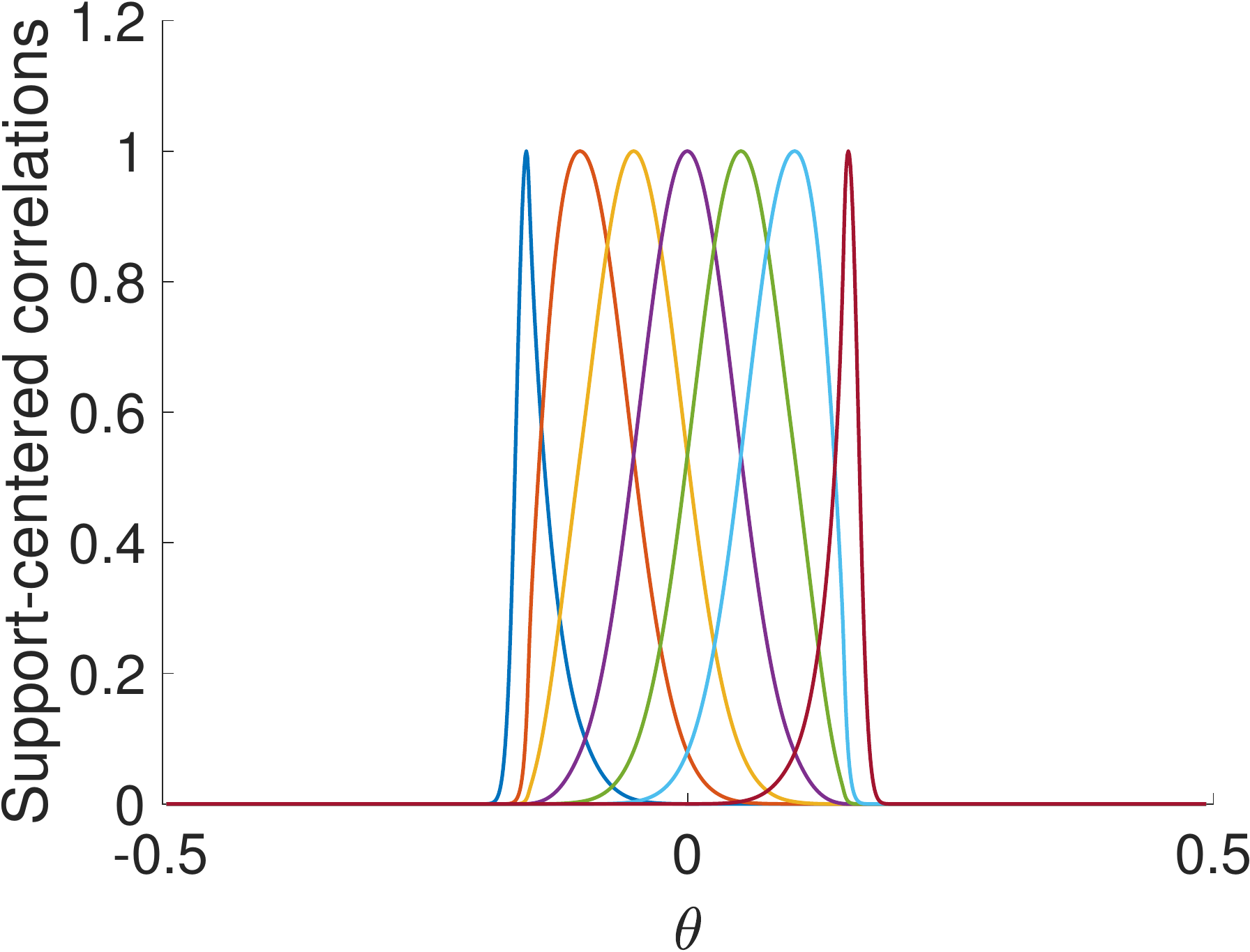}&
    \includegraphics[width=\linewidth]{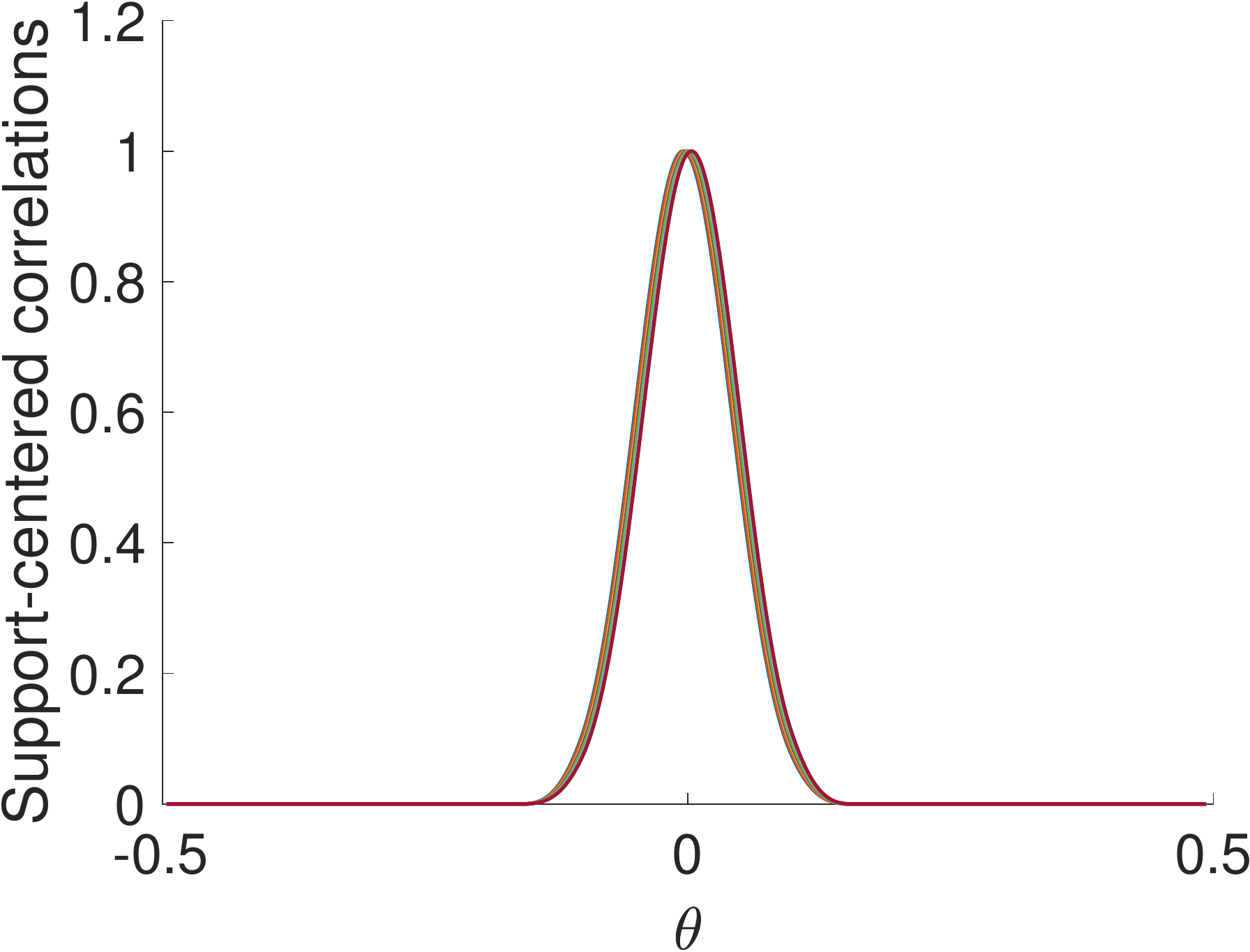}\\
    Clustered &
    \includegraphics[width=\linewidth]{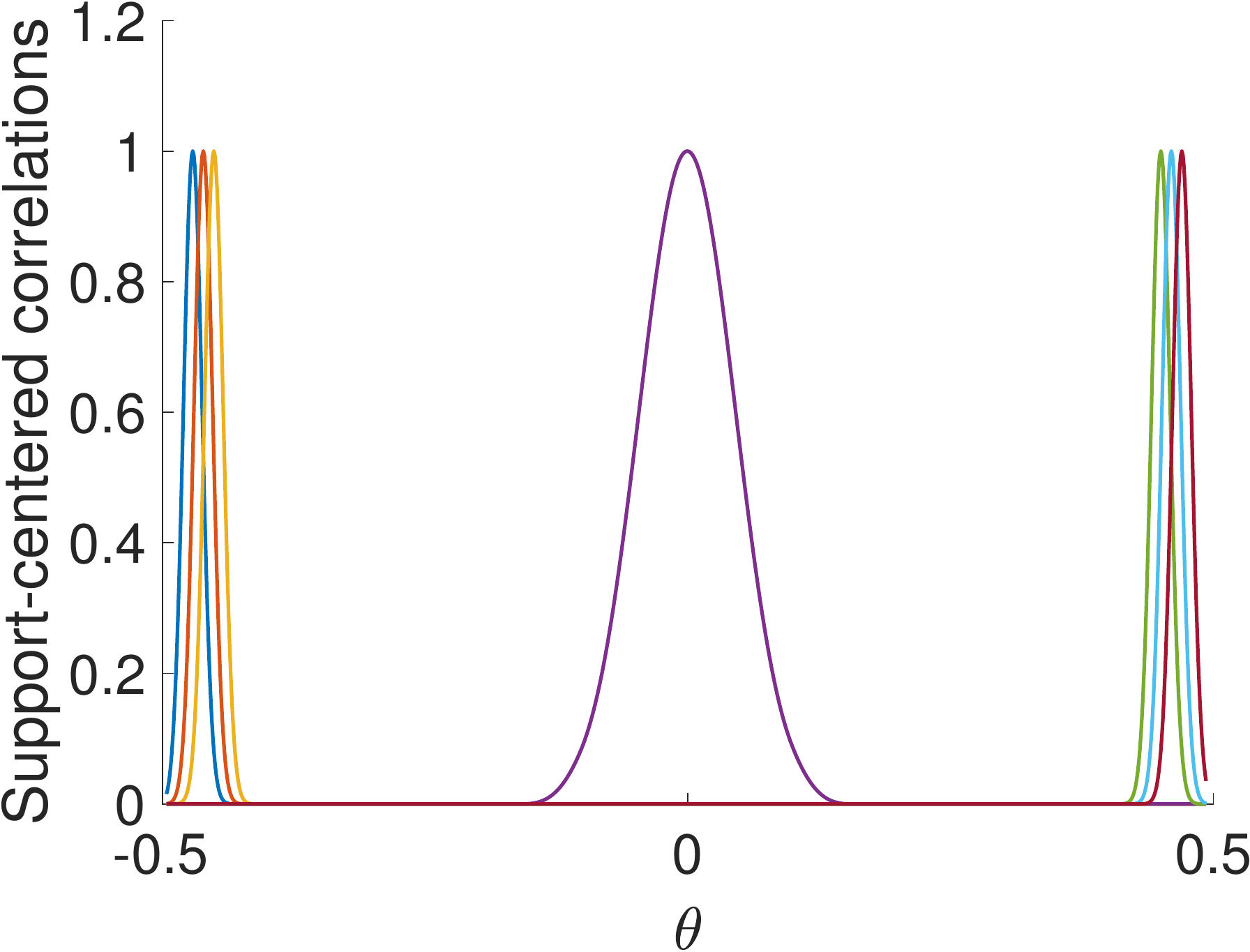}&
    \includegraphics[width=\linewidth]{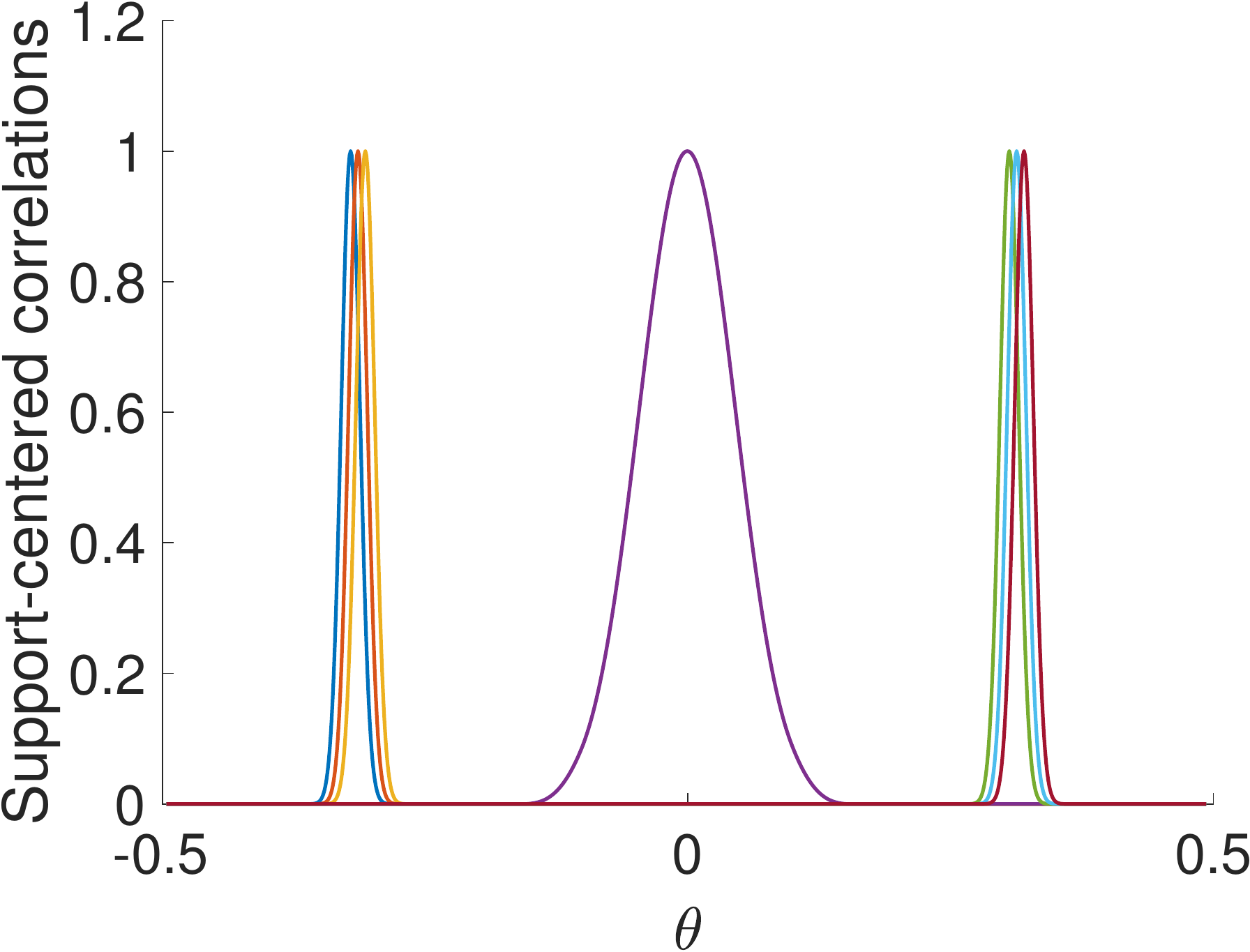}&
    \includegraphics[width=\linewidth]{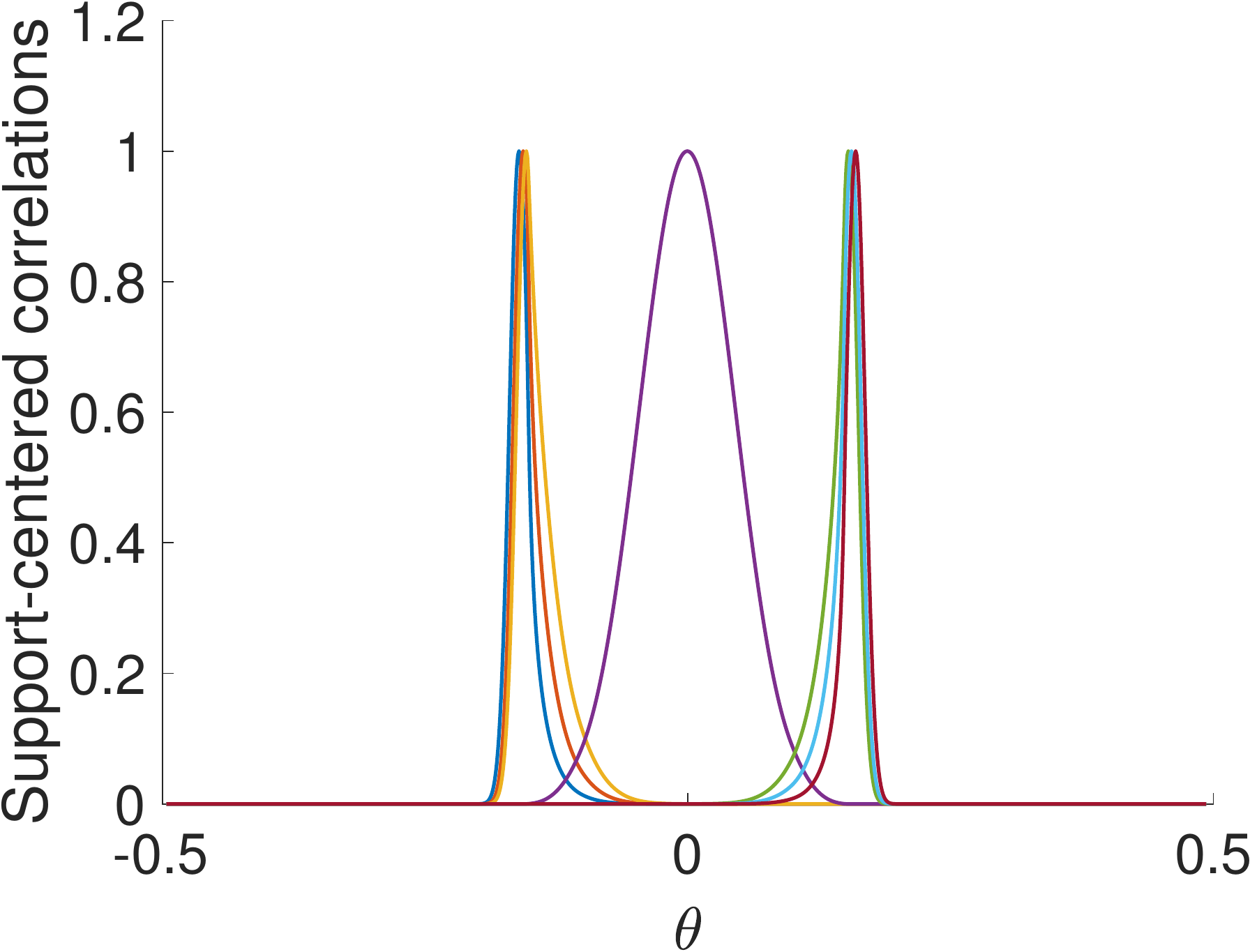}&
    \includegraphics[width=\linewidth]{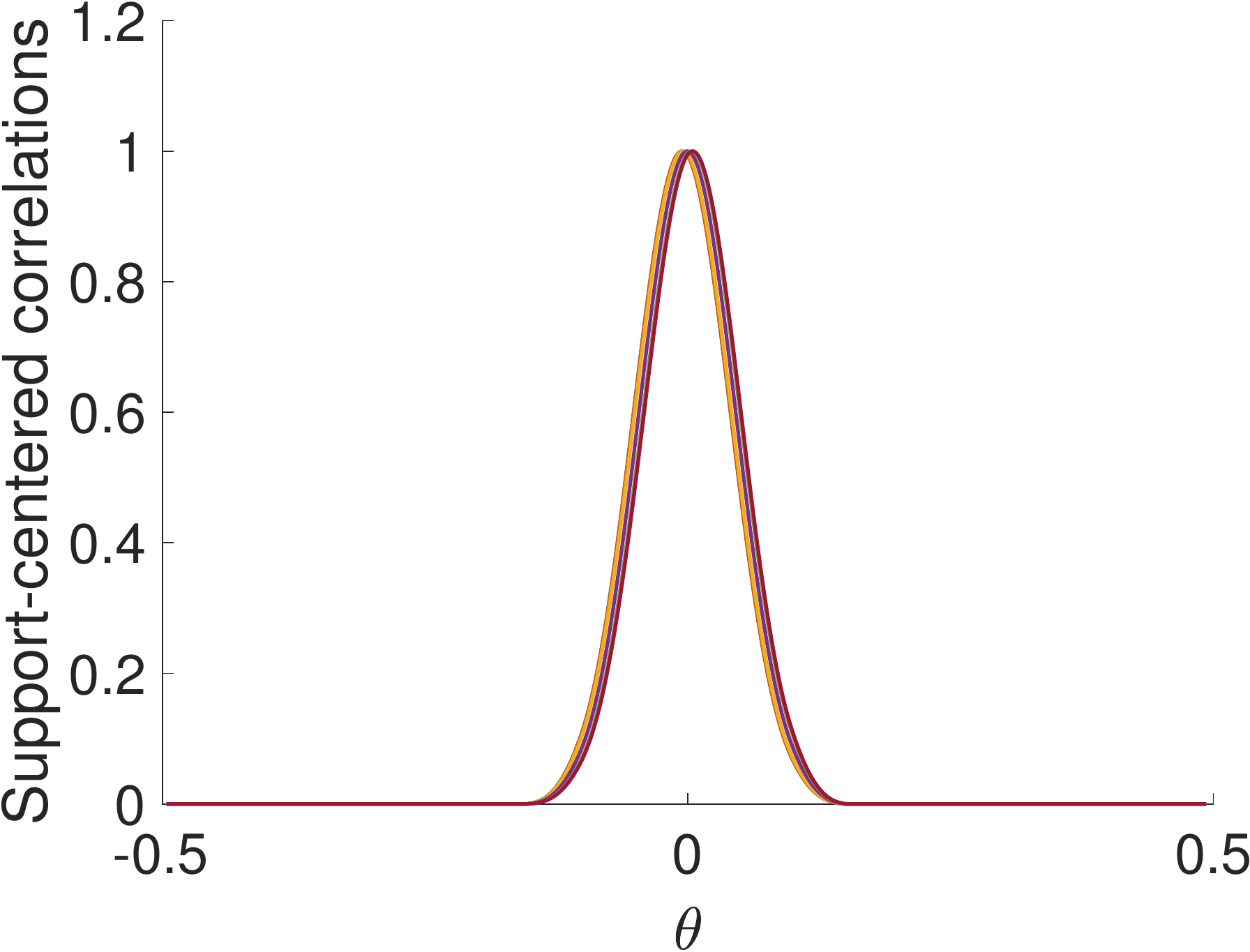}
  \end{tabular}
  }
  \caption{Examples of the correlation functions
    $\rho_{\theta_1},\ldots,\rho_{\theta_7}$ for heat sources with different separations and configurations.  The
    sources used in the top row of figures are uniformly spaced on the unit interval. The heat sources in the bottom row have one central source, and two clusters of three sources located at the ends. From left to right the supports are dilated by different factors. This reduces the separation between the heat sources but maintains their relative position.}
  \label{fig:num_corrs}
\end{figure}

The question is which of these separations characterizes the
performance of the convex-programming approach more accurately. We
consider two different heat-source configurations: one where the
sources are uniformly spaced and another where they are clustered, as
depicted in \Cref{fig:num_corrs}.
\bdb{In the clustered configuration, groups of spikes are placed in
  regions of low conductivity where each $\rho_{\theta_i}$ exhibits
  sharp decay. This
  creates a noticeable discrepancy between the two measures of
  separation.  We expect exact recovery to occur for much
  smaller values of $\Delta_{\op{sep}}$ in the clustered case than it
  does in the uniform case since it doesn't account for the
  correlation decay.  In contrast, we expect $\Delta_{\op{corr}}$ to
  have similar thresholds for exact recovery in both cases, as it
  captures the inherent difficulty in the problem.}
To recover these sources we solve
the discretized $\ell_1$-norm minimization problem described in
\Cref{sec:discretization}  using CVX~\cite{cvx}. The measurement
matrix is computed by solving the differential equation on a grid of
$10^3$ points in the parameter space.

\begin{figure}[t]
  \centering
  \begin{tabular}{>{\centering\arraybackslash}m{0.07\linewidth} >{\centering\arraybackslash}m{0.4\linewidth} >{\centering\arraybackslash}m{0.4\linewidth}}
    & $\Delta_{\op{sep}}$ & $\Delta_{\op{corr}}$\\
    Uniform
    & \includegraphics[width=\linewidth]{./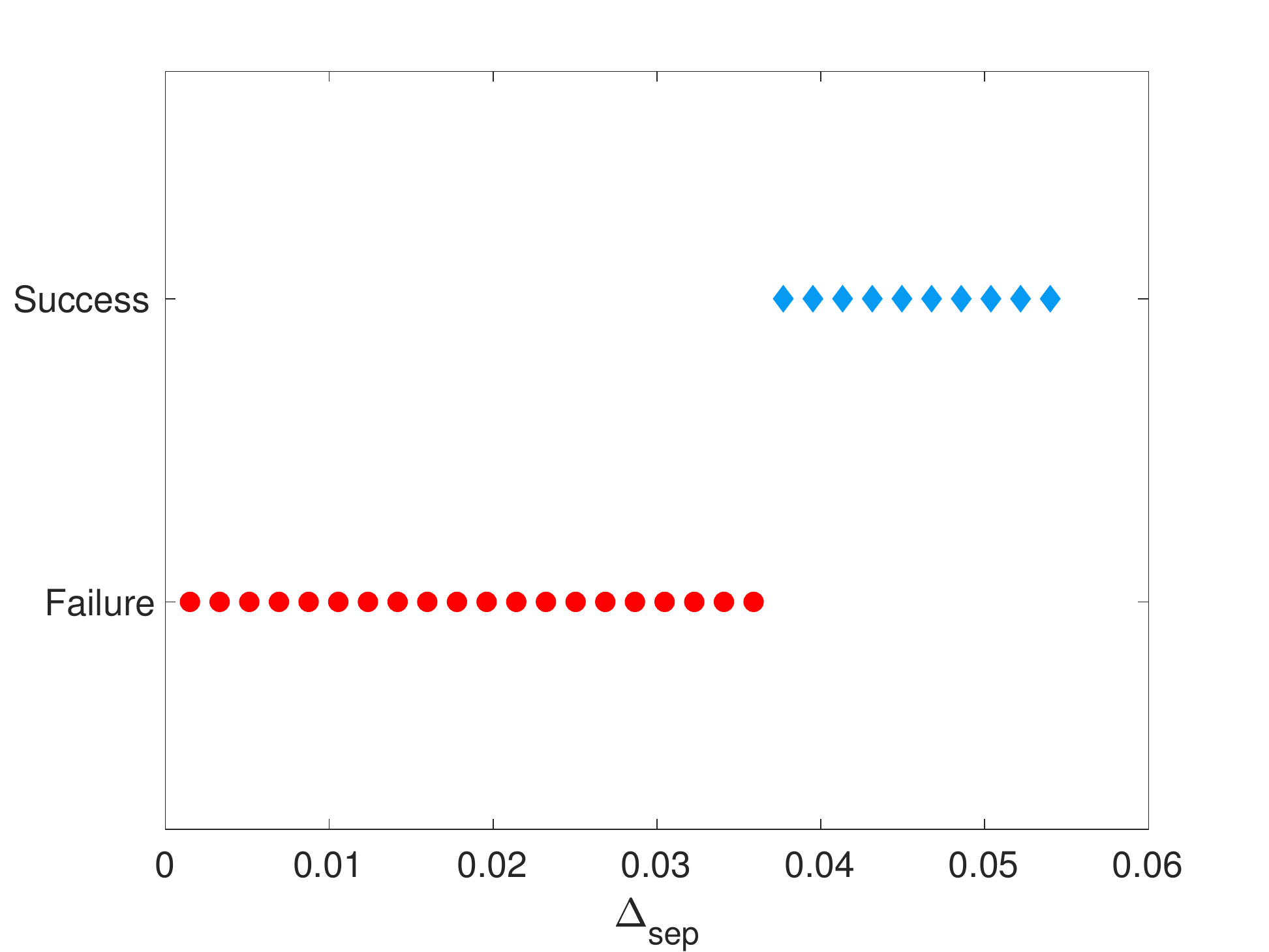}&
    \includegraphics[width=\linewidth]{./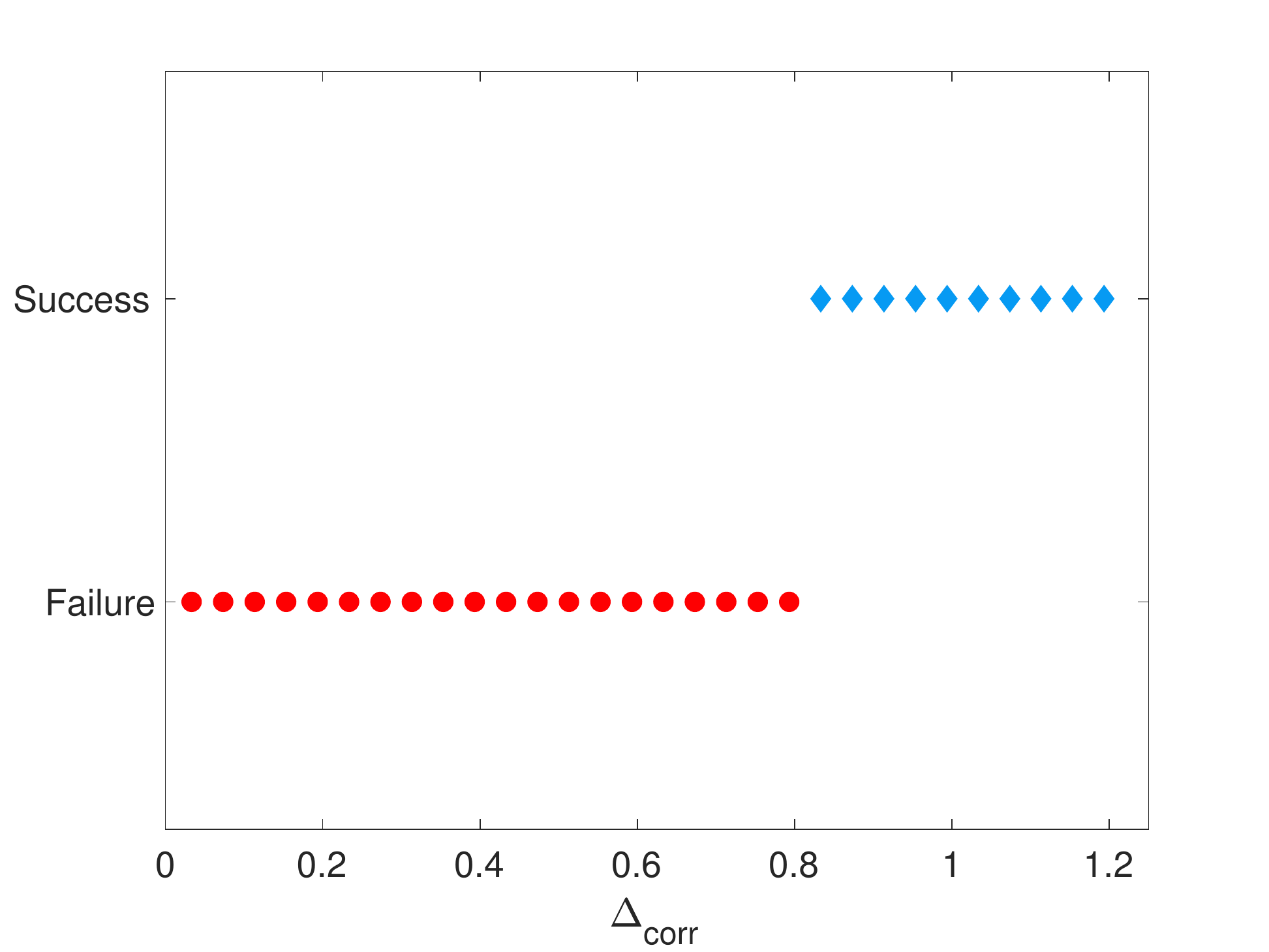}\\
    Clustered
    & \includegraphics[width=\linewidth]{./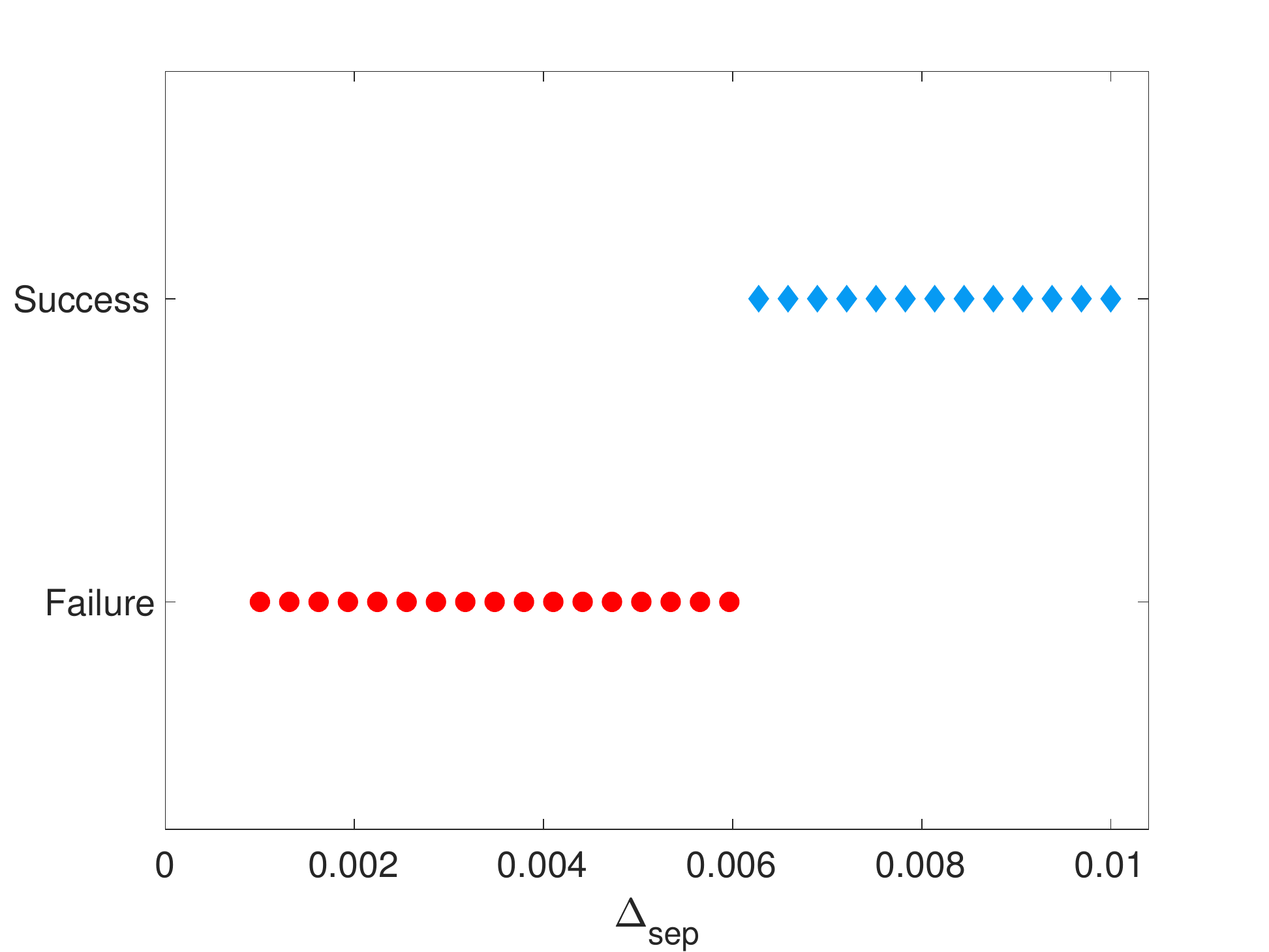} &
    \includegraphics[width=\linewidth]{./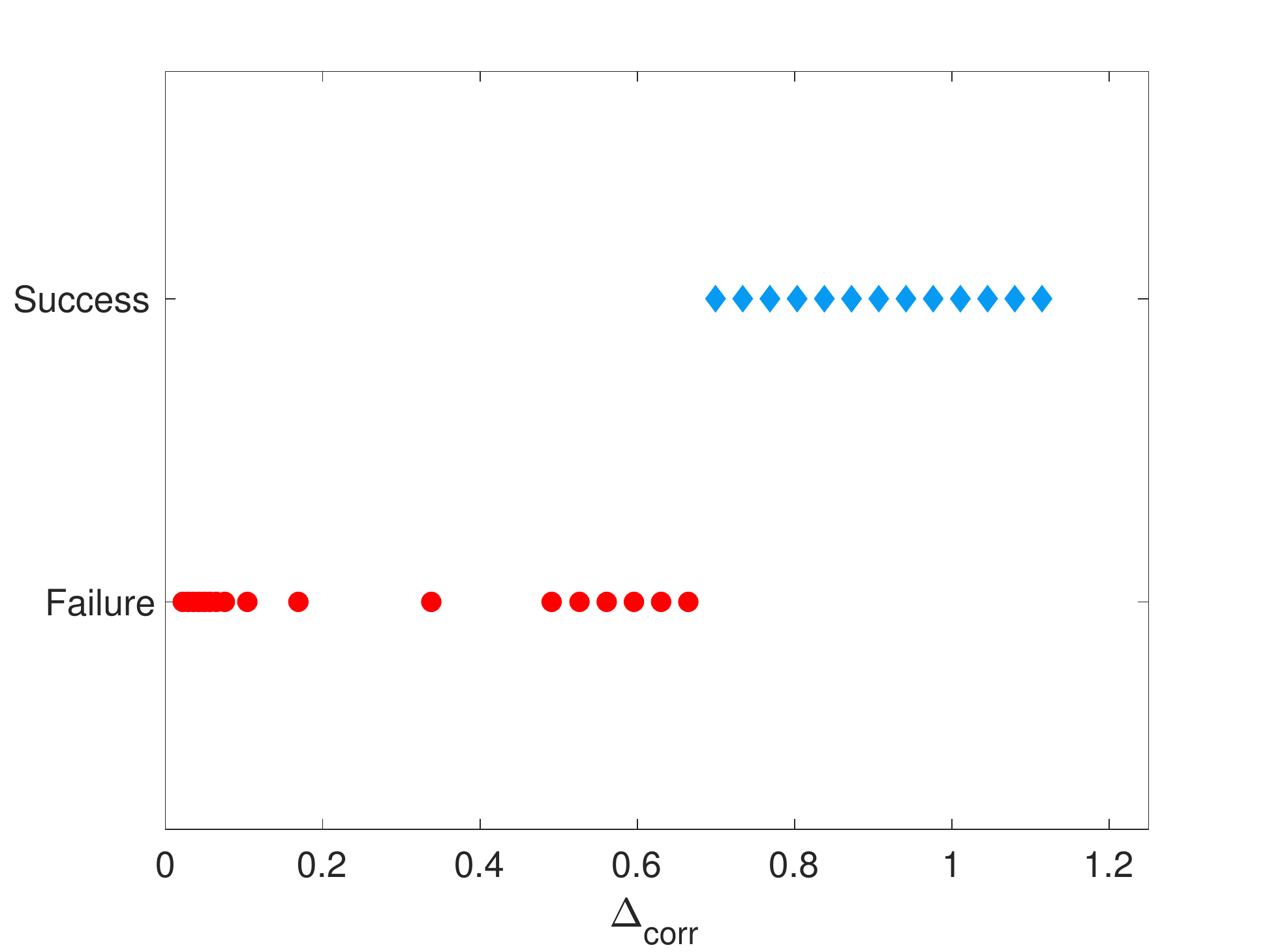}
  \end{tabular}
  \caption{
 The figure shows the performance of heat-source localization based on
 convex programming for the experiment described in
 \Cref{sec:numexact}. The upper row shows the result for sources with
 uniformly spaced supports, whereas the bottom row shows results for
 sources with clustered supports (see \Cref{fig:num_corrs}). The left
 column plots the results with respect to the minimum separation
 $\Delta_{\op{sep}}$. The right column plots the results in terms of
 the correlation-aware separation $\Delta_{\op{corr}}$. Recovery
 succeeds if the \bdb{relative} $\ell_2$-norm recovery error is smaller than $3 \cdot
 10^{-5}$. All failures have an \bdb{absolute} error above $9 \cdot 10^{-3}$, and
 most have errors above 1. For comparison the $\ell_2$ norm of the
 true signal equals $\sqrt{7}$ ($\approx 2.65$) in all cases.} 
  \label{fig:cvxnonuniform}
\end{figure}

\Cref{fig:cvxnonuniform} shows the results. For both types of support, we observe that exact recovery occurs as long as the minimum separation $\Delta_{\op{sep}}$ between sources is larger than a certain value. However, this value is very different for the two types of support. It is much smaller for the clustered support. Intuitively, this is due to the fact that the clustered sources are mostly located in the region where the conductivity is lower, and consequently the correlation decay is much faster. Our theory suggests that exact recovery will occur at smaller separations for such configurations, which is what we observe. Quantifying separation using $\Delta_{\op{corr}}$ accounts for the variation in correlation decay, resulting in a similar phase transition for both types of support. This is consistent with the theory for SNL problems with nonuniform correlation decay developed in \Cref{sec:exactvarying}.

\subsection{Estimation of Brain Activity via Electroencephalography}
\label{sec:eeg}

In this section we consider the SNL problem of brain-activity localization from electroencephalography (EEG) data. Areas of focalized brain activity are usually known as \emph{sources}. In EEG, as well as in magnetoencephalography, sources are usually modeled using electric dipoles, represented mathematically as point sources or Dirac measures at the corresponding locations~\cite{baillet2001electromagnetic}. EEG measurements are samples of the electric potential on the surface of the head obtained using an array of electrodes. The mapping from the source positions and amplitudes to the EEG data is governed by Poisson's equation. For a fixed model of the head geometry-- obtained for example from 3D images acquired using magnetic-resonance imaging-- the mapping can be computed by solving the differential equation numerically. By linearity, this corresponds to an SNL model where $\theta$ represents the position of a dipole and $\phi_t\brac{\theta}$ is the value of the corresponding electrical potential at a location $t$ of the scalp.

To simulate realistic EEG data with associated ground-truth source positions, we use Brainstorm~\cite{tadel2011brainstorm}, an open-source software for the analysis of EEG and other electrophysiological recordings. We use template ICBM152, which is a nonlinear average of 152 3D magnetic-resonance image scans~\cite{fonov2011unbiased}, to model the geometry of the brain and head. The sources are modeled as electrical dipoles situated on the cortical surface, discretized as a tesselated grid with 15,000 points. The orientation of the dipoles is assumed to be perpendicular to the cortex. The data-acquisition sensor array has 256 channels (HydroCel GSN, EGI). The data corresponding to each point on the grid are simulated numerically using the open-source software OpenMEEG~\cite{gramfort2010openmeeg}. The computation assumes realistic electrical properties for the brain, skull and scalp. As a result, we obtain a $256 \times 15,000$ measurement operator that can be used to generate data corresponding to different combinations of sources. Figure~\ref{fig:eegatoms} shows three examples. 

The correlation structure of the EEG linear measurement operator is depicted in Figure~\ref{fig:eegcorr}. The complex geometric structure of the folds in cortex induce an intricate correlation structure close to the source location. However, the correlation presents a clear correlation decay; points that are far enough from the source have low correlation values. Our theoretical results consequently suggest that $\ell_1$-norm minimization should succeed in recovering superpositions of sources that are sufficiently separated. This is supported by the fact that for such superpositions, we can build valid dual certificates as illustrated in Figure~\ref{fig:eegcert}.

\begin{figure}[t]
 \begin{tabular}{ >{\centering\arraybackslash}m{0.48\linewidth} >{\centering\arraybackslash}m{0.48\linewidth}}
 Best case & Worst case \\
\includegraphics[width=\linewidth]{./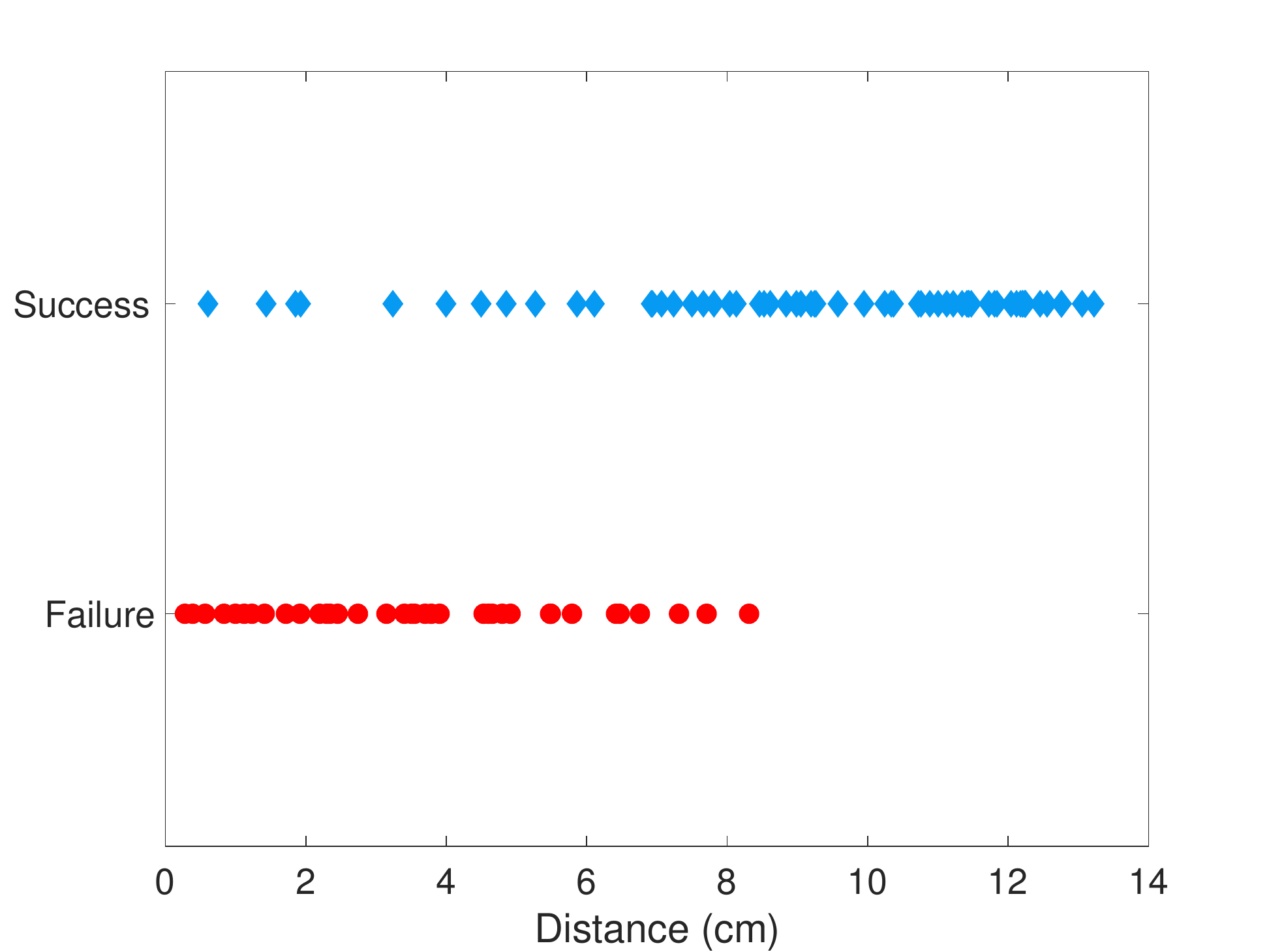} &
\includegraphics[width=\linewidth]{./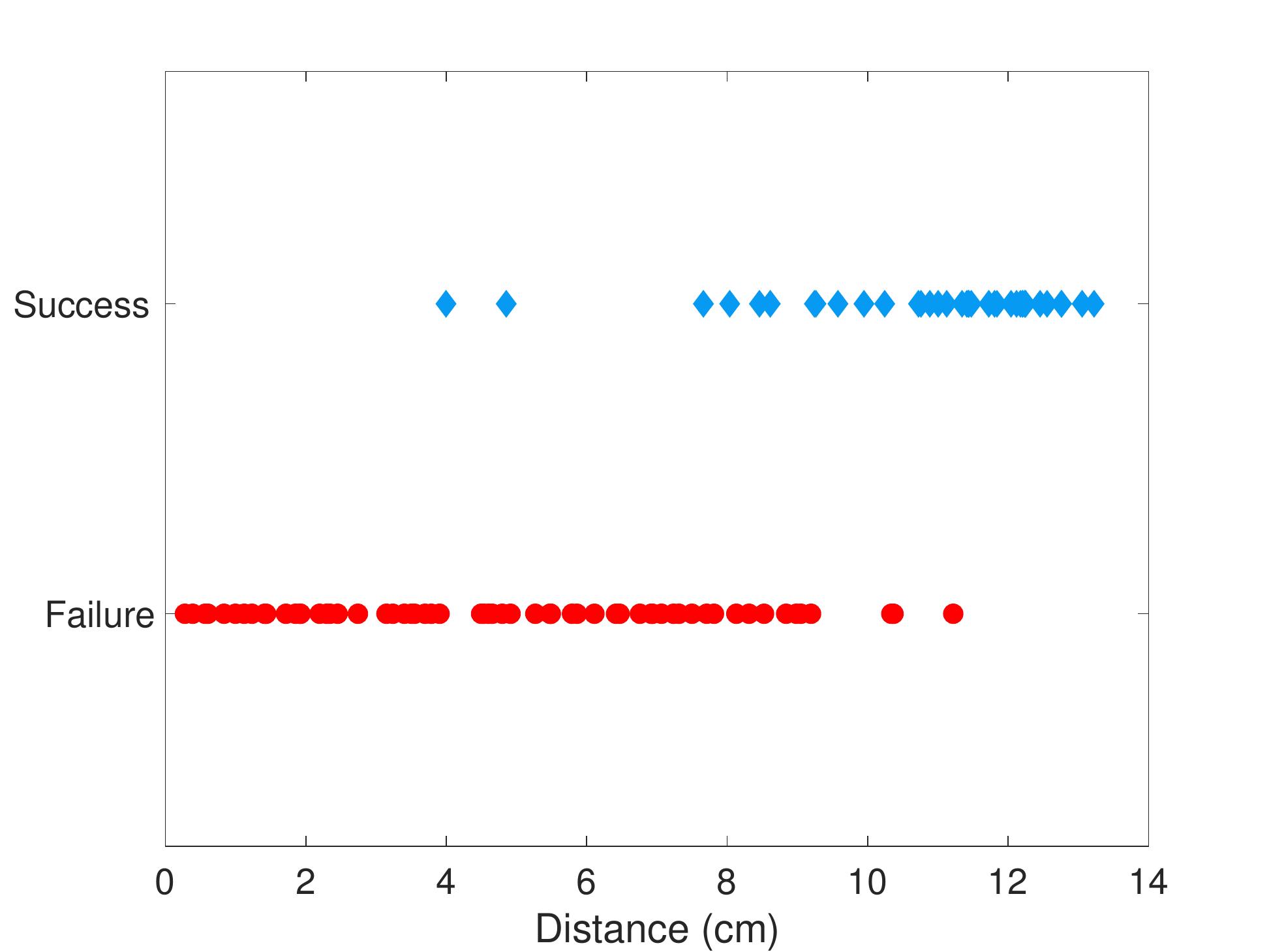}
\end{tabular}
\caption{Result of the brain-source localization experiment described
  in Section~\ref{sec:eeg}. The horizontal axis indicates the distance
  between the sources. In the left image success is declared if the
  three sources are accurately recovered for \emph{any} pattern of
  positive and negative amplitudes. In the right image success is
  declared if the sources are recovered for \emph{all} patterns of
  positive and negative amplitudes. Recovery succeeds if the
  \bdb{relative} $\ell_2$-norm recovery error is smaller than $10^{-4}$. All failures
  have an \bdb{absolute} error above $1.18 $. For comparison the $\ell_2$ norm of the
  true signal equals $\sqrt{3}$ ($\approx 1.73$) in all cases. }
\label{fig:eegresults}
\end{figure}

In order to test our hypothesis, we randomly choose superpositions of three sources located at approximately the same distance from each other, for a range of distances. For each superposition we generate 8 different measurement vectors by assigning every possible combination of positive and negative unit-norm amplitudes to each source. We then estimate the source locations by solving an $\ell_1$-norm minimization problem with equality constraints using CVX~\cite{cvx}. Figure~\ref{fig:eegresults} shows the results. When the separation is sufficiently large, exact recovery indeed occurs for all possible patterns of positive and negative amplitudes, as expected from our theoretical analysis. As the separation decreases, recovery fails for some of the patterns, as shown in the image on the right.  

\section{Conclusion and Future Work}

In this work, we establish that deterministic SNL problems can be solved via a tractable convex optimization program, as long as the parameters have a minimum separation in the parameter space with respect to the correlation structure of the measurement operator. As mentioned in Sections~\ref{sec:robustness}, our results can be used to establish some robustness guarantees at low-noise levels in terms of support recovery. Deriving more precise stability analysis in the spirit of~\cite{li2018approximate} is an interesting research direction. Another interesting open problem is characterizing the performance of reweighting techniques, which are commonly applied in practice to enhance solutions obtained from $\ell_1$-norm minimization~\cite{candes2008enhancing,mcmrf}. 

A drawback of the sparse-recovery framework discussed in this paper is computational complexity: for SNL problems in two or three dimensions it is very computationally expensive to solve $\ell_1$-norm minimization problems, even if we discretize the domain. In the last few years, algorithms that minimize nonconvex cost functions directly via gradient descent have been shown to provable succeed in solving underdetermined inverse problems involving randomized measurements~\cite{candes2015wirtinger,ma2017implicit}. As illustrated in Figure~\ref{fig:varproj}, nonlinear least-squares cost functions associated to deterministic SNL problems often have non-optimal local minima. An intriguing question is how to design cost functions for deterministic SNL problems that can be tackled directly by descent methods. 

Another computationally efficient alternative is to perform recovery using machine learning. Recent works suggest calibrating a feedforward network to output the model parameters using a training set of examples, with applications in point-source deconvolution~\cite{boyd2018deeploco} and super-resolution of line spectra~\cite{izacard2018learning}. Understanding under what conditions such techniques can be expected to yield accurate estimates is a challenging question for future research. 

\subsection*{Acknowledgements}
B.B.~is supported by the MacCracken Fellowship, and the Isaac Barkey and Ernesto Yhap Fellowship. C.P. is supported by NIH award NEI-R01-EY025673. C.F.~is supported by NSF award DMS-1616340. 
\begin{small}
\bibliographystyle{abbrv}
\bibliography{refs}

\begin{thebibliography}{100}

\bibitem{baillet2001electromagnetic}
S.~Baillet, J.~C. Mosher, and R.~M. Leahy.
\newblock Electromagnetic brain mapping.
\newblock {\em IEEE Signal processing magazine}, 18(6):14--30, 2001.

\bibitem{bendory2016robust}
T.~Bendory, S.~Dekel, and A.~Feuer.
\newblock Robust recovery of stream of pulses using convex optimization.
\newblock {\em Journal of Mathematical Analysis and Applications},
  442(2):511--536, 2016.

\bibitem{bernstein2017deconvolution}
B.~Bernstein and C.~Fernandez-Granda.
\newblock Deconvolution of point sources: a sampling theorem and robustness
  guarantees.
\newblock {\em arXiv preprint arXiv:1707.00808}, 2017.

\bibitem{bertin2015compressive}
N.~Bertin, L.~Daudet, V.~Emiya, and R.~Gribonval.
\newblock Compressive sensing in acoustic imaging.
\newblock In {\em Compressed Sensing and its Applications}, pages 169--192.
  Springer, 2015.

\bibitem{palm}
E.~Betzig, G.~H. Patterson, R.~Sougrat, O.~W. Lindwasser, S.~Olenych, J.~S.
  Bonifacino, M.~W. Davidson, J.~Lippincott-Schwartz, and H.~F. Hess.
\newblock Imaging intracellular fluorescent proteins at nanometer resolution.
\newblock {\em Science}, 313(5793):1642--1645, 2006.

\bibitem{bickel2009simultaneous}
P.~J. Bickel, Y.~Ritov, and A.~B. Tsybakov.
\newblock {Simultaneous analysis of lasso and {\{}D{\}}antzig selector}.
\newblock {\em The Annals of Statistics}, pages 1705--1732, 2009.

\bibitem{bloch1946nuclear}
F.~Bloch.
\newblock {Nuclear induction}.
\newblock {\em Physical review}, 70(7-8):460, 1946.

\bibitem{borcea2015resolution}
L.~Borcea and I.~Kocyigit.
\newblock Resolution analysis of imaging with $\backslash$ell\_1 optimization.
\newblock {\em SIAM Journal on Imaging Sciences}, 8(4):3015--3050, 2015.

\bibitem{boyd2018deeploco}
N.~Boyd, E.~Jonas, H.~P. Babcock, and B.~Recht.
\newblock Deeploco: Fast 3d localization microscopy using neural networks.
\newblock {\em BioRxiv}, page 267096, 2018.

\bibitem{boyd2017alternating}
N.~Boyd, G.~Schiebinger, and B.~Recht.
\newblock The alternating descent conditional gradient method for sparse
  inverse problems.
\newblock {\em SIAM Journal on Optimization}, 27(2):616--639, 2017.

\bibitem{candes2012exact}
E.~Candes and B.~Recht.
\newblock Exact matrix completion via convex optimization.
\newblock {\em Communications of the ACM}, 55(6):111--119, 2012.

\bibitem{candes2008restricted}
E.~J. Candes.
\newblock {The restricted isometry property and its implications for compressed
  sensing}.
\newblock {\em Comptes Rendus Mathematique}, 346(9-10):589--592, 2008.

\bibitem{candes2015phase}
E.~J. Candes, Y.~C. Eldar, T.~Strohmer, and V.~Voroninski.
\newblock Phase retrieval via matrix completion.
\newblock {\em SIAM review}, 57(2):225--251, 2015.

\bibitem{superres}
E.~J. Cand\`{e}s and C.~Fernandez-Granda.
\newblock Towards a mathematical theory of super-resolution.
\newblock {\em Communications on Pure and Applied Mathematics}, 67(6):906--956.

\bibitem{superres_noisy}
E.~J. {Cand\`{e}s} and C.~Fernandez-Granda.
\newblock Super-resolution from noisy data.
\newblock {\em Journal of Fourier Analysis and Applications}, 19(6):1229--1254,
  2013.

\bibitem{candes2015wirtinger}
E.~J. Candes, X.~Li, and M.~Soltanolkotabi.
\newblock Phase retrieval via wirtinger flow: Theory and algorithms.
\newblock {\em IEEE Transactions on Information Theory}, 61(4):1985--2007,
  2015.

\bibitem{candes2011probabilistic}
E.~J. Candes and Y.~Plan.
\newblock A probabilistic and {RIP}less theory of compressed sensing.
\newblock {\em Information Theory, IEEE Transactions on}, 57(11):7235--7254,
  2011.

\bibitem{candes2005decoding}
E.~J. Candes and T.~Tao.
\newblock {Decoding by linear programming}.
\newblock {\em IEEE transactions on information theory}, 51(12):4203--4215,
  2005.

\bibitem{Candes:2005cs}
E.~J. Cand\`{e}s and T.~Tao.
\newblock {Decoding by linear programming}.
\newblock {\em Information Theory, IEEE Transactions on}, 51(12):4203--4215,
  2005.

\bibitem{candes2008introduction}
E.~J. Cand{\`{e}}s and M.~B. Wakin.
\newblock {An introduction to compressive sampling}.
\newblock {\em IEEE signal processing magazine}, 25(2):21--30, 2008.

\bibitem{candes2008enhancing}
E.~J. Candes, M.~B. Wakin, and S.~P. Boyd.
\newblock Enhancing sparsity by reweighted $\ell_1$ minimization.
\newblock {\em Journal of Fourier analysis and applications}, 14(5-6):877--905,
  2008.

\bibitem{chandrasekaran2012convex}
V.~Chandrasekaran, B.~Recht, P.~A. Parrilo, and A.~S. Willsky.
\newblock The convex geometry of linear inverse problems.
\newblock {\em Foundations of Computational Mathematics}, 12(6):805--849, 2012.

\bibitem{claerbout}
J.~F. Claerbout and F.~Muir.
\newblock Robust modeling with erratic data.
\newblock {\em Geophysics}, 38(5):826--844, 1973.

\bibitem{deProny:tg}
B.~G.~R. de~Prony.
\newblock {Essai {\'e}xperimental et analytique: sur les lois de la
  dilatabilit{\'e} de fluides {\'e}lastique et sur celles de la force expansive
  de la vapeur de l'alkool, {\`a} diff{\'e}rentes temp{\'e}ratures}.
\newblock {\em Journal de l'{\'e}cole Polytechnique}, 1(22):24--76, 1795.

\bibitem{debeye1990lp}
H.~Debeye and P.~Van~Riel.
\newblock Lp-norm deconvolution.
\newblock {\em Geophysical Prospecting}, 38(4):381--403, 1990.

\bibitem{denoyelle2019sliding}
Q.~Denoyelle, V.~Duval, G.~Peyr{\'e}, and E.~Soubies.
\newblock The sliding frank-wolfe algorithm and its application to
  super-resolution microscopy.
\newblock {\em Inverse Problems}, 2019.

\bibitem{donoho2006compressed}
D.~L. Donoho.
\newblock Compressed sensing.
\newblock {\em IEEE Transactions on information theory}, 52(4):1289--1306,
  2006.

\bibitem{donoho2003optimally}
D.~L. Donoho and M.~Elad.
\newblock {Optimally sparse representation in general (nonorthogonal)
  dictionaries via l1 minimization}.
\newblock {\em Proceedings of the National Academy of Sciences},
  100(5):2197--2202, 2003.

\bibitem{Donoho2197}
D.~L. Donoho and M.~Elad.
\newblock Optimally sparse representation in general (nonorthogonal)
  dictionaries via l1 minimization.
\newblock {\em Proceedings of the National Academy of Sciences},
  100(5):2197--2202, 2003.

\bibitem{donoho2006stable}
D.~L. Donoho, M.~Elad, and V.~N. Temlyakov.
\newblock {Stable recovery of sparse overcomplete representations in the
  presence of noise}.
\newblock {\em IEEE Transactions on information theory}, 52(1):6--18, 2006.

\bibitem{donoho2001uncertainty}
D.~L. Donoho and X.~Huo.
\newblock Uncertainty principles and ideal atomic decomposition.
\newblock {\em Information Theory, IEEE Transactions on}, 47(7):2845--2862,
  2001.

\bibitem{dossal2010numerical}
C.~Dossal, G.~Peyr{\'e}, and J.~Fadili.
\newblock A numerical exploration of compressed sampling recovery.
\newblock {\em Linear Algebra and its Applications}, 432(7):1663--1679, 2010.

\bibitem{dragotti2007sampling}
P.~L. Dragotti, M.~Vetterli, and T.~Blu.
\newblock Sampling moments and reconstructing signals of finite rate of
  innovation: Shannon meets strang--fix.
\newblock {\em IEEE Transactions on Signal Processing}, 55(5):1741--1757, 2007.

\bibitem{peyreduval}
V.~Duval and G.~Peyr{\'e}.
\newblock Exact support recovery for sparse spikes deconvolution.
\newblock {\em Foundations of Computational Mathematics}, pages 1--41, 2015.

\bibitem{eftekhari2018sparse}
A.~Eftekhari, J.~Tanner, A.~Thompson, B.~Toader, and H.~Tyagi.
\newblock Sparse non-negative super-resolution-simplified and stabilised.
\newblock {\em arXiv preprint arXiv:1804.01490}, 2018.

\bibitem{eftekhari2019sparse}
A.~Eftekhari and A.~Thompson.
\newblock Sparse inverse problems over measures: Equivalence of the conditional
  gradient and exchange methods.
\newblock {\em SIAM Journal on Optimization}, 29(2):1329--1349, 2019.

\bibitem{support_detection}
C.~Fernandez-Granda.
\newblock Support detection in super-resolution.
\newblock In {\em Proceedings of the 10th International Conference on Sampling
  Theory and Applications}, pages 145--148, 2013.

\bibitem{superres_new}
C.~Fernandez-Granda.
\newblock Super-resolution of point sources via convex programming.
\newblock {\em Information and Inference}, 5(3):251--303, 2016.

\bibitem{fernandez2017demixing}
C.~Fernandez-Granda, G.~Tang, X.~Wang, and L.~Zheng.
\newblock Demixing sines and spikes: Robust spectral super-resolution in the
  presence of outliers.
\newblock {\em Information and Inference: A Journal of the IMA}, 7(1):105--168,
  2017.

\bibitem{folland2013real}
G.~Folland.
\newblock {\em Real Analysis: Modern Techniques and Their Applications}.
\newblock Pure and Applied Mathematics: A Wiley Series of Texts, Monographs and
  Tracts. Wiley, 2013.

\bibitem{fonov2011unbiased}
V.~Fonov, A.~C. Evans, K.~Botteron, C.~R. Almli, R.~C. McKinstry, D.~L.
  Collins, B.~D.~C. Group, et~al.
\newblock Unbiased average age-appropriate atlases for pediatric studies.
\newblock {\em Neuroimage}, 54(1):313--327, 2011.

\bibitem{foucart2013mathematical}
S.~Foucart and H.~Rauhut.
\newblock {\em {A mathematical introduction to compressive sensing}}, volume~1.
\newblock Birkh{\{}{\"{a}}{\}}user Basel, 2013.

\bibitem{frank1956algorithm}
M.~Frank and P.~Wolfe.
\newblock An algorithm for quadratic programming.
\newblock {\em Naval research logistics quarterly}, 3(1-2):95--110, 1956.

\bibitem{golub1973differentiation}
G.~H. Golub and V.~Pereyra.
\newblock The differentiation of pseudo-inverses and nonlinear least squares
  problems whose variables separate.
\newblock {\em SIAM Journal on numerical analysis}, 10(2):413--432, 1973.

\bibitem{golub2003separable}
G.~H. Golub and V.~Pereyra.
\newblock Separable nonlinear least squares: the variable projection method and
  its applications.
\newblock {\em Inverse problems}, 19(2):R1, 2003.

\bibitem{gramfort2010openmeeg}
A.~Gramfort, T.~Papadopoulo, E.~Olivi, and M.~Clerc.
\newblock Openmeeg: opensource software for quasistatic bioelectromagnetics.
\newblock {\em Biomedical engineering online}, 9(1):45, 2010.

\bibitem{cvx}
M.~Grant and S.~Boyd.
\newblock {CVX}: Matlab software for disciplined convex programming, version
  1.21.
\newblock \url{http://cvxr.com/cvx}, Apr. 2011.

\bibitem{gribonval2003sparse}
R.~Gribonval and M.~Nielsen.
\newblock {Sparse representations in unions of bases}.
\newblock {\em IEEE transactions on Information theory}, 49(12):3320--3325,
  2003.

\bibitem{gunn2002positron}
R.~N. Gunn, S.~R. Gunn, F.~E. Turkheimer, J.~A. Aston, and V.~J. Cunningham.
\newblock Positron emission tomography compartmental models: a basis pursuit
  strategy for kinetic modeling.
\newblock {\em Journal of Cerebral Blood Flow \& Metabolism},
  22(12):1425--1439, 2002.

\bibitem{windows_harris}
F.~Harris.
\newblock On the use of windows for harmonic analysis with the discrete
  {F}ourier transform.
\newblock {\em Proceedings of the IEEE}, 66(1):51 -- 83, 1978.

\bibitem{heins2014locally}
P.~Heins, M.~Moeller, and M.~Burger.
\newblock Locally sparse reconstruction using the $\ell_{1,\infty}$-norm.
\newblock {\em arXiv preprint arXiv:1405.5908}, 2014.

\bibitem{fpalm}
S.~T. Hess, T.~P. Girirajan, and M.~D. Mason.
\newblock Ultra-high resolution imaging by fluorescence photoactivation
  localization microscopy.
\newblock {\em Biophysical journal}, 91(11):4258, 2006.

\bibitem{hettich1993semi}
R.~Hettich and K.~O. Kortanek.
\newblock Semi-infinite programming: theory, methods, and applications.
\newblock {\em SIAM review}, 35(3):380--429, 1993.

\bibitem{izacard2018learning}
G.~Izacard, B.~Bernstein, and C.~Fernandez-Granda.
\newblock A learning-based framework for line-spectra super-resolution.
\newblock In {\em IEEE International Conference on Acoustics, Speech and Signal
  Processing, 2018}, 2018.

\bibitem{levy}
S.~Levy and P.~K. Fullagar.
\newblock Reconstruction of a sparse spike train from a portion of its spectrum
  and application to high-resolution deconvolution.
\newblock {\em Geophysics}, 46(9):1235--1243, 1981.

\bibitem{li2018approximate}
Q.~Li and G.~Tang.
\newblock Approximate support recovery of atomic line spectral estimation: A
  tale of resolution and precision.
\newblock {\em Applied and Computational Harmonic Analysis}, 2018.

\bibitem{li2014heat}
Y.~Li, S.~Osher, and R.~Tsai.
\newblock Heat source identification based on constrained minimization.
\newblock {\em Inverse Problems and Imaging}, 8(1):199--221, 2014.

\bibitem{lopez2007semi}
M.~L{\'o}pez and G.~Still.
\newblock Semi-infinite programming.
\newblock {\em European Journal of Operational Research}, 180(2):491--518,
  2007.

\bibitem{ma2017implicit}
C.~Ma, K.~Wang, Y.~Chi, and Y.~Chen.
\newblock Implicit regularization in nonconvex statistical estimation: Gradient
  descent converges linearly for phase retrieval, matrix completion and blind
  deconvolution.
\newblock {\em arXiv preprint arXiv:1711.10467}, 2017.

\bibitem{ma2013magnetic}
D.~Ma, V.~Gulani, N.~Seiberlich, K.~Liu, J.~L. Sunshine, J.~L. Duerk, and M.~A.
  Griswold.
\newblock Magnetic resonance fingerprinting.
\newblock {\em Nature}, 495(7440):187, 2013.

\bibitem{Malioutov:2005jw}
D.~Malioutov, M.~Cetin, and A.~S. Willsky.
\newblock {A sparse signal reconstruction perspective for source localization
  with sensor arrays}.
\newblock {\em IEEE Trans. Signal Process.}, 53(8):3010--3022, Aug. 2005.

\bibitem{mamonov2013point}
A.~V. Mamonov and Y.~R. Tsai.
\newblock Point source identification in nonlinear
  advection--diffusion--reaction systems.
\newblock {\em Inverse Problems}, 29(3):035009, 2013.

\bibitem{McGivney2017}
D.~McGivney, A.~Deshmane, Y.~Jiang, D.~Ma, C.~Badve, A.~Sloan, V.~Gulani, and
  M.~Griswold.
\newblock Bayesian estimation of multicomponent relaxation parameters in
  magnetic resonance fingerprinting.
\newblock {\em Magnetic Resonance in Medicine}, 2017.

\bibitem{michel2004eeg}
C.~M. Michel, M.~M. Murray, G.~Lantz, S.~Gonzalez, L.~Spinelli, and R.~G.
  de~Peralta.
\newblock Eeg source imaging.
\newblock {\em Clinical neurophysiology}, 115(10):2195--2222, 2004.

\bibitem{moitra_superres}
A.~Moitra.
\newblock Super-resolution, extremal functions and the condition number of
  {V}andermonde matrices.
\newblock In {\em Proceedings of the 47th Annual ACM Symposium on Theory of
  Computing (STOC)}, 2015.

\bibitem{murray2017universal}
J.~Murray-Bruce and P.~L. Dragotti.
\newblock A universal sampling framework for solving physics-driven inverse
  source problems.
\newblock {\em arXiv preprint arXiv:1702.05019}, 2017.

\bibitem{needell2009cosamp}
D.~Needell and J.~A. Tropp.
\newblock {CoSaMP: Iterative signal recovery from incomplete and inaccurate
  samples}.
\newblock {\em Applied and Computational Harmonic Analysis}, 26(3):301--321,
  2009.

\bibitem{niedermeyer2005eeg}
E.~Niedermeyer and F.~L. da~Silva.
\newblock {\em Electroencephalography: basic principles, clinical applications,
  and related fields}.
\newblock Lippincott Williams \& Wilkins, 2005.

\bibitem{nishimura1996principles}
D.~G. Nishimura.
\newblock {\em Principles of magnetic resonance imaging}.
\newblock Stanford University, 1996.

\bibitem{nunez2006electric}
P.~L. Nunez and R.~Srinivasan.
\newblock {\em Electric fields of the brain: the neurophysics of {EEG}}.
\newblock Oxford University Press, USA, 2006.

\bibitem{pan2016towards}
H.~Pan, T.~Blu, and M.~Vetterli.
\newblock Towards generalized fri sampling with an application to source
  resolution in radioastronomy.
\newblock {\em IEEE Transactions on Signal Processing}, 65(4):821--835.

\bibitem{pieper2018inverse}
K.~Pieper, B.~Q. Tang, P.~Trautmann, and D.~Walter.
\newblock Inverse point source location with the {H}elmholtz equation on a
  bounded domain.
\newblock {\em arXiv preprint arXiv:1805.03310}, 2018.

\bibitem{poon2018dual}
C.~Poon, N.~Keriven, and G.~Peyr{\'e}.
\newblock A dual certificates analysis of compressive off-the-grid recovery.
\newblock {\em arXiv preprint arXiv:1802.08464}, 2018.

\bibitem{potter2010sparsity}
L.~C. Potter, E.~Ertin, J.~T. Parker, and M.~Cetin.
\newblock Sparsity and compressed sensing in radar imaging.
\newblock {\em Proceedings of the IEEE}, 98(6):1006--1020, 2010.

\bibitem{astronomy_puschmann}
K.~G. {Puschmann} and F.~{Kneer}.
\newblock {On super-resolution in astronomical imaging}.
\newblock {\em Astronomy and Astrophysics}, 436:373--378, 2005.

\bibitem{reader2007fully}
A.~J. Reader, J.~C. Matthews, F.~C. Sureau, C.~Comtat, R.~Trebossen, and
  I.~Buvat.
\newblock Fully 4d image reconstruction by estimation of an input function and
  spectral coefficients.
\newblock In {\em Nuclear Science Symposium Conference Record, 2007. NSS'07.
  IEEE}, volume~5, pages 3260--3267. IEEE, 2007.

\bibitem{reed1981functional}
M.~Reed and B.~Simon.
\newblock {\em I: Functional Analysis}.
\newblock Methods of Modern Mathematical Physics. Elsevier Science, 1981.

\bibitem{ricker1953form}
N.~Ricker.
\newblock The form and laws of propagation of seismic wavelets.
\newblock {\em Geophysics}, 18(1):10--40, 1953.

\bibitem{rudin1987real}
W.~Rudin.
\newblock {\em Real and Complex Analysis}.
\newblock Mathematics series. McGraw-Hill, 1987.

\bibitem{santosa}
F.~Santosa and W.~W. Symes.
\newblock Linear inversion of band-limited reflection seismograms.
\newblock {\em SIAM Journal on Scientific and Statistical Computing},
  7(4):1307--1330, 1986.

\bibitem{schiebinger2015superresolution}
G.~Schiebinger, E.~Robeva, and B.~Recht.
\newblock Superresolution without separation.
\newblock {\em Information and Inference}, 2017.

\bibitem{sheriff1995exploration}
R.~E. Sheriff and L.~P. Geldart.
\newblock {\em Exploration seismology}.
\newblock Cambridge university press, 1995.

\bibitem{silva2004evaluation}
C.~Silva, J.~Maltez, E.~Trindade, A.~Arriaga, and E.~Ducla-Soares.
\newblock Evaluation of l1 and l2 minimum norm performances on {EEG}
  localizations.
\newblock {\em Clinical neurophysiology}, 115(7):1657--1668, 2004.

\bibitem{slepian}
D.~{Slepian}.
\newblock {Prolate spheroidal wave functions, {F}ourier analysis, and
  uncertainty. V - The discrete case}.
\newblock {\em Bell System Technical Journal}, 57:1371--1430, 1978.

\bibitem{Stoica:2005wf}
P.~Stoica and R.~L. Moses.
\newblock {\em {Spectral analysis of signals}}.
\newblock Prentice Hall, Upper Saddle River, New Jersey, 1 edition, 2005.

\bibitem{tadel2011brainstorm}
F.~Tadel, S.~Baillet, J.~C. Mosher, D.~Pantazis, and R.~M. Leahy.
\newblock Brainstorm: a user-friendly application for meg/eeg analysis.
\newblock {\em Computational intelligence and neuroscience}, 2011:8, 2011.

\bibitem{tang2012offgrid}
G.~Tang, B.~Bhaskar, P.~Shah, and B.~Recht.
\newblock Compressed sensing off the grid.
\newblock {\em Information Theory, IEEE Transactions on}, 59(11):7465--7490,
  Nov 2013.

\bibitem{tang2013atomic}
G.~Tang and B.~Recht.
\newblock Atomic decomposition of mixtures of translation-invariant signals.
\newblock {\em IEEE CAMSAP}, 2013.

\bibitem{tang2014robust}
G.~Tang, P.~Shah, B.~N. Bhaskar, and B.~Recht.
\newblock Robust line spectral estimation.
\newblock In {\em Signals, Systems and Computers, 2014 48th Asilomar Conference
  on}, pages 301--305. IEEE, 2014.

\bibitem{mcmrf}
S.~Tang, C.~Fernandez-Granda, S.~Lannuzel, B.~Bernstein, R.~Lattanzi, M.~Cloos,
  F.~Knoll, and J.~Assl{\"a}nder.
\newblock Multicompartment magnetic resonance fingerprinting.
\newblock {\em Inverse problems}, 34(9):4005, 2018.

\bibitem{tang2011aliasing}
Z.~Tang, G.~Blacqui{\`e}re, and G.~Leus.
\newblock Aliasing-free wideband beamforming using sparse signal
  representation.
\newblock {\em IEEE Transactions on Signal Processing}, 59(7):3464--3469, 2011.

\bibitem{taylor1979deconvolution}
H.~L. Taylor, S.~C. Banks, and J.~F. McCoy.
\newblock Deconvolution with the l1 norm.
\newblock {\em Geophysics}, 44(1):39--52, 1979.

\bibitem{Tropp:2004gc}
J.~A. Tropp.
\newblock {Greed is good: algorithmic results for sparse approximation}.
\newblock {\em IEEE Trans. Inf. Thy.}, 50(10):2231--2242, 2004.

\bibitem{Tropp:2006jc}
J.~A. Tropp.
\newblock {Just relax: convex programming methods for identifying sparse
  signals in noise}.
\newblock {\em IEEE Trans. Inf. Thy.}, 52(3):1030--1051, 2006.

\bibitem{uriguen2013fri}
J.~A. Urig{\"u}en, T.~Blu, and P.~L. Dragotti.
\newblock Fri sampling with arbitrary kernels.
\newblock {\em IEEE Transactions on Signal Processing}, 61(21):5310--5323,
  2013.

\bibitem{vershynin2010introduction}
R.~Vershynin.
\newblock Introduction to the non-asymptotic analysis of random matrices.
\newblock {\em arXiv preprint arXiv:1011.3027}, 2010.

\bibitem{vetterli2002sampling}
M.~Vetterli, P.~Marziliano, and T.~Blu.
\newblock Sampling signals with finite rate of innovation.
\newblock {\em IEEE transactions on Signal Processing}, 50(6):1417--1428, 2002.

\bibitem{xu2007lp}
P.~Xu, Y.~Tian, H.~Chen, and D.~Yao.
\newblock Lp norm iterative sparse solution for {EEG} source localization.
\newblock {\em IEEE transactions on biomedical engineering}, 54(3):400--409,
  2007.

\bibitem{zhang1999matrix}
F.~Zhang, S.~Axler, and F.~Gehring.
\newblock {\em Matrix Theory: Basic Results and Techniques}.
\newblock Universitext (Berlin. Print). Springer, 1999.

\bibitem{zhao2011localization}
Y.~Zhao, A.~Dinstel, M.~R. Azimi-Sadjadi, and N.~Wachowski.
\newblock Localization of near-field sources in sonar data using the sparse
  representation framework.
\newblock In {\em IEEE OCEANS 2011}, pages 1--6, 2011.

\end{thebibliography}
\end{small}

\appendix
\renewcommand{\appendixpagename}{Appendix}
\appendixpage


Throughout the appendix, we assume there is some compact interval
$I\subset\RR$ containing the support $\Theta$ of the true measure $\mu$.
In problem \eqref{pr:min_TV},
the variable $\tilde{\mu}$ takes values in the set of finite signed Borel
measures supported on $I$.
\section{Proof of \Cref{thm:dualcert}}
\label{sec:dualcertproof}
The following proof is standard, but given here for completeness.
\begin{proof}[\unskip\nopunct]
  Let $\nu$ be feasible for problem~\eqref{pr:min_TV}
  and define $h=\nu-\mu$.
  By taking the Lebesgue decomposition of $h$ with respect to
  $|\mu|$ we can write
  \begin{equation}
    h = h_{\Theta}+h_{\Theta^c},
  \end{equation}
  where $h_{\Theta}$ is absolutely continuous with respect to $|\mu|$,
  and $h_{\Theta^c}$, $|\mu|$ are mutually singular. In other
  words, the support of $h_\Theta$ is contained in~$\Theta$, and
  $h_{\Theta^c}(\Theta)=0$. 
  This allows us to write
  \begin{equation}
    h_{\Theta} = \sum_{t_j \in T} b_j\delta_{t_j},
  \end{equation}
  for some $b\in\RR^{|\Theta|}$. Set $\xi :=\sign(b)$, where we
  arbitrarily choose $\xi_j=\pm1$ if $b_j=0$. 
  By assumption there exists a corresponding
  $Q$ interpolating $\xi$ on $\Theta$.  Since $\mu$ and $\nu$ are feasible
  for problem~\eqref{pr:min_TV} we have
  $\int \phi(\theta)\,dh(\theta)=0$.  This implies
  \begin{equation}
    \label{eq:nutvbound}
      0 = q^T\int\phi(\theta)\,dh(\theta)\\
      = \int Q(\theta)\,dh(\theta)\\
      = \normTV{h_{\Theta}} + \int Q(\theta)\,dh_{\Theta^c}(\theta).
  \end{equation}
  Applying the triangle inequality, we obtain
  \begin{align}
    \normTV{\nu} & =
    \normTV{\mu+h_{\Theta}}+\normTV{h_{\Theta^c}} &&\text{(Mutually Singular)}\\
    & \geq  \normTV{\mu} + \normTV{h_{\Theta^c}}-\normTV{h_\Theta}&&
    \text{(Triangle Inequality)}\\
    & = \normTV{\mu} + \normTV{h_{\Theta^c}}+\int
      Q(\theta)\,dh_{\Theta^c}(\theta)&& \text{(Equation
        \eqref{eq:nutvbound})}\\
    & \geq \normTV{\mu} && \text{($|Q(\theta)|\leq1$)},
  \end{align}
  where the last inequality is strict if $\normTV{h_{\Theta^c}}> 0$ since
  $|Q(\theta)|<1$ for $\theta\in \Theta^c$.
  This establishes that $\mu$ is optimal for problem~\eqref{pr:min_TV},
  and that any other optimal solution must be supported on $\Theta$.
  \Cref{eq:nutvbound} implies that any feasible solution
  supported on $\Theta$ must be equal to $\mu$ (since
  $\normTV{h_\Theta}=0$),
  completing the proof of uniqueness.
\end{proof}
\section{Proof of \Cref{thm:schurbound}}
\label{sec:schurboundproof}
\begin{proof}[\unskip\nopunct]
  For any matrix $A\in\RR^{n\times n}$ such that $\normInf{ A} < 1$ the Neumann
  series $\sum_{j=0}^\infty A^j$ converges to $(\II-A)^{-1}$, which implies that $\II-A$ is invertible \cite{reed1981functional}.
  By the
  triangle inequality and the submultiplicativity  of the $\infty$-norm, this gives
  \begin{align}
    \label{eq:inversebound}
    \|(\II-A)^{-1}\|_\infty \leq \sum_{j=0}^\infty \|A\|_\infty^j =
    \frac{1}{1-\|A\|_\infty}.
  \end{align}
  Setting $A=-P^{(1,1)}$ and applying $\|P^{(1,1)}\|_\infty=\epsilon^{(1,1)}<1$ proves
  $\II+P^{(1,1)}$ is invertible.
  Let $\cC$ be the Schur complement of
  $\II+P^{(1,1)}$ in \eqref{eq:interpmat} so that
  \begin{equation}
    \cC = \II+P^{(0,0)}-P^{(1,0)}(\II+P^{(1,1)})^{-1}P^{(0,1)}.
  \end{equation}
  By the triangle inequality and \eqref{eq:inversebound} applied with
  $A=-P^{(1,1)}$ we obtain
  \begin{equation}
    \|\II-\cC\|_\infty \leq \epsilon^{(0,0)} + \frac{\epsilon^{(1,0)}\epsilon^{(0,1)}}{1-\epsilon^{(1,1)}}=c<1,
  \end{equation}
  proving $\cC$ is invertible.  As both $\II+P^{(1,1)}$ and its Schur
  complement $\cC$ are invertible, the matrix in
  \eqref{eq:interpmat} is also invertible (see e.g. \cite{zhang1999matrix}), which establishes the first claim.

  By applying blockwise Gaussian elimination we solve
  \eqref{eq:interpmat} in terms of $\cC$ to obtain
  \begin{align}
    \alpha &= \cC^{-1}\xi\label{eq:alphascur}\\
    \beta &= -(\II+P^{(1,1)})^{-1}P^{(0,1)}\alpha\label{eq:betaschur}.
  \end{align}
  Applying \eqref{eq:inversebound} and noting that $\normInf{\xi}=1$
  we obtain the required bounds on $\normInf{\alpha}$ and $\normInf{\beta}$.
  Finally, 
  \begin{equation}
    (\II-\cC)\alpha = \alpha - \xi,
  \end{equation}
which implies $\normInf{\alpha-\xi}\leq c\normInf{\alpha}$ and completes
  the proof.
\end{proof}
\section{Proof of \Cref{thm:cluster}}
\label{sec:clusterproof}
\begin{proof}[\unskip\nopunct]
  Throughout we use the fact that $\Theta$ having separation $\Delta>0$ (\Cref{def:separation}) implies $D_i^+\leq
  D_j^-$ for $j>i$.  When $i \leq k-1$, \eqref{eq:closer}
  implies that
  $$\theta \leq
  \frac{\sigma_iD_{i+1}^-+\sigma_{i+1}D_i^+}{\sigma_i+\sigma_{i+1}},$$
  where the righthand side is the $\sigma$-weighted average of
  $D_i^+$ and $D_{i+1}^-$.  If $\theta \geq D_i^+$ this gives
  $$\frac{D_{i+1}^--\theta}{\sigma_{i+1}} \geq
  \frac{D_{i+1}^--D_i^+}{\sigma_i+\sigma_{i+1}}
  \geq \frac{\max(\sigma_i,\sigma_{i+1})\Delta}{\sigma_i+\sigma_{i+1}}
  \geq \frac{\Delta}{2}.$$
If $\theta<D_i^+$ then 
\begin{equation}
  \frac{D_{i+1}^--\theta}{\sigma_{i+1}} >
  \frac{D_{i+1}^--D_i^+}{\sigma_{i+1}} \geq
  \frac{D_{i+1}^--D_i^+}{\max(\sigma_i,\sigma_{i+1})} \geq \Delta > \frac{\Delta}{2},
\end{equation}
by the separation conditions, proving \eqref{eq:cluster1}.
If $i+1 < j \leq |\Theta|$
then $d(\theta_{i+1},\theta_j)\geq \Delta(j-(i+1))$, so that
\begin{equation}
  \frac{D_j^--\theta}{\sigma_j} \geq \frac{D_j^--D_{i+1}^+}{\sigma_j}
  \geq \frac{D_j^--D_{i+1}^+}{\max(\sigma_j,\sigma_{i+1})} \geq \Delta(j-(i+1)),
\end{equation}
by the separation conditions applied to $\theta_j$ and $\theta_{i+1}$, which implies \eqref{eq:cluster2}.
Finally, if $j < i$ then
\begin{equation}
  \frac{\theta- D_j^+}{\sigma_j} \geq \frac{D_i^--D_j^+}{\sigma_j}
  \geq \frac{D_i^--D_j^+}{\max(\sigma_j,\sigma_i)} \geq \Delta(i-j),
\end{equation}
by the separation conditions applied to $\theta_j$ and $\theta_i$, which establishes \eqref{eq:cluster3}.
\end{proof}
\section{Proof of \Cref{lem:quadineq}}
\label{sec:quadineqproof}
\begin{proof}[\unskip\nopunct]
Multiplying through by $1-x^2$ (which is positive by assumption)
in \eqref{eq:specialineq} we obtain
\begin{equation}
  -x^3b + (2a+c)x^2 +xb - c < 0.
\end{equation}
The above inequality is implied by the simpler quadratic inequality
\begin{equation}
  (2a+c)x^2 +xb - c < 0,
\end{equation}
where the omitted term $-x^3b$ is always negative.
Since the inequality is satisfied for $x=0$ we
obtain the condition
\begin{equation}
  0 < x < \frac{-b + \sqrt{b^2+4(2a+c)c}}{2(2a+c)}.
\end{equation}
Translating this to a statement on $\Delta$ we obtain
\begin{equation}
  \Delta > 2\log\brac{\frac{2(2a+c)}{-b + \sqrt{b^2+4(2a+c)c}}}.
\end{equation}
\end{proof}

\end{document}